 \newtheorem{thm}{Theorem}[section]
 \newtheorem*{thm*}{Theorem}
 \newtheorem{cor}[thm]{Corollary}
 \newtheorem{lem}[thm]{Lemma}
 \newtheorem{prop}[thm]{Proposition}
 \theoremstyle{definition}
 \newtheorem{defn}[thm]{Definition}
 \theoremstyle{remark}
 \newtheorem{rem}[thm]{Remark}
 \numberwithin{equation}{section}
\newcommand{\bra}[1]{\left(#1\right)}
\newcommand{\sqa}[1]{\left[#1\right]}
\newcommand{\norm}[1]{\left\lVert#1\right\rVert}
\begin{document}
\title[One-mode quatum Gaussian channels have Gaussian maximizers]{The one-mode quantum-limited Gaussian attenuator and amplifier have Gaussian maximizers}
\author{Giacomo De Palma}
\address{QMATH, Department of Mathematical Sciences, University of Copenhagen, Universitetsparken 5, 2100 Copenhagen, Denmark\\
NEST, Scuola Normale Superiore and Istituto Nanoscienze-CNR, I-56126 Pisa, Italy\\
INFN, Pisa, Italy}
\email{giacomo.depalma@math.ku.dk}
\author{Dario Trevisan}
\address{Universit\`a degli Studi di Pisa, I-56126 Pisa, Italy}
\email{dario.trevisan@unipi.it}
\author{Vittorio Giovannetti}
\address{NEST, Scuola Normale Superiore and Istituto Nanoscienze-CNR, I-56126 Pisa, Italy}
\email{vittorio.giovannetti@sns.it}

\subjclass{46B28; 46N50; 81P45; 81V80; 94A15}

\keywords{quantum Gaussian channels, Schatten norms, quantum Gaussian states, thinning, logarithmic Sobolev inequality}

\date{}

\begin{abstract}
We determine the $p\to q$ norms of the Gaussian one-mode quantum-limited attenuator and amplifier and prove that they are achieved by Gaussian states, extending to noncommutative probability the seminal theorem ``Gaussian kernels have only Gaussian maximizers'' (Lieb, Invent. Math. \textbf{102}, 179 (1990)).
The quantum-limited attenuator and amplifier are the building blocks of quantum Gaussian channels, which play a key role in quantum communication theory since they model in the quantum regime the attenuation and the noise affecting any electromagnetic signal.
Our result is crucial to prove the longstanding conjecture stating that Gaussian input states minimize the output entropy of one-mode phase-covariant quantum Gaussian channels for fixed input entropy.
Our proof technique is based on a new noncommutative logarithmic Sobolev inequality, and it can be used to determine the $p\to q$ norms of any quantum semigroup.
\end{abstract}
\maketitle

\section{Introduction}
Given $p,\,q > 1$, let us consider a real Gaussian integral kernel $G$ from $L^p(\mathbb{R}^m)$ to $L^q(\mathbb{R}^n)$:
\begin{equation}
(G\,f)(x) = \int_{\mathbb{R}^m} G(x,y)\,f(y)\,d^my\;,\qquad x\in\mathbb{R}^n\;,\qquad f\in L^p(\mathbb{R}^m)\;,
\end{equation}
where $G(x,y)$ is a real Gaussian function on $\mathbb{R}^{m+n}$, i.e., the exponential of a quadratic polynomial in $x$ and $y$ with real coefficients.
The $p\to q$ norm of $G$ is
\begin{equation}\label{eq:defpq}
\left\|G\right\|_{p\to q}:=\sup_{0<\|f\|_p<\infty}\frac{\left\|G\,f\right\|_q}{\left\|f\right\|_p}\;.
\end{equation}
In the seminal paper ``Gaussian kernels have only Gaussian maximizers'' \cite{lieb1990Gaussian}, E. H. Lieb proved that the determination of the supremum in \eqref{eq:defpq} can be restricted to real Gaussian functions.
If this supremum is finite, it is attained on a real Gaussian function $f$; if it is infinite, it is asymptotically attained by a suitable sequence of real Gaussian functions.
This result permits to determine the $p\to q$ norms of $G$ and has countless applications, such as straightforward proofs of the Brascamp--Lieb convolution inequality, the Hausdorff--Young--Titchmarsh inequality for Fourier integrals and Nelson's hypercontractivity theorem \cite{lieb1990Gaussian}, a proof of the Entropy Power Inequality and of the Brunn--Minkowski Inequality (see e.g. \cite[section 17.8]{cover2006elements}), and  Lieb's solution \cite{lieb1978proof,lieb2014proof} of Wehrl's conjecture \cite{wehrl1979relation,anderson1993information}, stating that coherent states minimize the Shannon differential entropy of the Husimi $Q$ representation.

In noncommutative probability, functions on $\mathbb{R}^n$ with $n$ even are replaced by operators acting on the Hilbert space $\mathcal{H}$ of an $n/2$-mode Gaussian quantum system, i.e. the irreducible representation of the canonical commutation relations of the ladder operators (see e.g. \cite[Chapter 12]{holevo2013quantum} or \cite{serafini2017quantum})
\begin{equation}\label{eq:CCR}
\left[\hat{a}_i,\;\hat{a}^\dag_j\right]=\delta_{ij}\,\hat{\mathbb{I}}\;,\qquad\left[\hat{a}_i,\;\hat{a}_j\right]=0\;,\qquad i,\,j=1,\ldots,\,\frac{n}{2}\;.
\end{equation}
The $L^p$ norm of a function $f:\mathbb{R}^n\to\mathbb{C}$ is replaced by the Schatten $p$ norm \cite{schatten1960norm,holevo2006multiplicativity} of a linear operator $\hat{X}:\mathcal{H}\to\mathcal{H}$, defined as the $l^p$ norm of its singular values:
\begin{equation}\label{eq:defpn}
\left\|\hat{X}\right\|_p:=\left(\mathrm{Tr}\left(\hat{X}^\dag\hat{X}\right)^\frac{p}{2}\right)^\frac{1}{p}\;.
\end{equation}
Integral kernels are replaced by linear maps acting on the operators on $\mathcal{H}$.
For any $p,\,q\ge1$, the $p\to q$ norm of any such map $\Phi$ is \cite{holevo2006multiplicativity}
\begin{equation}\label{eq:defCpq}
\left\|\Phi\right\|_{p\to q}:=\sup_{0<\left\|\hat{X}\right\|_p<\infty}\frac{\left\|\Phi\left(\hat{X}\right)\right\|_q}{\left\|\hat{X}\right\|_p}\;,
\end{equation}
and it can be either finite or infinite.

Quantum Gaussian channels \cite{holevo2013quantum,holevo2015gaussian} are the noncommutative counterpart of Gaussian integral kernels.
They play a key role in quantum communication theory since they model in the quantum regime the attenuation and the noise that unavoidably affect any electromagnetic communication through metal wires, optical fibers or free space (see e.g. \cite{caves1994quantum,braunstein2005quantum,weedbrook2012gaussian} and references therein).
Quantum Gaussian channels have been conjectured to have Gaussian maximizers since 2006 \cite{holevo2006multiplicativity}: ``In classical information theory the Gaussian channels admit Gaussian maximizers; moreover, there are corresponding analytic results for norms of integral operators with Gaussian kernels. The problem of whether or not there is an analogue of this property for bosonic Gaussian channels is another open question which deserves a separate discussion.''

We prove this longstanding conjecture for the two building blocks of one-mode phase-covariant quantum Gaussian channels: the one-mode Gaussian quantum-limited attenuator and amplifier.
We prove for these channels that for any $1< p < q$ the supremum in \eqref{eq:defCpq} is achieved by a quantum Gaussian operator (Theorems \ref{thm:pq} and \ref{thm:ampl}), i.e., an operator proportional to the exponential of a quadratic polynomial in the ladder operators \eqref{eq:CCR}.
Our result implies an upper bound to the $p\to q$ norms of any one-mode phase-covariant quantum Gaussian channel (Proposition \ref{prop:bound}), and we conjecture that this upper bound is actually optimal.

So far the conjecture ``quantum Gaussian channels have Gaussian maximizers'' has been proven only for $p=1$ \cite{holevo2015gaussian} and $p=q$ \cite{frank2017norms}.
The proof for $p=q$ follows from complex interpolation.
The proof for $p=1$ follows from the proof of the Gaussian majorization conjecture \cite{giovannetti2015majorization,mari2014quantum}, stating that for any phase-covariant quantum Gaussian channel the output generated by the vacuum input state majorizes the output generated by any other positive operator with unit trace.

We also prove that for any $1<q<p$ the $p\to q$ norm of both the quantum-limited attenuator and amplifier is infinite and it is asymptotically achieved by a sequence of Gaussian operators converging to the identity.
The same sequence asymptotically achieves the $p\to p$ norm for any $p>1$.
Hence, our results imply that the semigroups associated to the generators of the quantum-limited attenuator and amplifier are not hypercontractive.
However, with a suitable different definition of the norms, hypercontractivity holds for the semigroup associated to the generator of any one-mode phase-covariant quantum Gaussian channel admitting a stationary state \cite{carbone2008hypercontractivity}.

As in the classical case, the determination of the noncommutative $p\to q$ norms requires the maximization of a convex function over a convex set, that is highly nontrivial since the tools of convex analysis cannot be applied.
Our proof starts from a recent majorization result on one-mode quantum Gaussian channels \cite{de2015passive}, that reduces the problem to input operators diagonal in the Fock basis.
Our proof technique is completely new in the field of noncommutative probability.
Our main result is that Gaussian states achieve the $p\to q$ norms of the quantum-limited attenuator (Theorem \ref{thm:pq}).
The key point to prove Theorem \ref{thm:pq} is reducing the claim to a new noncommutative logarithmic Sobolev inequality (Theorem \ref{thm:logs}).
This inequality provides an upper bound on the derivative of the norms of the output of the attenuator with respect to the attenuation coefficient.
The reduction of Theorem \ref{thm:pq} to Theorem \ref{thm:logs} is inspired by a seminal paper by L. Gross \cite{gross1993logarithmic}, which exploits a logarithmic Sobolev inequality to determine the $p\to q$ norms of classical Gaussian integral kernels.
This is the first time that this technique is exploited in the noncommutative setting.
The difficulty is increased since the optimization of the $p\to q$ norms over quantum Gaussian operators cannot be performed analytically, and a closed formula for these norms cannot be provided.
The proof of Theorem \ref{thm:logs} exploits the Karush--Kuhn--Tucker conditions for constrained local optimizers \cite{borwein2013convex,kuhn1951}, and proceeds along the same lines as the proof of the isoperimetric inequality in \cite{de2016gaussian}.
We then determine the $p\to q$ norms of the quantum-limited amplifier (Theorem \ref{thm:ampl}) exploiting that its Hilbert-Schmidt dual is proportional to the quantum-limited attenuator.
This is the first time that this duality is exploited to prove entropic inequalities.

A fundamental application of our determination of the $p\to q$ norms of the one-mode quantum-limited amplifier is the proof \cite{de2016gaussiannew} of the constrained minimum output entropy conjecture \cite{guha2008entropy,qi2016capacities,giovannetti2010generalized}, which was open since 2008 and states that Gaussian input states minimize the output von Neumann entropy of one-mode phase-covariant quantum Gaussian channels for fixed input entropy.
Quantum states are positive operators with unit trace.
They are the noncommutative counterparts of probability distributions on $\mathbb{R}^n$.
The von Neumann entropy of a quantum state $\hat{\rho}$ is the Shannon entropy of the probability distribution associated to its eigenvalues, i.e.
\begin{equation}
S\left(\hat{\rho}\right)=-\mathrm{Tr}\left[\hat{\rho}\ln\hat{\rho}\right]\;,
\end{equation}
and plays a key role in quantum information and communication theory (see e.g. \cite{wilde2017quantum}) analogous to the role of the Shannon entropy in classical information and communication theory.
A Gaussian quantum state is a Gaussian operator that is also a quantum state.
The constrained minimum output entropy conjecture is crucial to determine both the triple trade-off region and the capacity region for broadcast communication of the Gaussian quantum-limited attenuator and amplifier \cite{guha2007classical,guha2007classicalproc,qi2016capacities}.
So far, the conjecture had been proven only for the Gaussian one-mode quantum-limited attenuator \cite{de2016gaussian} or for zero input entropy for any phase-covariant quantum Gaussian channel \cite{giovannetti2015solution,holevo2015gaussian}.
The constrained minimum output entropy of quantum Gaussian channels has been bounded by the quantum Entropy Power Inequality \cite{konig2014entropy,de2014generalization,de2015multimode,de2017gaussian}.
However, this bound is optimal only when the input state has the same entropy as the state of the environment \cite{de2014generalization}, hence it is not sufficient to prove the conjecture.
The proof of the conjecture for the one-mode quantum-limited Gaussian attenuator of \cite{de2016gaussian} is based on an isoperimetric inequality that is also proven through the Karush--Kuhn--Tucker conditions.
The extension of this isoperimetric inequality to any one-mode phase-covariant quantum Gaussian channel would imply the conjecture for these channels \cite{qi2017minimum}.
However, the Karush--Kuhn--Tucker conditions are sufficient to prove the isoperimetric inequality only for the quantum-limited attenuator, since for any other one-mode phase-covariant quantum Gaussian channel these conditions admit other solutions than quantum Gaussian states \cite{qi2017minimum}.
This problem has motivated the exploration of the $p\to q$ norms to prove the conjecture.
We give a sketch of the proof of the constrained minimum output entropy conjecture via the $p\to q$ norms of the amplifier in \autoref{secMOE}.
For a comprehensive presentation of all the proven or conjectured entropic inequalities for quantum Gaussian channels, we refer the reader to the review \cite{de2018gaussian}.

The restriction of the one-mode quantum-limited attenuator to input operators diagonal in the Fock basis is the linear map acting on discrete classical probability distributions on $\mathbb{N}$ known in the probability literature under the name of thinning \cite{de2015passive}.
The thinning has been introduced by R\'enyi \cite{renyi1956characterization} as a discrete analogue of the rescaling of a continuous real random variable.
The thinning has played this role in discrete versions of the central limit theorem \cite{harremoes2007thinning,yu2009monotonic,harremoes2010thinning}
and of the Entropy Power Inequality \cite{yu2009concavity,johnson2010monotonicity}.
Our result implies that for any $1<p<q$ the $l^p\to l^q$ norm of the thinning is achieved by some geometric probability distribution on $\mathbb{N}$ (Theorem \ref{thm:pqT}).
Moreover, for any $1<q<p$ the $l^p\to l^q$ norm of the thinning is infinite and is asymptotically achieved by a sequence of geometric probability distributions with the ratio converging to $1$.
The same sequence asymptotically achieves the $l^p\to l^p$ norm for any $p>1$.

The paper is structured as follows.
In \autoref{sec:setup} we introduce the one-mode quantum-limited attenuator and amplifier.
In \autoref{sec:logs} we prove the logarithmic Sobolev inequality for the quantum-limited attenuator, and in \autoref{sec:pq} we apply the result to determine the $p\to q$ norms of the quantum-limited attenuator.
In \autoref{sec:ampl} we determine the $p\to q$ norms of the quantum-limited amplifier, and in \autoref{sec:thermal} we determine the upper bound to the $p\to q$ norms of the thermal channels.
In \autoref{secMOE} we sketch the proof of the constrained minimum output entropy conjecture.
The relation of the $p\to q$ norms of the attenuator with the thinning is discussed in \autoref{secthinning}.
The conclusions are in \autoref{sec:concl}.

\section{Setup}\label{sec:setup}
\subsection{Gaussian quantum systems}
We consider the Hilbert space of one harmonic oscillator, or one mode of electromagnetic radiation, i.e. the irreducible representation of the canonical commutation relation (see \cite[Chapter 12]{holevo2013quantum} or \cite{serafini2017quantum} for a more complete presentation)
\begin{equation}
\left[\hat{a},\;\hat{a}^\dag\right]=\hat{\mathbb{I}}\;.
\end{equation}
The operator $\hat{a}$ is called ladder operator.
We define the Hamiltonian $\hat{N}=\hat{a}^\dag\hat{a}$, that counts the number of excitations, or photons.
The vector annihilated by $\hat{a}$ is the vacuum $|0\rangle$, from which the Fock states are built:
\begin{equation}\label{fock}
|n\rangle=\frac{\left(\hat{a}^\dag\right)^n}{\sqrt{n!}}|0\rangle\;,\quad\langle m|n\rangle=\delta_{mn}\;,\quad \hat{N}|n\rangle=n|n\rangle\;,\quad m,\,n\in\mathbb{N}\;.
\end{equation}
An operator diagonal in the Fock basis is called Fock-diagonal.

\subsection{Quantum Gaussian states}
An operator proportional to the exponential of a quadratic polynomial in $\hat{a}$ and $\hat{a}^\dag$ is a Gaussian operator.
If the operator is also positive and has unit trace, it is a quantum Gaussian state.
If the polynomial is proportional to the Hamiltonian $\hat{a}^\dag\hat{a}$, the Gaussian state is thermal, and corresponds to a geometric probability distribution for the energy:
\begin{equation}\label{eq:omegaz}
\hat{\omega}_z=\sum_{n=0}^\infty \left(1-z\right)z^n\;|n\rangle\langle n| = \left(1-\mathrm{e}^{-\beta}\right)\mathrm{e}^{-\beta \hat{a}^\dag\hat{a}}\;,
\end{equation}
where $\beta>0$ is the inverse temperature and $z = \mathrm{e}^{-\beta}$.
The average energy of $\hat{\omega}_z$ is
\begin{equation}
E(z):=\mathrm{Tr}\left[\hat{N}\,\hat{\omega}_z\right]=\frac{z}{1-z}\;,
\end{equation}
and its von Neumann entropy is
\begin{equation}\label{eq:Sz}
S(z)=-\ln\left(1-z\right)-\frac{z\ln z}{1-z}\;.
\end{equation}

\subsection{Gaussian quantum-limited attenuator}
The Gaussian quantum-limited attenuator $\mathcal{E}_\lambda$ of transmissivity $0\le\lambda\le 1$ mixes the input operator $\hat{X}$ with the vacuum state of an ancillary quantum system $B$ through a beamsplitter of transmissivity $\lambda$.
The beamsplitter is implemented by the unitary operator
\begin{equation}\label{defU}
\hat{U}_\lambda=\exp\left(\left(\hat{a}^\dag\hat{b}-\hat{a}\,\hat{b}^\dag\right)\arccos\sqrt{\lambda}\right)\;,
\end{equation}
that satisfies
\begin{equation}
\hat{U}_\lambda^\dag\;\hat{a}\;\hat{U}_\lambda=\sqrt{\lambda}\;\hat{a}+\sqrt{1-\lambda}\;\hat{b}\;,
\end{equation}
where $\hat{b}$ is the ladder operator of the ancilla system $B$ (see  \cite[section 1.4.2]{ferraro2005gaussian}), and
\begin{equation}
\mathcal{E}_\lambda\left(\hat{X}\right)=\mathrm{Tr}_B\left[\hat{U}_\lambda\left(\hat{X}\otimes |0\rangle_B\langle0|\right)\hat{U}_\lambda^\dag\right]\;.
\end{equation}

The quantum-limited attenuator preserves the set of Fock-diagonal operators \cite{ivan2011operator}.
If the input is a Fock-diagonal quantum state, i.e. it has definite photon number, $\mathcal{E}_\lambda$ lets each photon be transmitted with probability $\lambda$ and reflected or absorbed with probability $1-\lambda$, hence the name ``quantum-limited attenuator''.

From \cite[Lemma 13]{de2015passive}, the quantum-limited attenuators form a semigroup with generator
\begin{equation}\label{eq:defL}
\mathcal{L}\left(\hat{X}\right)=\hat{a}\,\hat{X}\,\hat{a}^\dag-\frac{1}{2}\left\{\hat{a}^\dag\hat{a},\;\hat{X}\right\}\;,
\end{equation}
i.e.\ $\mathcal{E}_{\mathrm{e}^{-t}}=\mathrm{e}^{t\mathcal{L}}$ for any $t\ge0$.

The quantum-limited attenuator sends thermal states into themselves, i.e.\ $\mathcal{E}_\lambda\left(\hat{\omega}_z\right)=\hat{\omega}_{z'}$, with
\begin{equation}\label{eq:zlambda}
z'=\frac{\lambda\,z}{1-\left(1-\lambda\right)z}\;.
\end{equation}

\subsection{Gaussian quantum-limited amplifier}
The Gaussian quantum-limited amplifier $\mathcal{A}_\kappa$ of amplification parameter $\kappa\ge1$ performs a two-mode squeezing \cite{barnett2002methods} on the input operator $\hat{X}$ and the vacuum state of an ancillary quantum system $B$.
The squeezing is implemented by the unitary operator
\begin{equation}\label{defUk}
\hat{U}_\kappa=\exp\left(\left(\hat{a}^\dag\hat{b}^\dag-\hat{a}\,\hat{b}\right)\mathrm{arccosh}\sqrt{\kappa}\right)\;,
\end{equation}
that satisfies
\begin{equation}
\hat{U}_\kappa^\dag\;\hat{a}\;\hat{U}_\kappa=\sqrt{\kappa}\;\hat{a}+\sqrt{\kappa-1}\;\hat{b}^\dag\;,
\end{equation}
where $\hat{b}$ is the ladder operator of the ancilla system $B$ (see \cite[section 1.4.4]{ferraro2005gaussian}), and
\begin{equation}
\mathcal{A}_\kappa\left(\hat{X}\right)=\mathrm{Tr}_B\left[\hat{U}_\kappa\left(\hat{X}\otimes |0\rangle_B\langle0|\right)\hat{U}_\kappa^\dag\right]\;.
\end{equation}
Also the quantum-limited amplifier preserves the set of Fock-diagonal operators \cite{ivan2011operator}.

\begin{prop}[{\cite[Theorem 9]{ivan2011operator}}]\label{prop:dual}
For any $\kappa\ge1$ the quantum-limited amplifier $\mathcal{A}_\kappa$ and the quantum-limited attenuator $\mathcal{E}_\frac{1}{\kappa}$ are mutually dual, i.e.\ for any trace-class operator $\hat{X}$ and any bounded operator $\hat{Y}$
\begin{align}\label{eq:dual}
\kappa\;\mathrm{Tr}\left[\hat{Y}\;\mathcal{A}_\kappa\left(\hat{X}\right)\right] &= \mathrm{Tr}\left[\mathcal{E}_\frac{1}{\kappa}\left(\hat{Y}\right)\;\hat{X}\right]\\
\mathrm{Tr}\left[\hat{Y}\;\mathcal{E}_\kappa\left(\hat{X}\right)\right] &= \kappa\;\mathrm{Tr}\left[\mathcal{A}_\kappa\left(\hat{Y}\right)\;\hat{X}\right]\;.
\end{align}
\end{prop}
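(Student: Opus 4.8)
The plan is to compute the Hilbert--Schmidt adjoint of $\mathcal{A}_\kappa$ explicitly from its dilation and to recognize it as $\kappa^{-1}\mathcal{E}_{1/\kappa}$; the second displayed identity then follows from the first by exchanging $\hat X$ and $\hat Y$ and using the cyclicity of the trace, so I would concentrate on $\kappa\,\mathrm{Tr}[\hat Y\,\mathcal{A}_\kappa(\hat X)]=\mathrm{Tr}[\mathcal{E}_{1/\kappa}(\hat Y)\,\hat X]$. First I would unfold $\mathcal{A}_\kappa(\hat X)=\mathrm{Tr}_B[\hat U_\kappa(\hat X\otimes|0\rangle_B\langle0|)\hat U_\kappa^\dag]$, move $\hat U_\kappa^\dag$ across by cyclicity on $\mathcal{H}_A\otimes\mathcal{H}_B$, and use the elementary identity $\mathrm{Tr}_B[\hat W(\hat X\otimes|0\rangle_B\langle0|)]=\left({}_B\langle0|\;\hat W\;|0\rangle_B\right)\hat X$ to obtain
\[
\mathrm{Tr}\left[\hat Y\,\mathcal{A}_\kappa\left(\hat X\right)\right]=\mathrm{Tr}\left[\left({}_B\langle0|\;\hat U_\kappa^\dag\left(\hat Y\otimes\hat{\mathbb{I}}_B\right)\hat U_\kappa\;|0\rangle_B\right)\hat X\right]\;.
\]
Thus the proposition is equivalent to the operator identity ${}_B\langle0|\,\hat U_\kappa^\dag(\hat Y\otimes\hat{\mathbb{I}}_B)\hat U_\kappa\,|0\rangle_B=\kappa^{-1}\mathcal{E}_{1/\kappa}(\hat Y)$ for every bounded $\hat Y$; both sides are well defined, since $\mathcal{A}_\kappa(\hat X)$ is trace class ($\mathcal{A}_\kappa$ being a channel) while $\mathcal{E}_{1/\kappa}(\hat{\mathbb{I}})=\kappa\,\hat{\mathbb{I}}$ together with complete positivity shows that $\mathcal{E}_{1/\kappa}$ maps bounded operators to bounded operators.

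To prove this operator identity I would check it on the displacement operators $\hat D(\xi)=\exp(\xi\,\hat a^\dag-\bar\xi\,\hat a)$, $\xi\in\mathbb{C}$, and then extend it by linearity and weak-$*$ continuity: the linear span of $\{\hat D(\xi)\}_{\xi\in\mathbb{C}}$ is weak-$*$ dense in the bounded operators by irreducibility of the representation, and both sides are bounded maps, weak-$*$ continuous on bounded sets. On a displacement operator the computation is direct: the Bogoliubov relation $\hat U_\kappa^\dag\,\hat a\,\hat U_\kappa=\sqrt{\kappa}\,\hat a+\sqrt{\kappa-1}\,\hat b^\dag$ gives $\hat U_\kappa^\dag\left(\hat D(\xi)\otimes\hat{\mathbb{I}}_B\right)\hat U_\kappa=\hat D(\sqrt{\kappa}\,\xi)\otimes\hat D_B(-\sqrt{\kappa-1}\,\bar\xi)$, where $\hat D_B$ is the displacement operator of $B$; taking the vacuum matrix element on $B$ and using ${}_B\langle0|\hat D_B(\mu)|0\rangle_B=\exp(-|\mu|^2/2)$ produces $\exp\left(-(\kappa-1)|\xi|^2/2\right)\hat D(\sqrt{\kappa}\,\xi)$. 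On the other side, the analogous computation for the beamsplitter dilation of $\mathcal{E}_\lambda$ --- equivalently, the standard rule $\mathcal{E}_\lambda(\hat D(\xi))=\lambda^{-1}\exp\left(-\frac{1-\lambda}{2\lambda}|\xi|^2\right)\hat D(\xi/\sqrt{\lambda})$, which at $\lambda=1/\kappa$ equals $\kappa\,\exp\left(-(\kappa-1)|\xi|^2/2\right)\hat D(\sqrt{\kappa}\,\xi)$ --- shows that the two expressions agree, and the identity follows.

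Alternatively, one may bypass the dilations: from the Parseval-type identity $\mathrm{Tr}[\hat A\,\hat B]=\pi^{-1}\int_{\mathbb{C}}\chi_{\hat A}(\xi)\,\chi_{\hat B}(-\xi)\,d^2\xi$ for Hilbert--Schmidt operators, together with the characteristic-function transformation rules $\chi_{\mathcal{A}_\kappa(\hat X)}(\xi)=\chi_{\hat X}(\sqrt{\kappa}\,\xi)\,\exp\left(-\frac{\kappa-1}{2}|\xi|^2\right)$ and $\chi_{\mathcal{E}_\lambda(\hat Y)}(\xi)=\chi_{\hat Y}(\sqrt{\lambda}\,\xi)\,\exp\left(-\frac{1-\lambda}{2}|\xi|^2\right)$, the statement reduces to the change of variables $\xi\mapsto\sqrt{\kappa}\,\xi$ and the identity $(\kappa-1)/(2\kappa)=(1-1/\kappa)/2$, after which one removes the Hilbert--Schmidt restriction by approximation. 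I do not expect a genuine obstacle here: the content is symplectic bookkeeping. The two points that require care are keeping the normalization conventions for $\hat D$, for $\chi$, and for the beamsplitter and two-mode-squeezing parametrizations mutually consistent, and justifying the passage from Hilbert--Schmidt (or displacement) operators to the stated generality --- $\hat X$ trace class, $\hat Y$ bounded --- which is a routine density argument.
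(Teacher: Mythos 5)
Your argument is correct, but note that the paper itself does not prove this statement at all: Proposition \ref{prop:dual} is imported verbatim from \cite[Theorem 9]{ivan2011operator}, so what you have produced is a self-contained verification rather than an alternative to an in-paper proof. Your route — identify the Hilbert--Schmidt adjoint $\mathcal{A}_\kappa^*(\hat{Y})={}_B\langle0|\,\hat{U}_\kappa^\dag(\hat{Y}\otimes\hat{\mathbb{I}}_B)\hat{U}_\kappa\,|0\rangle_B$ from the dilation and test it on Weyl operators — is sound, and the key computation checks out: $\hat{U}_\kappa^\dag(\hat{D}(\xi)\otimes\hat{\mathbb{I}}_B)\hat{U}_\kappa=\hat{D}(\sqrt{\kappa}\,\xi)\otimes\hat{D}_B(-\sqrt{\kappa-1}\,\bar\xi)$ follows from the Bogoliubov relation, the vacuum overlap gives $\mathrm{e}^{-(\kappa-1)|\xi|^2/2}\hat{D}(\sqrt{\kappa}\,\xi)$, and this matches $\kappa^{-1}\mathcal{E}_{1/\kappa}(\hat{D}(\xi))$ under the stated conventions; the characteristic-function version is the same bookkeeping and is how \cite{ivan2011operator} phrases it. Two small points deserve more care than your sketch gives them. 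First, for bounded, non-trace-class $\hat{Y}$ the defining partial-trace formula for $\mathcal{E}_{1/\kappa}$ does not literally apply, so $\mathcal{E}_{1/\kappa}(\hat{Y})$ must be understood through the Kraus/Stinespring form $\sum_k\hat{K}_k\hat{Y}\hat{K}_k^\dag$, which converges (weak-$*$) precisely because $\sum_k\hat{K}_k\hat{K}_k^\dag=\mathcal{E}_{1/\kappa}(\hat{\mathbb{I}})=\kappa\,\hat{\mathbb{I}}$; your boundedness remark is fine once this definition is fixed, and it also supplies the normality needed for your density argument. Second, the second displayed identity does not follow from the first merely by ``exchanging $\hat{X}$ and $\hat{Y}$'': the trace-class and bounded roles swap, so you either approximate or, more simply, repeat the identical one-line dilation computation with the beamsplitter $\hat{U}_{1/\kappa}$ in place of $\hat{U}_\kappa$ (note also that the paper's second display reads $\mathcal{E}_\kappa$, which for $\kappa\ge1$ should evidently be $\mathcal{E}_{1/\kappa}$, consistent with how the duality is used in \autoref{sec:ampl}).
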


The quantum-limited amplifier sends thermal states into themselves:
\begin{equation}\label{eq:z'}
\mathcal{A}_\kappa\left(\hat{\omega}_z\right)=\hat{\omega}_{z'}\;,\qquad z'=\frac{z+\kappa-1}{\kappa}\;.
\end{equation}

\section{The logarithmic Sobolev inequality for the quantum-limited attenuator}\label{sec:logs}
\begin{thm}[Logarithmic Sobolev inequality]\label{thm:logs}
We define for any $0<a<1$, any $0<z<1$ and any positive operator $\hat{X}$ with finite rank
\begin{align}\label{eq:defs}
F_a\left(\hat{X}\right) &:= \left.\frac{d}{dt} \ln \norm{\mathrm{e}^{t\mathcal{L}}\left(\hat{X}\right)}_{\frac 1 a } \right|_{t=0}\;,\nonumber\\
S_a\left(\hat{X}\right) &:= S\left(\frac{\hat{X}^{1/a}}{\mathrm{Tr}\,\hat{X}^{1/a}}\right)\;,\nonumber\\
\mathcal{F}_{z,a}\left(\hat{X}\right) &:= F_a\left(\hat{X}\right) + \mu(z,a)\,S_a\left(\hat{X}\right) \;,
\end{align}
where $\mathcal{L}$ is the generator of the quantum-limited attenuator \eqref{eq:defL} and
\begin{equation}\label{eq:mu00}
\mu(z,a):=\frac{a\,z^\frac{a-1}{a}-1+\left(1-a\right)z}{\ln z^\frac{1}{a}}\;.
\end{equation}
Then,
\begin{equation}\label{logs}
\mathcal{F}_{z,a}\left(\hat{X}\right) \le \mathcal{F}_{z,a}\left(\hat{\omega}_z\right)\;.
\end{equation}
Moreover,
\begin{equation}\label{eq:Fa}
F_a\left(\hat{X}\right) \le 1-a\;.
\end{equation}
\end{thm}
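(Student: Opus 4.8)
The plan is to reduce both inequalities to statements about Fock-diagonal operators and then to prove \eqref{eq:Fa} by a one-line arithmetic--geometric-mean estimate and \eqref{logs} via the Karush--Kuhn--Tucker conditions, as in the proof of the isoperimetric inequality of \cite{de2016gaussian}. Since $\mathcal{L}$ is covariant under the phase rotations $\mathrm{e}^{\mathrm{i}\theta\hat{N}}$ and Schatten norms and the von Neumann entropy are unitarily invariant, $F_a$ and $S_a$ are invariant under conjugation of $\hat{X}$ by unitaries; and by the passive-rearrangement majorization of \cite{de2015passive}, $\mathcal{E}_\lambda(\hat{X})$ is majorized by $\mathcal{E}_\lambda(\hat{X}^\downarrow)$ for $0\le\lambda\le1$, where $\hat{X}^\downarrow$ is the Fock-diagonal operator carrying the eigenvalues of $\hat{X}$ in nonincreasing order. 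Since $1/a>1$, the norm $\norm{\,\cdot\,}_{1/a}$ is nondecreasing under majorization, so $\ln\norm{\mathrm{e}^{t\mathcal{L}}(\hat{X})}_{1/a}\le\ln\norm{\mathrm{e}^{t\mathcal{L}}(\hat{X}^\downarrow)}_{1/a}$ for $t\ge0$ with equality at $t=0$, and comparing the derivatives at $t=0$ gives $F_a(\hat{X})\le F_a(\hat{X}^\downarrow)$; as $S_a(\hat{X})=S_a(\hat{X}^\downarrow)$ it suffices to treat Fock-diagonal finite-rank $\hat{X}$. Writing $\hat{X}=\sum_{n=0}^N x_n\lvert n\rangle\langle n\rvert$, normalizing $\mathrm{Tr}\,\hat{X}^{1/a}=1$ and setting $p_n:=x_n^{1/a}$, the identity $\mathcal{L}(\hat{X})=\sum_m\sqa{(m+1)x_{m+1}-m x_m}\lvert m\rangle\langle m\rvert$ from \eqref{eq:defL}, together with $1/a>1$ (which makes the contribution of Fock levels outside the support $O(t^{1/a})=o(t)$), yields
\[
F_a(\hat{X})=\sum_{m\ge0}(m+1)\,p_m^{1-a}\,p_{m+1}^{a}-\sum_{m\ge0}m\,p_m\;,\qquad S_a(\hat{X})=-\sum_{m\ge0}p_m\ln p_m\;,
\]
with the convention $p_m:=0$ for $m>N$.

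With this, \eqref{eq:Fa} is immediate: by the weighted arithmetic--geometric mean inequality $p_m^{1-a}p_{m+1}^{a}\le(1-a)p_m+a\,p_{m+1}$, so multiplying by $m+1$, summing, and using $\sum_m p_m=1$ together with $\sum_m(m+1)p_{m+1}=\sum_m m\,p_m$, one gets $\sum_m(m+1)p_m^{1-a}p_{m+1}^{a}\le(1-a)+\sum_m m\,p_m$, i.e. $F_a(\hat{X})\le1-a$. Equivalently, \eqref{eq:Fa} is the $z\to1^-$ limit of \eqref{logs}: $F_a$ and $S_a$ do not depend on $z$, $\mu(z,a)\to0$, and $\mathcal{F}_{z,a}(\hat{\omega}_z)=(1-a)z-\mu(z,a)\ln(1-z^{1/a})\to1-a$ because the numerator of \eqref{eq:mu00} vanishes to second order at $z=1$, so $\mu(z,a)=O(1-z)$ kills the logarithmic divergence of $S_a(\hat{\omega}_z)$.

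For \eqref{logs} I would maximize
\[
\mathcal{F}_{z,a}(\hat{X})=\sum_{m\ge0}(m+1)\,p_m^{1-a}\,p_{m+1}^{a}-\sum_{m\ge0}m\,p_m-\mu(z,a)\sum_{m\ge0}p_m\ln p_m
\]
over the simplex $\Delta_N$ (after replacing $\hat{X}$ by its passive rearrangement). This function extends continuously to the compact $\Delta_N$ and attains its maximum at some passively ordered $\hat{X}^*$. One checks that $\mu(z,a)<0$, so $\mathcal{F}_{z,a}$ is a difference of two concave functions and convex analysis is unavailable; one also checks — and this is how \eqref{eq:mu00} arises — that requiring $\partial\mathcal{F}_{z,a}/\partial p_j$ to be independent of $j$ on a geometric vector $p_m\propto(z^{1/a})^m$, i.e. on $\hat{X}\propto\hat{\omega}_z$, forces precisely the formula \eqref{eq:mu00}, making $\hat{\omega}_z$ a stationary point of $\mathcal{F}_{z,a}$.

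The real work is the analysis of the Karush--Kuhn--Tucker conditions at $\hat{X}^*$. There are a multiplier $\nu$ for the constraint $\sum_m p_m=1$ and multipliers $\eta_m\ge0$ with $\eta_m p_m^*=0$ such that $\partial\mathcal{F}_{z,a}/\partial p_m=\nu-\eta_m$; thus on the support of $\hat{X}^*$
\[
(j+1)(1-a)\bra{\frac{p_{j+1}}{p_j}}^{a}+j\,a\bra{\frac{p_{j-1}}{p_j}}^{1-a}-j-\mu(z,a)(\ln p_j+1)=\nu\;,
\]
and the left-hand side is $\le\nu$ off the support. The Euler-type identity $\sum_m p_m\,\partial\mathcal{F}_{z,a}/\partial p_m=\mathcal{F}_{z,a}-\mu(z,a)$ then gives $\mathcal{F}_{z,a}(\hat{X}^*)=\nu+\mu(z,a)$, and evaluating the stationarity equation at $j=0$ (where $p_0^*>0$ by passivity) gives $\nu=(1-a)(p_1^*/p_0^*)^{a}-\mu(z,a)(\ln p_0^*+1)$, so that $\mathcal{F}_{z,a}(\hat{X}^*)=(1-a)(p_1^*/p_0^*)^{a}-\mu(z,a)\ln p_0^*$, which must be shown to be $\le(1-a)z-\mu(z,a)\ln(1-z^{1/a})=\mathcal{F}_{z,a}(\hat{\omega}_z)$. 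The main obstacle is to obtain enough control on the pair $(p_0^*,\,p_1^*/p_0^*)$: one has to solve the nonlinear three-term recursion imposed by the stationarity equation on each run of the support of $\hat{X}^*$, rule out gaps in the support and ``flat'' runs — which would correspond to the spurious root $r=1$, besides $r=z^{1/a}$, of the equation determining the geometric ratio — by using the inequalities $\eta_m\ge0$ at the endpoints of the runs and at $m=N$, and finish with a monotonicity estimate comparing the resulting configurations with the infinite thermal state. This case analysis proceeds along the same lines as the proof of the isoperimetric inequality in \cite{de2016gaussian}.
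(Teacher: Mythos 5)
Your reduction to passive Fock-diagonal finite-rank operators, the explicit formula for $F_a$ in the variables $p_n=x_n^{1/a}$, and the AM--GM proof of \eqref{eq:Fa} are correct (the latter is even a bit slicker than the paper's Jensen argument, and your observation that \eqref{eq:Fa} is the $z\to1$ limit of \eqref{logs} is also right). The setup for \eqref{logs} -- maximizer on the simplex, KKT stationarity, the Euler identity $\sum_m p_m\,\partial\mathcal{F}_{z,a}/\partial p_m=\mathcal{F}_{z,a}-\mu$, and the identification of $\hat\omega_z$ as a stationary point -- also matches the paper's strategy. But the proof of \eqref{logs} itself is not there: everything you label ``the real work'' (solving the three-term recursion for the ratios $w_n=p_{n+1}/p_n$, proving full support, proving monotonicity of the $w_n$, and the comparison with the thermal state) is deferred with ``proceeds along the same lines as \cite{de2016gaussian}'', and this is precisely the technical heart of the theorem. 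Worse, the finishing move you gesture at cannot be closed as described: from the maximizer one does get $p_1^*/p_0^*\le z^{1/a}$, but one also gets $p_0^*\ge 1-z^{1/a}$ (since the ratios are decreasing and sum to one), and because $\mu(z,a)<0$ this bound pushes $-\mu\ln p_0^*$ in the \emph{wrong} direction, so $\mathcal{F}_{z,a}(\hat X^*)=(1-a)(p_1^*/p_0^*)^a-\mu\ln p_0^*$ cannot be compared term-by-term with $(1-a)z-\mu\ln(1-z^{1/a})$. The paper does not attempt a direct finite-$N$ comparison at all: it proves that the ratios of the finite-$N$ maximizers are monotone and bounded by $\xi=z^{1/a}$, extracts a pointwise limit as $N\to\infty$, shows by a delicate recursion-plus-summability argument that \emph{all} limiting ratios equal $\xi$, identifies the limit as the geometric distribution $(1-\xi)\xi^n$ by dominated convergence, and only then uses the monotonicity in $N$ of the maxima to conclude $\mathcal{F}_{z,a}(x)\le\lim_N\mathcal{F}_{z,a}(\bar x^{(N)})=\mathcal{F}_{z,a}(\hat\omega_z)$. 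None of this limit analysis appears in your proposal, so the main inequality remains unproven.

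A secondary but genuine issue is your choice of coordinates for the KKT step. In the $p$-variables the objective is not differentiable at boundary points of the simplex: $\partial(-\mu\, p_j\ln p_j)/\partial p_j\to-\infty$ as $p_j\to0^+$, and $p_m^{1-a}p_{m+1}^a$ is not differentiable at $(0,0)$, so the statement ``$\partial\mathcal{F}_{z,a}/\partial p_m\le\nu$ off the support'' is not justified (and is formally of the type $-\infty+\infty$ when $p_{m-1}>0$, $p_m=0$). The paper works in the variables $x_n=p_n^a$ exactly to avoid this: there both the entropy term and the cross terms $x_n|x_{n-1}|^{1/a-1}$ are differentiable at the boundary (its Lemma \ref{lem:diff}), the regularity hypotheses of Theorem \ref{thm:KKT} can be checked, and the resulting boundary inequality \eqref{eq:KKTdf} is what rules out partial support. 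Your mention of ``gaps in the support'' is also moot once passivity is used (the support of a passive maximizer is automatically an initial segment), but full support still has to be proved, and your framework as written does not legitimately deliver the inequality needed to do so.
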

An interesting consequence of Theorem \ref{thm:logs} is the following.
\begin{cor}
For any $0<a<1$ and any positive operator $\hat{X}$ with finite rank
\begin{equation}
F_a\left(\hat{X}\right) \le F_a\left(\hat{\omega}\right)\;,
\end{equation}
where $\hat{\omega}$ is the thermal Gaussian state such that $S_a(\hat{\omega}) = S_a\left(\hat{X}\right)$.
\begin{proof}
If $\hat{X}$ has rank $1$, we have from Lemma \ref{lem:pas}
\begin{equation}
F_a\left(\hat{X}\right) \le F_a\left(\hat{X}^\downarrow\right) = F_a\left(|0\rangle\langle 0|\right)\;,
\end{equation}
and the claim follows since $S_a\left(\hat{X}\right)=0$.
If $\hat{X}$ has rank at least $2$, then $S_a\left(\hat{X}\right)>0$ and the claim follows from Theorem \ref{thm:logs} with $0<z<1$ chosen such that $S_a(\hat{\omega}_z) = S_a\left(\hat{X}\right)$.
Such $z$ always exists since from \eqref{eq:omegaz}
\begin{equation}
\frac{\hat{\omega}_z^\frac{1}{a}}{\mathrm{Tr}\,\hat{\omega}_z^\frac{1}{a}} = \hat{\omega}_{z^\frac{1}{a}}\;.
\end{equation}
When $z$ covers the interval $(0,1)$, $z^\frac{1}{a}$ covers the interval $(0,1)$, too, and the claim follows since the entropy of a thermal Gaussian state can take any positive value.
\end{proof}
\end{cor}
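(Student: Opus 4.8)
The plan is to obtain the corollary directly from Theorem~\ref{thm:logs}, together with the passivity estimate of Lemma~\ref{lem:pas}, by splitting into two cases according to whether $\hat{X}$ has rank one or rank at least two. The mechanism is that $\mathcal{F}_{z,a}=F_a+\mu(z,a)\,S_a$, so once $z$ is chosen so that the thermal state $\hat{\omega}_z$ has the same value of $S_a$ as $\hat{X}$, the term $\mu(z,a)\,S_a$ drops out of the inequality \eqref{logs} and leaves exactly $F_a(\hat{X})\le F_a(\hat{\omega}_z)$. The only points requiring care are the degenerate case $S_a(\hat{X})=0$, which corresponds to the vacuum $\hat{\omega}=|0\rangle\langle0|$ and is excluded by the hypothesis $0<z<1$ of Theorem~\ref{thm:logs}, and the existence of a thermal state realizing a prescribed positive value of $S_a$.

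First I would treat the rank-one case. If $\hat{X}$ has rank one then so does $\hat{X}^{1/a}$, hence $S_a(\hat{X})=0$ and the thermal state $\hat{\omega}$ with $S_a(\hat{\omega})=0$ is the vacuum $|0\rangle\langle0|$. By Lemma~\ref{lem:pas}, replacing $\hat{X}$ by its passive (Fock-diagonal, non-increasingly ordered) rearrangement $\hat{X}^\downarrow$ does not decrease $F_a$; since the passive rearrangement of a rank-one operator is a positive multiple of $|0\rangle\langle0|$ and $F_a$ is invariant under rescaling, this gives $F_a(\hat{X})\le F_a(\hat{X}^\downarrow)=F_a(|0\rangle\langle0|)=F_a(\hat{\omega})$, which is the claim.

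For $\hat{X}$ of rank at least two, $S_a(\hat{X})$ is finite and strictly positive, so I would choose $0<z<1$ with $S_a(\hat{\omega}_z)=S_a(\hat{X})$ and apply Theorem~\ref{thm:logs}: the inequality $\mathcal{F}_{z,a}(\hat{X})\le\mathcal{F}_{z,a}(\hat{\omega}_z)$ becomes $F_a(\hat{X})+\mu(z,a)\,S_a(\hat{X})\le F_a(\hat{\omega}_z)+\mu(z,a)\,S_a(\hat{\omega}_z)$, and the $\mu(z,a)$ terms cancel by the choice of $z$, yielding $F_a(\hat{X})\le F_a(\hat{\omega}_z)=F_a(\hat{\omega})$. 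To justify that such a $z$ exists I would use the identity $\hat{\omega}_z^{1/a}/\mathrm{Tr}\,\hat{\omega}_z^{1/a}=\hat{\omega}_{z^{1/a}}$, immediate from \eqref{eq:omegaz}, so that $S_a(\hat{\omega}_z)=S(z^{1/a})$ with $S$ the thermal entropy \eqref{eq:Sz}; as $z$ runs over $(0,1)$ so does $z^{1/a}$, and $S(\cdot)$ is continuous with range $(0,+\infty)$, hence attains the value $S_a(\hat{X})$. The argument is thus a one-step consequence of the two cited results; the only thing I would be careful about is the matching of the two boundary regimes just described, namely that rank one must be handled through passivity rather than through Theorem~\ref{thm:logs} (whose hypothesis $z>0$ excludes the vacuum), and that the parametrization $z\mapsto S_a(\hat{\omega}_z)$ is indeed onto $(0,+\infty)$ so that every relevant value of $S_a(\hat{X})$ is covered.
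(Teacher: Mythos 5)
Your proposal is correct and follows essentially the same route as the paper's proof: the same split into rank one (handled via Lemma \ref{lem:pas}, with the Fock rearrangement of a rank-one operator being a multiple of $|0\rangle\langle 0|$ and scale invariance of $F_a$) and rank at least two (handled by choosing $z$ with $S_a(\hat{\omega}_z)=S_a(\hat{X})$ so the $\mu(z,a)$ terms in Theorem \ref{thm:logs} cancel), and the same existence argument via the identity $\hat{\omega}_z^{1/a}/\mathrm{Tr}\,\hat{\omega}_z^{1/a}=\hat{\omega}_{z^{1/a}}$ and surjectivity of the thermal entropy onto $(0,\infty)$. The extra details you spell out (rescaling invariance from Lemma \ref{lem:Tr}, continuity of the entropy) are left implicit in the paper but are exactly the right justifications.
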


\subsection{Proof of Theorem \ref{thm:logs}}
The starting point of the proof is the recent result of Ref. \cite{de2015passive}, that links the $p\to q$ norms to the notions of passive states.
The passive states of a quantum system \cite{pusz1978passive,lenard1978thermodynamical,gorecki1980passive} minimize the average energy for a given spectrum.
They are diagonal in the energy eigenbasis, and their eigenvalues decrease as the energy increases.

\begin{defn}[Fock rearrangement]
Let $\hat{X}$ be a positive operator with eigenvalues in decreasing order $x_0\ge x_1\ge\ldots\ge0$.
We define its Fock rearrangement as
\begin{equation}
\hat{X}^\downarrow:=\sum_{n=0}^\infty x_n\;|n\rangle\langle n|\;.
\end{equation}
If $\hat{X}$ coincides with its own Fock rearrangement, i.e.\ $\hat{X}=\hat{X}^\downarrow$, we say that it is {\it passive} \cite{pusz1978passive,lenard1978thermodynamical,gorecki1980passive}.
\end{defn}
\begin{rem}
From \eqref{eq:omegaz}, any thermal Gaussian state is passive.
\end{rem}
\begin{rem}
We have $\hat{X}^\downarrow = \hat{U}\,\hat{X}\,{\hat{U}}^\dag$, where $\hat{U}$ is the unitary operator that for any $n\in\mathbb{N}$ sends the eigenvector of $\hat{X}$ with eigenvalue $x_n$ to the $n$-th Fock state $|n\rangle$.
\end{rem}

The result is the following:
\begin{thm}\label{thm:maj}
For any $p,q\ge1$ the Fock rearrangement of the input does not decrease the $q$ norm of the output, i.e.\ for any operator $\hat{X}$ with $\left\|\hat{X}\right\|_p<\infty$ and any $0\leq\lambda\leq1$
\begin{equation}
\left\|\mathcal{E}_\lambda\left(\hat{X}\right)\right\|_q\leq\left\|\mathcal{E}_\lambda\left(\hat{X}^\downarrow\right)\right\|_q\;.
\end{equation}
\begin{proof}
From \cite[Theorem 34]{de2015passive}, $\mathcal{E}_\lambda\left(\hat{X}^\downarrow\right)$ majorizes $\mathcal{E}_\lambda\left(\hat{X}\right)$.
Then, from \cite[Theorem 19]{de2015passive} there exists a probability measure $\mu$ on the set of unitary operators such that
\begin{equation}
\mathcal{E}_\lambda\left(\hat{X}\right) = \int\hat{U}\,\mathcal{E}_\lambda\left(\hat{X}^\downarrow\right)\,{\hat{U}}^\dag\,\mathrm{d}p\left(\hat{U}\right)\;.
\end{equation}
Therefore,
\begin{equation}
\left\|\mathcal{E}_\lambda\left(\hat{X}\right)\right\|_q \le \int\left\|\hat{U}\,\mathcal{E}_\lambda\left(\hat{X}^\downarrow\right)\,{\hat{U}}^\dag\right\|_q\mathrm{d}p\left(\hat{U}\right) =\left\|\mathcal{E}_\lambda\left(\hat{X}^\downarrow\right)\right\|_q\;.
\end{equation}
\end{proof}
\end{thm}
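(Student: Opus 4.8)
The plan is to deduce the Schatten-norm inequality from a majorization relation between the two output operators, using that the Schatten $q$-norms are monotone under majorization. We may take $\hat X$ positive, since its Fock rearrangement $\hat X^\downarrow$ is defined through the decreasing sequence of its eigenvalues; then both outputs $\mathcal E_\lambda(\hat X)$ and $\mathcal E_\lambda(\hat X^\downarrow)$ are positive trace-class operators, because $\mathcal E_\lambda$ is a quantum channel. Writing the eigenvalues of two positive operators $\hat A,\hat B$ of equal trace in decreasing order as $a_0\ge a_1\ge\ldots$ and $b_0\ge b_1\ge\ldots$, we say that $\hat B$ majorizes $\hat A$, and write $\hat A\prec\hat B$, when $\sum_{n=0}^N a_n\le\sum_{n=0}^N b_n$ for every $N$. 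Since $t\mapsto t^q$ is convex for $q\ge1$, the functional $\sum_n a_n^q$ is Schur-convex, so $\hat A\prec\hat B$ forces $\sum_n a_n^q\le\sum_n b_n^q$, that is $\norm{\hat A}_q\le\norm{\hat B}_q$ (for infinite spectra this follows by passing to the limit in the finite-$N$ inequalities, all terms being nonnegative). Thus the theorem reduces to the single operator majorization $\mathcal E_\lambda(\hat X)\prec\mathcal E_\lambda(\hat X^\downarrow)$.

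Establishing this majorization is the main obstacle, and it is exactly the point where the structure of the quantum-limited attenuator and the theory of passive states enter. The heuristic is that, among all inputs sharing the given spectrum $\{x_n\}$, the passive arrangement $\hat X^\downarrow$ places the largest eigenvalues on the lowest Fock states, i.e.\ at the lowest energies, which the attenuator damps the least; every other arrangement spreads weight onto higher Fock states that $\mathcal E_\lambda$ suppresses more strongly, producing a more ``mixed'', less concentrated output. I would obtain the majorization from the passive-state analysis of \cite[Theorem 34]{de2015passive}, which shows precisely that for the beamsplitter interaction the Fock-rearranged input yields an output whose spectrum majorizes that of the output of any input with the same spectrum. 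This is the substantive quantum ingredient; the remainder of the argument is standard convex and functional analysis.

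Finally, to pass cleanly from the majorization to the norm inequality at the operator level, avoiding any delicate ordering of eigenvalues in infinite dimension, I would use the integral representation characterizing majorization: by \cite[Theorem 19]{de2015passive} there is a probability measure $p$ on the unitary group with
\begin{equation}
\mathcal E_\lambda(\hat X)=\int \hat U\,\mathcal E_\lambda(\hat X^\downarrow)\,\hat U^\dag\,\mathrm{d}p(\hat U)\;.
\end{equation}
Applying the triangle inequality for $\norm{\cdot}_q$ together with its unitary invariance then gives $\norm{\mathcal E_\lambda(\hat X)}_q\le\int\norm{\hat U\,\mathcal E_\lambda(\hat X^\downarrow)\,\hat U^\dag}_q\,\mathrm{d}p(\hat U)=\norm{\mathcal E_\lambda(\hat X^\downarrow)}_q$, which is the claim. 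This route and the Schur-convexity route are equivalent, both resting on the same majorization; I would present whichever makes the $\norm{\hat X}_p<\infty$ integrability bookkeeping lightest.
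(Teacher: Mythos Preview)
Your proposal is correct and follows essentially the same route as the paper: invoke the majorization $\mathcal{E}_\lambda(\hat X)\prec\mathcal{E}_\lambda(\hat X^\downarrow)$ from \cite[Theorem 34]{de2015passive}, then deduce the Schatten-norm inequality either via Schur-convexity or, as the paper does, via the unitary-mixture representation \cite[Theorem 19]{de2015passive} combined with the triangle inequality and unitary invariance of $\norm{\cdot}_q$. The two concluding arguments you offer are indeed equivalent, and the paper simply chooses the integral-representation one.
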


\begin{lem}\label{lem:Tr}
$F_a\left(\hat{X}\right)$ and $\mathcal{F}_{z,a}\left(\hat{X}\right)$ are invariant upon rescaling $\hat{X}$ by a positive constant.
\end{lem}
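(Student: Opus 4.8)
The plan is to check the invariance term by term, relying only on the linearity of the generator $\mathcal{L}$, the absolute homogeneity of the Schatten norms, and the linearity of the trace. Fix $c>0$ and set $\hat{Y}=c\,\hat{X}$; note that $\hat{Y}$ is again a positive operator of the same (finite) rank as $\hat{X}$, so all the objects in \eqref{eq:defs} are defined for $\hat{Y}$.

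First I would handle $F_a$. Since $\mathcal{L}$ is linear, so is the semigroup $\mathrm{e}^{t\mathcal{L}}$, hence $\mathrm{e}^{t\mathcal{L}}(\hat{Y})=c\,\mathrm{e}^{t\mathcal{L}}(\hat{X})$ for every $t\ge0$. By homogeneity of the Schatten $\frac1a$ norm, $\norm{\mathrm{e}^{t\mathcal{L}}(\hat{Y})}_{1/a}=c\,\norm{\mathrm{e}^{t\mathcal{L}}(\hat{X})}_{1/a}$, so that $\ln\norm{\mathrm{e}^{t\mathcal{L}}(\hat{Y})}_{1/a}=\ln c+\ln\norm{\mathrm{e}^{t\mathcal{L}}(\hat{X})}_{1/a}$. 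Differentiating in $t$ at $t=0$ annihilates the additive constant $\ln c$, giving $F_a(\hat{Y})=F_a(\hat{X})$.

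Next I would handle $S_a$. From the functional calculus, $\hat{Y}^{1/a}=c^{1/a}\,\hat{X}^{1/a}$, hence $\mathrm{Tr}\,\hat{Y}^{1/a}=c^{1/a}\,\mathrm{Tr}\,\hat{X}^{1/a}$, and therefore the normalized operator $\hat{Y}^{1/a}/\mathrm{Tr}\,\hat{Y}^{1/a}$ coincides with $\hat{X}^{1/a}/\mathrm{Tr}\,\hat{X}^{1/a}$. Applying $S$ gives $S_a(\hat{Y})=S_a(\hat{X})$.

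Finally, since $\mathcal{F}_{z,a}(\hat{X})=F_a(\hat{X})+\mu(z,a)\,S_a(\hat{X})$ is a fixed linear combination of the two invariant quantities — the coefficient $\mu(z,a)$ in \eqref{eq:mu00} does not depend on $\hat{X}$ — it is invariant as well. There is no substantial obstacle here; the only mild point is to make sure that $\mathrm{e}^{t\mathcal{L}}(\hat{X})$ stays in the regime where the $\frac1a$ norm is differentiable at $t=0$, which is already presupposed in the definition of $F_a$.
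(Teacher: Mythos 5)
Your argument is correct and is essentially the paper's own proof: linearity of $\mathrm{e}^{t\mathcal{L}}$ plus homogeneity of the Schatten norm turns the rescaling into an additive constant $\ln c$ that vanishes upon differentiation in $t$, and the normalization in $S_a$ cancels the factor $c^{1/a}$. The paper states the $S_a$ step without the functional-calculus detail, but the route is the same.
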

\begin{proof}
For any $\lambda>0$
\begin{align}
F_a\left(\lambda\,\hat{X}\right) &= \left.\frac{d}{dt}\ln\left\|\mathrm{e}^{t\mathcal{L}}\left(\lambda\,\hat{X}\right)\right\|_\frac{1}{a}\right|_{t=0} = \left.\frac{d}{dt}\left(\ln\lambda + \ln\left\|\mathrm{e}^{t\mathcal{L}}\left(\hat{X}\right)\right\|_\frac{1}{a}\right)\right|_{t=0}\nonumber\\
& = \left.\frac{d}{dt}\ln\left\|\mathrm{e}^{t\mathcal{L}}\left(\hat{X}\right)\right\|_\frac{1}{a}\right|_{t=0} = F_a\left(\hat{X}\right)\;,
\end{align}
hence $F_a\left(\hat{X}\right)$ is invariant.
Since $S_a\left(\lambda\,\hat{X}\right) = S_a\left(\hat{X}\right)$, $\mathcal{F}_{z,a}\left(\hat{X}\right)$ is invariant, too.
\end{proof}

\begin{lem}\label{lem:pas}
For any positive operator $\hat{X}$ with finite rank
\begin{equation}
F_a\left(\hat{X}\right) \le F_a\left(\hat{X}^\downarrow\right)\;,\qquad
\mathcal{F}_{z,a}\left(\hat{X}\right) \le \mathcal{F}_{z,a}\left(\hat{X}^\downarrow\right)
\end{equation}
\end{lem}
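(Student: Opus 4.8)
The plan is to derive both inequalities from the majorization statement Theorem~\ref{thm:maj} together with the elementary fact that the Fock rearrangement does not change the spectrum of a positive operator. First I would record the two consequences of $\hat{X}$ and $\hat{X}^\downarrow$ having the same eigenvalues: every Schatten norm is unchanged, $\norm{\hat{X}}_q=\norm{\hat{X}^\downarrow}_q$ for all $q\ge 1$, and $S_a\left(\hat{X}\right)=S_a\left(\hat{X}^\downarrow\right)$, since by \eqref{eq:defs} the quantity $S_a$ depends on $\hat{X}$ only through the eigenvalues of $\hat{X}^{1/a}$. Once the first inequality $F_a\left(\hat{X}\right)\le F_a\left(\hat{X}^\downarrow\right)$ is established, the second one follows by adding $\mu(z,a)\,S_a\left(\hat{X}\right)=\mu(z,a)\,S_a\left(\hat{X}^\downarrow\right)$ to both sides; so the whole content of the lemma is the bound on $F_a$.

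To prove $F_a\left(\hat{X}\right)\le F_a\left(\hat{X}^\downarrow\right)$ I would use $\mathrm{e}^{t\mathcal{L}}=\mathcal{E}_{\mathrm{e}^{-t}}$ and set
\[
g(t):=\ln\norm{\mathcal{E}_{\mathrm{e}^{-t}}\bra{\hat{X}}}_{1/a}\;,\qquad h(t):=\ln\norm{\mathcal{E}_{\mathrm{e}^{-t}}\bra{\hat{X}^\downarrow}}_{1/a}\;,\qquad t\ge 0\;,
\]
so that $F_a\left(\hat{X}\right)=g'(0)$ and $F_a\left(\hat{X}^\downarrow\right)=h'(0)$ (one-sided derivatives at $t=0$, which are finite because $\hat{X}$ and $\hat{X}^\downarrow$ are finite-rank positive operators, so $t\mapsto\mathcal{E}_{\mathrm{e}^{-t}}(\,\cdot\,)$ is a smooth trace-class-operator-valued curve and the Schatten $\tfrac1a$ norm is smooth and strictly positive along it — strict positivity because the attenuator is trace preserving). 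Applying Theorem~\ref{thm:maj} with $q=\tfrac1a$ and $\lambda=\mathrm{e}^{-t}\in(0,1]$ gives $\norm{\mathcal{E}_{\mathrm{e}^{-t}}(\hat{X})}_{1/a}\le\norm{\mathcal{E}_{\mathrm{e}^{-t}}(\hat{X}^\downarrow)}_{1/a}$, i.e.\ $g(t)\le h(t)$ for every $t\ge 0$; and at $t=0$ one has $\mathcal{E}_1=\mathrm{id}$, hence $g(0)=\ln\norm{\hat{X}}_{1/a}=\ln\norm{\hat{X}^\downarrow}_{1/a}=h(0)$ by the spectral invariance of the Schatten norms noted above.

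Finally, from $g(0)=h(0)$ and $g(t)\le h(t)$ for $t\ge 0$ I would divide $g(t)-g(0)\le h(t)-h(0)$ by $t>0$ and let $t\downarrow 0$ to get $g'(0)\le h'(0)$, that is $F_a\left(\hat{X}\right)\le F_a\left(\hat{X}^\downarrow\right)$, and conclude the $\mathcal{F}_{z,a}$ inequality as in the first paragraph. I do not expect a genuine obstacle: the only delicate point is the justification that $g$ and $h$ are one-sidedly differentiable at $t=0$, so that the pointwise domination transfers to a domination of the right derivatives — but this is precisely the differentiability that the definition \eqref{eq:defs} of $F_a$ already presupposes.
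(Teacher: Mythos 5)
Your proposal is correct and follows essentially the same route as the paper: it invokes Theorem~\ref{thm:maj} with $q=1/a$ to get pointwise domination of $\ln\norm{\mathrm{e}^{t\mathcal{L}}(\hat{X})}_{1/a}$ by the corresponding quantity for $\hat{X}^\downarrow$, notes equality at $t=0$ from the equality of spectra, and compares the derivatives at $t=0$, with the $\mathcal{F}_{z,a}$ inequality following from $S_a(\hat{X})=S_a(\hat{X}^\downarrow)$. Your extra remarks on one-sided differentiability are a harmless refinement of the same argument.
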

\begin{proof}
From Theorem \ref{thm:maj},
\begin{equation}\label{passiveS}
\ln\norm{\mathrm{e}^{t\mathcal{L}}\left(\hat{X}\right)}_{1/a} \le \ln\norm{\mathrm{e}^{t\mathcal{L}}\left(\hat{X}^{\downarrow}\right)}_{1/a}\;.
\end{equation}
Since $\hat{X}$ and $\hat{X}^{\downarrow}$ have the same spectrum, the two sides of \eqref{passiveS} coincide at $t=0$, hence
\begin{equation}
F_a\left(\hat{X}\right) = \left.\frac{d}{dt}\ln\norm{\mathrm{e}^{t\mathcal{L}}\left(\hat{X}\right)}_{1/a}\right|_{t=0} \le \left.\frac{d}{dt}\ln\norm{\mathrm{e}^{t\mathcal{L}}\left(\hat{X}^{\downarrow}\right)}_{1/a}\right|_{t=0} = F_a\left(\hat{X}^\downarrow\right)\;.
\end{equation}
Moreover, $S_a\left(\hat{X}\right) = S_a\left(\hat{X}^\downarrow\right)$, and the claim follows.
\end{proof}

With some abuse of notation, we define for any $x\in\mathbb{R}^{N+1}$
\begin{align}\label{eq:defFa}
F_a(x) &:= \sum_{n=1}^N n\,\left(x_n\,|x_{n-1}|^{\frac{1}{a}-1} - |x_n|^{\frac{1}{a}}  \right)\;,\\
S_a(x) &:= -\sum_{n=0}^N |x_n|^\frac{1}{a}\ln|x_n|^\frac{1}{a}\;,\\
\mathcal{F}_{z,a}(x) &:= F_a(x)+\mu(z,a)\,S_a(x)\;.
\end{align}
This choice is motivated by the following Lemma.
\begin{lem}\label{lem:dlX}
For any $x\in\mathbb{R}^{N+1}$ with nonnegative components such that
\begin{equation}\label{eq:Trx}
\sum_{n=0}^N |x_n|^\frac{1}{a}=1\;,
\end{equation}
let
\begin{equation}\label{eq:defX}
\hat{X}=\sum_{n=0}^N x_n\;|n\rangle\langle n|\;.
\end{equation}
Then,
\begin{equation}
F_a(x) = F_a\left(\hat{X}\right)\;,\qquad S_a(x) = S_a\left(\hat{X}\right)\;,\qquad \mathcal{F}_{z,a}(x) = \mathcal{F}_{z,a}\left(\hat{X}\right)\;.
\end{equation}
\begin{proof}
From \cite[Eqs. (VI.3), (VI.4)]{de2016gaussian}, we have
\begin{equation}
\mathrm{e}^{t\mathcal{L}}\left(\hat{X}\right)=\sum_{n=0}^N x_n(t)\;|n\rangle\langle n|\;,
\end{equation}
where for $n=0,\ldots,\,N$
\begin{equation}
x_n(t)=\sum_{k=n}^N\binom{k}{n}\mathrm{e}^{-nt}\left(1-\mathrm{e}^{-t}\right)^{k-n}x_k\;.
\end{equation}
The claim then follows from an explicit computation and from condition \eqref{eq:Trx}.
\end{proof}
\end{lem}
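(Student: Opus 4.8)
The plan is to verify the three claimed identities one at a time, with the third following from the first two via the definitions~\eqref{eq:defs} and~\eqref{eq:defFa} of $\mathcal{F}_{z,a}$. The entropic identity $S_a(x)=S_a(\hat X)$ is the easiest: since $\hat X$ is Fock-diagonal, $\hat X^{1/a}=\sum_{n=0}^N x_n^{1/a}\,|n\rangle\langle n|$, and the normalization~\eqref{eq:Trx} says precisely that $\mathrm{Tr}\,\hat X^{1/a}=1$; hence $\hat X^{1/a}/\mathrm{Tr}\,\hat X^{1/a}=\hat X^{1/a}$ is already a density operator, with eigenvalues $x_n^{1/a}$, and its von Neumann entropy is $-\sum_{n=0}^N x_n^{1/a}\ln x_n^{1/a}$, which equals $S_a(x)$ once one uses that the components $x_n$ are nonnegative.

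For $F_a$ I would start from the explicit description of the Fock-diagonal evolution quoted from~\cite{de2016gaussian}: $\mathrm{e}^{t\mathcal{L}}(\hat X)=\sum_{n=0}^N x_n(t)\,|n\rangle\langle n|$ with $x_n(t)=\sum_{k=n}^N\binom{k}{n}\mathrm{e}^{-nt}(1-\mathrm{e}^{-t})^{k-n}x_k$. The crucial observation is that $x_n(t)\ge 0$ for every $t\ge 0$, because the $x_k$ and the binomial and exponential weights are all nonnegative; therefore $\bigl\|\mathrm{e}^{t\mathcal{L}}(\hat X)\bigr\|_{1/a}=\bigl(\sum_{n=0}^N x_n(t)^{1/a}\bigr)^a$ and $\ln\bigl\|\mathrm{e}^{t\mathcal{L}}(\hat X)\bigr\|_{1/a}=a\ln\sum_{n=0}^N x_n(t)^{1/a}$. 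Differentiating at $t=0$, the factor $a$ cancels against the $1/a$, the denominator $\sum_{n=0}^N x_n(0)^{1/a}$ equals $1$ by~\eqref{eq:Trx}, and one is left with $F_a(\hat X)=\sum_{n=0}^N x_n(0)^{1/a-1}\,\dot x_n(0)$.

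The next step is to evaluate $\dot x_n(0)$. One may differentiate the binomial formula term by term --- at $t=0$ only the summands with $k=n$ and $k=n+1$ survive --- or, more transparently, read it off the generator~\eqref{eq:defL}, which acts on a Fock-diagonal operator as $\mathcal{L}(\hat X)=\sum_{n}\bigl[(n+1)x_{n+1}-n\,x_n\bigr]\,|n\rangle\langle n|$, so that $\dot x_n(0)=(n+1)x_{n+1}-n\,x_n$. Substituting this into $F_a(\hat X)=\sum_{n=0}^N x_n^{1/a-1}\,\dot x_n(0)$, splitting the sum into the two pieces $\sum_n (n+1)x_{n+1}x_n^{1/a-1}$ and $-\sum_n n\,x_n^{1/a}$, and shifting the index in the first piece (its boundary term at $n=N$ drops because $x_{N+1}=0$) produces exactly $F_a(\hat X)=\sum_{n=1}^N n\bigl(x_n\,x_{n-1}^{1/a-1}-x_n^{1/a}\bigr)$, which is $F_a(x)$ as defined in~\eqref{eq:defFa} since the components are nonnegative. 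Adding $\mu(z,a)$ times the $S_a$ identity, with $\mu(z,a)$ as in~\eqref{eq:mu00}, then gives $\mathcal{F}_{z,a}(x)=\mathcal{F}_{z,a}(\hat X)$.

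The one point I would treat carefully, rather than as routine bookkeeping, is the behavior at indices with $x_n=0$. Since $0<a<1$ forces $1/a-1>0$, the correct convention is $0^{1/a-1}=0$: if $x_n(0)=0$ while $x_k>0$ for some $k>n$, then $x_n(t)$ vanishes to some order $j\ge 1$ in $t$ as $t\to 0^+$, so $x_n(t)^{1/a}$ vanishes to order $j/a>1$ and contributes nothing to the derivative, in agreement with the formal expression $0^{1/a-1}\,\dot x_n(0)$. It is also worth recording that $F_a$ is a one-sided derivative at $t=0$, since the attenuator semigroup is defined only for $t\ge 0$, so this one-sided asymptotic estimate is all that is needed and no left-differentiability issue arises. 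Apart from this bookkeeping the argument is an explicit and elementary computation, so I do not anticipate a genuine obstacle.
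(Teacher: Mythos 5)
Your proposal is correct and follows essentially the same route as the paper: the paper's proof simply quotes the explicit Fock-diagonal evolution $x_n(t)=\sum_{k=n}^N\binom{k}{n}\mathrm{e}^{-nt}\left(1-\mathrm{e}^{-t}\right)^{k-n}x_k$ and leaves the differentiation at $t=0$ (using $\sum_n x_n^{1/a}=1$) as an ``explicit computation,'' which is exactly what you carry out, including the equivalent reading of $\dot x_n(0)=\left(n+1\right)x_{n+1}-n\,x_n$ from the generator $\mathcal{L}$. Your careful treatment of indices with $x_n=0$ (the one-sided derivative of $x_n(t)^{1/a}$ vanishing because $1/a>1$) is a welcome detail that the paper leaves implicit, but it does not change the argument.
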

In the remaining part of this section, we will prove the following Theorem.
\begin{thm}\label{thm:logsx}
For any $N\in\mathbb{N}$ and any $x\in\mathbb{R}^{N+1}$ with nonnegative components and satisfying \eqref{eq:Trx},
\begin{equation}
\mathcal{F}_{z,a}(x) \le \mathcal{F}_{z,a}\left(\hat{\omega}_z\right)\;,\qquad F_a(x)\le 1-a\;.
\end{equation}
\end{thm}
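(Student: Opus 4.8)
The plan is to first dispose of the elementary bound $F_a(x)\le1-a$ and then to establish the main inequality by a constrained-optimization argument modelled on the proof of the isoperimetric inequality in \cite{de2016gaussian}. For the bound, put $p_n:=x_n^{1/a}\ge0$, so that by \eqref{eq:Trx} one has $\sum_{n=0}^N p_n=1$ and, since $x_n=p_n^{a}$, the definition \eqref{eq:defFa} reads $F_a(x)=\sum_{n=1}^N n\bigl(p_n^{a}p_{n-1}^{1-a}-p_n\bigr)$. Applying the weighted arithmetic--geometric mean inequality $p_n^{a}p_{n-1}^{1-a}\le a\,p_n+(1-a)\,p_{n-1}$ termwise and summing by parts,
\[
F_a(x)\ \le\ (1-a)\sum_{n=1}^N n\,(p_{n-1}-p_n)\ =\ (1-a)\bigl(1-(N+1)\,p_N\bigr)\ \le\ 1-a .
\]

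For the main inequality I would pass to a finite-dimensional optimization. By Lemma \ref{lem:dlX} the functional $\mathcal F_{z,a}(x)$ equals $\mathcal F_{z,a}(\hat X)$ for the Fock-diagonal operator $\hat X=\sum_n x_n|n\rangle\langle n|$, and Lemma \ref{lem:pas} (which rests on the passive-state majorization of Theorem \ref{thm:maj}) shows that replacing $x$ by its decreasing rearrangement only increases $\mathcal F_{z,a}$; hence it suffices to maximize
\[
\mathcal F_{z,a}(p)=\sum_{n=1}^N n\bigl(p_n^{a}p_{n-1}^{1-a}-p_n\bigr)-\mu(z,a)\sum_{n=0}^N p_n\ln p_n
\]
over the compact set $K_N=\{\,p\in\mathbb R^{N+1}:\ p_0\ge\dots\ge p_N\ge0,\ \sum_n p_n=1\,\}$, on which $\mathcal F_{z,a}$ is continuous; let $p^\ast$ be a maximizer. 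I would then study the Karush--Kuhn--Tucker conditions at $p^\ast$. First, $p^\ast$ has strictly positive components: as $p_N\to0^+$ the gradient of the $n=N$ term of $F_a$ blows up like $p_N^{a-1}$ and overwhelms the merely logarithmic contribution of the entropy, so $\mathcal F_{z,a}$ is increasing in $p_N$ near the face $p_N=0$ and cannot be maximal there. Next, introducing a Lagrange multiplier $\nu$ for the normalization and nonnegative multipliers for the active ordering constraints, the stationarity relations become a second-order nonlinear recursion for the $p_k^\ast$, reducing to $\partial_{p_k}\mathcal F_{z,a}(p^\ast)=\nu$ wherever no ordering constraint is active. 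A direct computation shows that these ``bulk'' equations are satisfied by a geometric law $p_k\propto w^{k}$ precisely when $w=z^{1/a}$ and $\mu$ equals \eqref{eq:mu00}; this is where formula \eqref{eq:mu00} comes from, the untruncated such sequence being $\hat\omega_z^{1/a}/\mathrm{Tr}\,\hat\omega_z^{1/a}=\hat\omega_{z^{1/a}}$, for which $\mathcal F_{z,a}(\hat\omega_z)=\frac{z-z^{1/a}}{1-z^{1/a}}+\mu(z,a)\,S(z^{1/a})$ by \eqref{eq:zlambda} and \eqref{eq:Sz}. Moreover, multiplying the stationarity relations by $p_k^\ast$, summing, and using Euler's identity for the degree-one homogeneous function $F_a$ yields the identity $\mathcal F_{z,a}(p^\ast)=\nu+\mu(z,a)$.

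The hard part will be to show that no solution of this finite system — the bulk recursion, the cases with an active ordering constraint, and the two boundary equations at $k=0$ and $k=N$, which differ from the bulk ones — can have $\mathcal F_{z,a}$-value exceeding $\mathcal F_{z,a}(\hat\omega_z)$; equivalently, that the nondecreasing numbers $\max_{K_N}\mathcal F_{z,a}$ converge to $\mathcal F_{z,a}(\hat\omega_z)$ as $N\to\infty$. I expect this to hinge on the convexity and monotonicity of the scalar functions $w\mapsto a\,w^{a-1}+(1-a)\,w^{a}-1-\mu(z,a)\ln w$ and $t\mapsto-t\ln t$, on the rigidity of the second-order recursion, and on a careful comparison of its boundary data with the asymptotics of $\hat\omega_z$ — the kind of estimate that forms the technical core of \cite{de2016gaussian}. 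Granting it, the first inequality of the Theorem follows; and letting $z\to1^-$ in that inequality — so that $\mu(z,a)\to0$, hence $\mathcal F_{z,a}(x)\to F_a(x)$, while $F_a(\hat\omega_z)=\frac{z-z^{1/a}}{1-z^{1/a}}\to1-a$ and $\mu(z,a)\,S(z^{1/a})\to0$, hence $\mathcal F_{z,a}(\hat\omega_z)\to1-a$ — recovers $F_a(x)\le1-a$ as a byproduct.
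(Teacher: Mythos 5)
Your first half is fine: the termwise weighted AM--GM bound $p_n^a p_{n-1}^{1-a}\le a\,p_n+(1-a)\,p_{n-1}$ followed by Abel summation gives $F_a(x)\le(1-a)\bigl(1-(N+1)p_N\bigr)\le 1-a$, a correct and essentially equivalent alternative to the Jensen argument of Lemma \ref{lem:bounded}.

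The main inequality, however, is not proved: you set up a reasonable finite-dimensional KKT framework, but the decisive step is precisely the one you defer (``the hard part will be to show\dots'', ``I expect this to hinge on\dots Granting it\dots''). What is missing is the whole analysis the paper devotes Lemmas \ref{lem:nu}--\ref{lem:limq0} to: after the change of variables $p_n=\bar x_n^{1/a}$, $w_n=p_{n+1}/p_n$, one must (i) turn the stationarity conditions into the second-order recursion \eqref{eq:z_n-equation}; (ii) prove that $s\mapsto\nu(s,a)$ is strictly increasing (Lemma \ref{lem:nu}) and use it, via an induction that exploits $w_N=0$, to exclude $w_0>\xi$ and obtain the sandwich $\xi\ge w_0\ge w_1\ge\ldots\ge w_N=0$ (Lemma \ref{lem:zdecr}); (iii) pass to $N\to\infty$ along a subsequence and show the limiting ratios are strictly positive and all equal to $\xi$ (Lemmas \ref{lem:znric} and \ref{lem:limzn}), so that the maximizers converge pointwise to the geometric law $(1-\xi)\xi^n$ (Lemmas \ref{lem:limqn} and \ref{lem:limq0}); and (iv) verify by dominated convergence that the entropy and $F_a$ values converge to those of $\hat\omega_\xi=\hat\omega_z^{1/a}/\mathrm{Tr}\,\hat\omega_z^{1/a}$, concluding via the monotonicity in $N$ of $\max_{\mathcal P_N}\mathcal F_{z,a}$. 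Note that your stated goal, a direct finite-$N$ statement that no KKT solution exceeds $\mathcal F_{z,a}(\hat\omega_z)$, is not what the paper establishes and is not obviously accessible: the finite-$N$ maximizer is never geometric (its last ratio vanishes), and the bound is only recovered in the limit $N\to\infty$. Two smaller points: including the ordering constraints $p_0\ge\ldots\ge p_N$ in the KKT setup creates the ``active ordering constraint'' cases you would then have to dispose of, whereas the paper avoids this entirely by maximizing over $\mathcal P_N$ (nonnegativity and normalization only) and using Lemma \ref{lem:pas} merely to select a decreasing maximizer; and your positivity argument for the last component (gradient blow-up like $p_N^{a-1}$ against a logarithmic entropy term) is plausible but would need the same care the paper puts into the dual-feasibility contradiction \eqref{eq:KKTdf}. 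As it stands, the proposal identifies the correct strategy but omits its technical core, so it does not yet constitute a proof.
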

\begin{prop}
Theorem \ref{thm:logsx} implies Theorem \ref{thm:logs}.
\end{prop}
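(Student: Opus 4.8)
The plan is to strip a general finite-rank positive operator $\hat{X}$ down to the finite nonnegative vector that appears in Theorem \ref{thm:logsx}, applying in turn the three lemmas already proved, until the finite-dimensional estimate can be quoted directly. Nothing new needs to be invented for this step; it is purely a matter of bookkeeping between operators and their diagonal rearrangements.

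First I would invoke Lemma \ref{lem:Tr}: since $F_a$ and $\mathcal{F}_{z,a}$ are invariant under rescaling $\hat{X}$ by a positive constant, and since finiteness of the rank of $\hat{X}$ guarantees $0<\mathrm{Tr}\,\hat{X}^{1/a}<\infty$, I may assume without loss of generality that $\mathrm{Tr}\,\hat{X}^{1/a}=1$. Next I would apply Lemma \ref{lem:pas}, which gives $F_a(\hat{X})\le F_a(\hat{X}^\downarrow)$ and $\mathcal{F}_{z,a}(\hat{X})\le\mathcal{F}_{z,a}(\hat{X}^\downarrow)$; since $\hat{X}$ and $\hat{X}^\downarrow$ share the same spectrum, the normalization $\mathrm{Tr}\,(\hat{X}^\downarrow)^{1/a}=1$ persists, so it suffices to prove both claimed inequalities for the passive operator $\hat{X}^\downarrow$. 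A finite-rank passive operator has the form $\hat{X}^\downarrow=\sum_{n=0}^N x_n\,|n\rangle\langle n|$ for some $N\in\mathbb{N}$, with $x=(x_0,\dots,x_N)\in\mathbb{R}^{N+1}$ having nonnegative components, and the normalization is exactly condition \eqref{eq:Trx}. Lemma \ref{lem:dlX} then identifies $F_a(x)=F_a(\hat{X}^\downarrow)$, $S_a(x)=S_a(\hat{X}^\downarrow)$ and $\mathcal{F}_{z,a}(x)=\mathcal{F}_{z,a}(\hat{X}^\downarrow)$, so Theorem \ref{thm:logsx} yields $\mathcal{F}_{z,a}(x)\le\mathcal{F}_{z,a}(\hat{\omega}_z)$ and $F_a(x)\le 1-a$, which transfer back to $\hat{X}^\downarrow$ and hence, by the two preceding steps, to $\hat{X}$; this is precisely \eqref{logs} and \eqref{eq:Fa}. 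It is worth noting that the right-hand side $\mathcal{F}_{z,a}(\hat{\omega}_z)$ is literally the same quantity in both theorems: the thermal state $\hat{\omega}_z$ has infinite rank and is never truncated, so no additional comparison between finite and infinite systems enters the argument.

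Thus the reduction presents no genuine obstacle; it is simply the composition of Lemmas \ref{lem:Tr}, \ref{lem:pas} and \ref{lem:dlX}, the one point to watch being that finite rank is exactly what makes $N$ finite and the rescaling of the first step legitimate. The substance of the matter lies entirely in proving the finite-dimensional Theorem \ref{thm:logsx}, which I expect to be the hard part and which is to be handled through the Karush--Kuhn--Tucker conditions for the constrained maximization of $\mathcal{F}_{z,a}$ over the simplex defined by \eqref{eq:Trx}, together with the separate elementary bound $F_a(x)\le 1-a$.
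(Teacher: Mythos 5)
Your proposal is correct and follows essentially the same route as the paper: both reduce via Lemma \ref{lem:Tr} (rescaling), Lemma \ref{lem:pas} (Fock rearrangement) and Lemma \ref{lem:dlX} (identification with the finite vector satisfying \eqref{eq:Trx}) before invoking Theorem \ref{thm:logsx}, differing only in the order in which the three lemmas are applied.
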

\begin{proof}
Lemma \ref{lem:dlX} and Theorem \ref{thm:logsx} imply that \eqref{logs} and \eqref{eq:Fa} hold for any positive operator $\hat{X}$ with finite rank, diagonal in the Fock basis and with $\mathrm{Tr}\,\hat{X}^{1/a}=1$.
From Lemma \ref{lem:pas}, \eqref{logs} and \eqref{eq:Fa} hold for any positive operator $\hat{X}$ with finite rank and with $\mathrm{Tr}\,\hat{X}^{1/a}=1$, and the claim of Theorem \ref{thm:logs} follows from Lemma \ref{lem:Tr}.
\end{proof}

\begin{lem}\label{lem:bounded}
For any $x\in\mathbb{R}^{N+1}$ satisfying \eqref{eq:Trx},
\begin{equation}\label{eq:bounded}
F_a(x)\le 1-a\;.
\end{equation}
\begin{proof}
We define for any $n=0,\,\ldots,\,N$
\begin{equation}
p_n:=x_n^\frac{1}{a}\;,\qquad I:=\{n\in\{0,\,\ldots,\,N\}:p_n>0\}\;,\qquad  E:=\sum_{n\in I} n\,p_n\;,
\end{equation}
such that
\begin{equation}\label{eq:Fp}
F_a(x) = \sum_{n\in I} n\,p_n\left(\left(\frac{p_{n-1}}{p_n}\right)^{1-a} - 1\right)\;.
\end{equation}
From the hypothesis \eqref{eq:Trx}, $p$ is a probability measure on $I$.
If $E=0$, then $p_0=1$ and $p_1=\ldots=p_N=0$, and the right-hand side of \eqref{eq:Fp} vanishes.
We can then suppose $E>0$.
Since the function $s\mapsto s^{1-a}-1$ is concave, Jensen's inequality implies
\begin{align}
E\sum_{n\in I} \frac{np_n}{E}\left(\left(\frac{p_{n-1}}{p_n}\right)^{1-a}-1\right) &\le E\left(\left(\sum_{n\in I}\frac{np_n}{E}\frac{p_{n-1}}{p_n}\right)^{1-a}-1\right)\nonumber\\
&=E\left(\left(\frac{1}{E}\sum_{n=1}^N n\,p_{n-1}\right)^{1-a}-1\right)\nonumber\\
&=E\left(\left(\frac{E+1-\left(N+1\right)p_N}{E}\right)^{1-a}-1\right)\nonumber\\
&\le E\left(\left(1+\frac{1}{E}\right)^{1-a}-1\right)\le1-a\;,
\end{align}
where in the last step we have used the inequality
\begin{equation}
(1+s)^{1-a} \le 1+\left(1-a\right)s\;,\qquad s\ge0\;.
\end{equation}
\end{proof}
\end{lem}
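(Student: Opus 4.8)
The plan is to convert the constrained maximization of $F_a$ into a single scalar inequality by means of Jensen's inequality. First I would reduce to the case of nonnegative components: replacing each $x_n$ by $|x_n|$ leaves $\sum_{n=0}^N|x_n|^{1/a}$ and hence the constraint \eqref{eq:Trx} unchanged, while in the definition \eqref{eq:defFa} only the factor $x_n$ in the term $x_n\,|x_{n-1}|^{1/a-1}$ carries a sign, and since $|x_{n-1}|^{1/a-1}\ge0$ this term can only increase. Thus $F_a(x)\le F_a(|x|)$ and it suffices to treat $x_n\ge0$. Setting $p_n:=x_n^{1/a}$, the constraint says exactly that $p$ is a probability vector, and substituting $x_n=p_n^a$ into \eqref{eq:defFa} gives $x_n\,x_{n-1}^{1/a-1}=p_n^{a}p_{n-1}^{1-a}$ and $x_n^{1/a}=p_n$, so that
\begin{equation}
F_a(x)=\sum_{n:\,p_n>0} n\,p_n\left(\left(\frac{p_{n-1}}{p_n}\right)^{1-a}-1\right).
\end{equation}

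The key idea is then to introduce the energy $E:=\sum_n n\,p_n$ and read $n\,p_n/E$ as a probability distribution over the indices. If $E=0$ all mass sits at $n=0$, the displayed sum is empty and $F_a(x)=0\le1-a$ since $a<1$; so I would assume $E>0$. Because $0<1-a<1$, the map $s\mapsto s^{1-a}-1$ is concave and increasing on $[0,\infty)$, and Jensen's inequality applied with weights $n\,p_n/E$ and points $s_n=p_{n-1}/p_n$ bounds the weighted average of $s_n^{1-a}-1$ by its value at the weighted mean $\overline{s}=\frac1E\sum_{n:\,p_n>0} n\,p_{n-1}$, giving $F_a(x)\le E\bigl(\overline{s}^{\,1-a}-1\bigr)$.

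To finish I would bound $\overline{s}$. Extending the sum to all $n=1,\dots,N$ only adds nonnegative terms, so $\overline{s}\le\frac1E\sum_{n=1}^N n\,p_{n-1}$, and the reindexing $\sum_{n=1}^N n\,p_{n-1}=\sum_{m=0}^{N-1}(m+1)p_m=E+1-(N+1)p_N$ yields $\overline{s}\le\bigl(E+1-(N+1)p_N\bigr)/E\le1+\frac1E$, the last step using $(N+1)p_N\ge0$. Monotonicity of $s\mapsto s^{1-a}$ and then the tangent-line estimate $(1+s)^{1-a}\le1+(1-a)s$ for $s\ge0$ with $s=\frac1E$ collapse the bound to $E\bigl((1+\tfrac1E)^{1-a}-1\bigr)\le E\cdot\frac{1-a}{E}=1-a$. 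I expect the only delicate points to be the choice of the Jensen weights $n\,p_n/E$, which is what makes the favourable boundary term $-(N+1)p_N$ appear, and the observation that the elementary inequality $(1+s)^{1-a}\le1+(1-a)s$ is precisely what produces the sharp constant $1-a$; the rest is routine bookkeeping.
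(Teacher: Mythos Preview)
Your proof is correct and follows essentially the same route as the paper: rewrite $F_a$ in terms of the probability vector $p_n=x_n^{1/a}$, apply Jensen's inequality to the concave function $s\mapsto s^{1-a}-1$ with the size-biased weights $np_n/E$, reindex $\sum_{n=1}^N n\,p_{n-1}=E+1-(N{+}1)p_N$, and close with the tangent-line bound $(1+s)^{1-a}\le1+(1-a)s$. Your version is in fact slightly more careful than the paper's in two minor respects: you make the reduction to nonnegative $x_n$ explicit (the lemma is stated for all $x\in\mathbb{R}^{N+1}$, while the paper silently writes $p_n=x_n^{1/a}$), and you note that passing from $\sum_{n:\,p_n>0} n\,p_{n-1}$ to $\sum_{n=1}^N n\,p_{n-1}$ is an inequality rather than an equality, which the paper glosses over.
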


Let $\mathcal{P}_N\subset \mathbb{R}^{N+1}$ be the set of vectors with nonnegative components satisfying \eqref{eq:Trx}.
Since $0<a<1$, $\mathcal{F}_{z,a}$ is continuous on $\mathbb{R}^{N+1}$ and $\mathcal{P}_N$ is compact.
Then, the restriction of $\mathcal{F}_{z,a}$ to $\mathcal{P}_N$ admits a global maximizer $\bar{x}\in\mathcal{P}_N$.

\begin{lem}\label{lem:x*}
The restriction of $\mathcal{F}_{z,a}$ to $\mathcal{P}_N$ always admits a global maximizer $\bar{x}\in\mathcal{P}_N$ such that
\begin{equation}
\bar{x}_0\ge\ldots\ge \bar{x}_N\ge0\;.
\end{equation}
\begin{proof}
Let $\bar{x}\in\mathcal{P}_N$ be a global maximizer of the restriction of $\mathcal{F}_{z,a}$ to $\mathcal{P}_N$.
Lemma \ref{lem:pas} and Lemma \ref{lem:dlX} imply that
\begin{equation}
\mathcal{F}_{z,a}\left(\bar{x}\right) \le \mathcal{F}_{z,a}\left(\bar{x}^{\downarrow}\right)\;,
\end{equation}
where $\bar{x}^{\downarrow}$ is the decreasing rearrangement of $\bar{x}$, i.e., $\bar{x}^{\downarrow}_n = \bar{x}_{\sigma(n)}$ for any $n=0,\,\ldots,\,N$, where $\sigma\in S_{N+1}$ is a permutation such that
\begin{equation}
\bar{x}_{\sigma(0)} \ge\ldots \ge \bar{x}_{\sigma(N)}\ge0\;.
\end{equation}
Since also $\bar{x}^{\downarrow}\in\mathcal{P}_N$, $\bar{x}^{\downarrow}$ is a global maximizer of the restriction of $\mathcal{F}_{z,a}$ to $\mathcal{P}_N$, too, and the claim follows.
\end{proof}
\end{lem}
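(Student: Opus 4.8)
The plan is to leverage the existence of a global maximizer already guaranteed by the continuity of $\mathcal{F}_{z,a}$ and the compactness of $\mathcal{P}_N$, and then to show that one may always rearrange such a maximizer into decreasing order without lowering the value of $\mathcal{F}_{z,a}$. First I would fix an arbitrary global maximizer $\bar{x}\in\mathcal{P}_N$ and choose a permutation $\sigma\in S_{N+1}$ for which $\bar{x}_{\sigma(0)}\ge\ldots\ge\bar{x}_{\sigma(N)}\ge0$, writing $\bar{x}^{\downarrow}_n:=\bar{x}_{\sigma(n)}$. Since permuting the components alters neither their nonnegativity nor the value of $\sum_{n=0}^N|\bar{x}_n|^{1/a}$, the rearranged vector $\bar{x}^{\downarrow}$ again lies in $\mathcal{P}_N$, so it is an admissible competitor.

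The core of the argument is the inequality $\mathcal{F}_{z,a}(\bar{x})\le\mathcal{F}_{z,a}(\bar{x}^{\downarrow})$, which I would establish by transporting the problem to the operator picture. Setting $\hat{X}=\sum_{n=0}^N\bar{x}_n\,|n\rangle\langle n|$, the normalization \eqref{eq:Trx} is precisely the hypothesis $\mathrm{Tr}\,\hat{X}^{1/a}=1$ demanded by Lemma \ref{lem:dlX}, which then gives $\mathcal{F}_{z,a}(\bar{x})=\mathcal{F}_{z,a}(\hat{X})$. Because $\hat{X}$ is already Fock-diagonal, its Fock rearrangement is simply the diagonal operator built from the sorted eigenvalues, namely $\hat{X}^{\downarrow}=\sum_{n=0}^N\bar{x}^{\downarrow}_n\,|n\rangle\langle n|$; applying Lemma \ref{lem:dlX} a second time, now to $\bar{x}^{\downarrow}$, yields $\mathcal{F}_{z,a}(\hat{X}^{\downarrow})=\mathcal{F}_{z,a}(\bar{x}^{\downarrow})$. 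Lemma \ref{lem:pas}, which asserts $\mathcal{F}_{z,a}(\hat{X})\le\mathcal{F}_{z,a}(\hat{X}^{\downarrow})$ for every finite-rank positive operator, then closes the chain and delivers the desired inequality between the two vectors.

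To conclude, since $\bar{x}$ is a global maximizer and $\bar{x}^{\downarrow}\in\mathcal{P}_N$, the inequality $\mathcal{F}_{z,a}(\bar{x})\le\mathcal{F}_{z,a}(\bar{x}^{\downarrow})$ must in fact be an equality, so $\bar{x}^{\downarrow}$ is itself a global maximizer and its components are in decreasing order, proving the claim. I do not expect a genuine obstacle here, as all the analytic content resides in Lemmas \ref{lem:pas} and \ref{lem:dlX}; this statement is an assembly step. The one point warranting care is the bookkeeping linking the vector and operator descriptions, in particular the verification that the Fock rearrangement of the Fock-diagonal $\hat{X}$ coincides with the diagonal operator assembled from the sorted vector $\bar{x}^{\downarrow}$, so that Lemma \ref{lem:dlX} may legitimately be invoked at both ends of the chain.
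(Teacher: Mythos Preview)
Your proposal is correct and follows essentially the same approach as the paper: take any global maximizer, pass to the operator picture via Lemma~\ref{lem:dlX}, apply Lemma~\ref{lem:pas} to the Fock rearrangement, and translate back to conclude that the decreasingly sorted vector is again a maximizer. You have merely spelled out in more detail the bookkeeping that the paper compresses into the single sentence ``Lemma~\ref{lem:pas} and Lemma~\ref{lem:dlX} imply that $\mathcal{F}_{z,a}(\bar{x})\le\mathcal{F}_{z,a}(\bar{x}^{\downarrow})$.''
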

From now on, we assume that $\bar{x}=\bar{x}^{\downarrow}$, i.e., $\bar{x}_0\ge \ldots \ge \bar{x}_N\ge0$.
Let $N'$ be such that
\begin{equation}\label{eq:defN'}
\bar{x}_0\ge\ldots\ge \bar{x}_{N'}>0\;,\qquad \bar{x}_{N'+1}=\ldots= \bar{x}_N= 0\;.
\end{equation}
If $\bar{x}_0\ge \ldots \ge \bar{x}_N>0$, we set $N'=N$.
Even if $N'$ in principle depends on the choice of $\bar{x}$, we omit this dependence for the sake of a simpler notation.
\begin{lem}\label{lem:reg}
The hypotheses of the Karush--Kuhn--Tucker conditions (Theorem \ref{thm:KKT}) for the maximization of $\mathcal{F}_{z,a}:\mathbb{R}^{N+1}\to\mathbb{R}$ with the constraint functions
\begin{equation}
\psi_n(x) = x_n\;,\quad n=0,\,\ldots,\,N,\qquad \phi(x) = \sum_{n=0}^N |x_n|^\frac{1}{a} - 1
\end{equation}
are satisfied in $\bar{x}$.
Therefore, $\bar{x}$ satisfies the stationarity condition \eqref{eq:KKT}, i.e., there exists $\lambda\in\mathbb{R}$ such that
\begin{align}\label{eq:KKTimpl1}
\frac{\partial\mathcal{F}_{z,a}}{\partial x_n}(\bar{x}) &= \lambda\frac{\partial \phi}{\partial x_n}(\bar{x})\;,\qquad n=0,\,\ldots,\,N'\;,\\
\frac{\partial\mathcal{F}_{z,a}}{\partial x_n}(\bar{x}) &\le \lambda\frac{\partial \phi}{\partial x_n}(\bar{x})\;,\qquad n=N'+1,\,\ldots,\,N\;,\label{eq:KKTimpl2}
\end{align}
where $N'$ is as in \eqref{eq:defN'}.
An explicit computation of \eqref{eq:KKTimpl1} yields
\begin{align}\label{eq:KKTst}
&n\,\bar{x}_{n-1}^{\frac{1-a}{a}}-\frac{n}{a}\,\bar{x}_{n}^{\frac{1-a}{a}} + \left(n+1\right)\frac{1-a}{a}\,\bar{x}_{n+1}\,\bar{x}_{n}^{\frac{1-2a}{a}} - \frac{\mu(z,a)}{a}\left(\ln \bar{x}_n^{\frac{1}{a}}+1\right) \bar{x}_{n}^{\frac{1-a}{a}}\nonumber\\
&=\frac{\lambda}{a}\,\bar{x}_{n}^{\frac{1-a}{a}}\;,\qquad n=0,\,\ldots,\,N'\;,
\end{align}
while \eqref{eq:KKTimpl2} for $n=N'+1$ yields
\begin{equation}\label{eq:KKTdf}
\left(N'+1\right)\bar{x}_{N'}^{\frac{1-a}{a}}\le0\;.
\end{equation}
\end{lem}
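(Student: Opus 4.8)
The plan is to verify the hypotheses of the Karush--Kuhn--Tucker theorem (Theorem~\ref{thm:KKT}) at the global maximizer $\bar{x}$ and then read off the stationarity conditions. Recall that by our standing assumption $\bar{x}=\bar{x}^{\downarrow}$, so $\bar{x}_0\ge\ldots\ge\bar{x}_{N'}>0=\bar{x}_{N'+1}=\ldots=\bar{x}_N$. Two things must be checked: differentiability of the objective $\mathcal{F}_{z,a}$ and of the constraints $\psi_n,\phi$ at $\bar{x}$, and the constraint qualification. For differentiability, since $1/a>1$ the scalar functions $t\mapsto|t|^{1/a}$ and $t\mapsto|t|^{1/a}\ln|t|^{1/a}$ are $C^1$ on $\mathbb{R}$ with vanishing derivative at the origin, which handles $\phi$, the $\psi_n$ and $S_a$. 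In $F_a$ the only terms that are not obviously smooth near $\bar{x}$ are the cross terms $x_m|x_{m-1}|^{1/a-1}$ with $\bar{x}_{m-1}=0$; but then necessarily $m-1\ge N'+1$, hence $\bar{x}_m=0$ as well, and near the origin
\begin{equation}
|x_m|\,|x_{m-1}|^{1/a-1}\le\max\bra{|x_{m-1}|,|x_m|}^{1/a}\le\norm{(x_{m-1},x_m)}^{1/a}=o\bra{\norm{(x_{m-1},x_m)}}\;,
\end{equation}
so such a term is (Fréchet) differentiable at $\bar{x}$ with vanishing gradient. Hence $\mathcal{F}_{z,a}$ is differentiable at $\bar{x}$ --- although not necessarily of class $C^1$ in a neighbourhood, which is why some care is needed and the precise statement of Theorem~\ref{thm:KKT} is used.

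Next I would check the constraint qualification. The constraints active at $\bar{x}$ are the equality $\phi$ and the inequalities $\psi_n$ with $n=N'+1,\ldots,N$. Their gradients at $\bar{x}$ are $\nabla\psi_n(\bar{x})=e_n$, the $n$-th standard basis vector of $\mathbb{R}^{N+1}$, and
\begin{equation}
\nabla\phi(\bar{x})=\frac{1}{a}\sum_{n=0}^{N'}\bar{x}_n^{\frac{1-a}{a}}\,e_n\;,
\end{equation}
the contributions of the coordinates with $\bar{x}_n=0$ being absent because $\frac{1-a}{a}>0$. Since $\sum_n\bar{x}_n^{1/a}=1$ forces $\bar{x}_0>0$, we have $\nabla\phi(\bar{x})\neq0$, and because $e_{N'+1},\ldots,e_N$ are supported on coordinates on which $\nabla\phi(\bar{x})$ vanishes, the family $\{\nabla\phi(\bar{x}),e_{N'+1},\ldots,e_N\}$ is linearly independent. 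This is the constraint qualification required by Theorem~\ref{thm:KKT}, which therefore provides a multiplier $\lambda\in\mathbb{R}$ together with nonnegative multipliers for the $\psi_n$; complementary slackness forces the latter to vanish for $n\le N'$, yielding the equality \eqref{eq:KKTimpl1}, and allows them to be positive for $n>N'$, yielding the inequality \eqref{eq:KKTimpl2}. (If $N'=N$ the inequality part is vacuous.)

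It remains to compute the derivatives at $\bar{x}$. One has $\frac{\partial\phi}{\partial x_n}(\bar{x})=\frac1a\bar{x}_n^{\frac{1-a}{a}}$, so $\lambda\frac{\partial\phi}{\partial x_n}(\bar{x})=\frac{\lambda}{a}\bar{x}_n^{\frac{1-a}{a}}$; differentiating \eqref{eq:defFa} gives
\begin{equation}
\frac{\partial F_a}{\partial x_n}(\bar{x})=n\,\bar{x}_{n-1}^{\frac{1-a}{a}}-\frac{n}{a}\,\bar{x}_n^{\frac{1-a}{a}}+\left(n+1\right)\frac{1-a}{a}\,\bar{x}_{n+1}\,\bar{x}_n^{\frac{1-2a}{a}}
\end{equation}
(with the convention that a term carrying an out-of-range index is absent), and $\frac{\partial S_a}{\partial x_n}(\bar{x})=-\frac1a\bar{x}_n^{\frac{1-a}{a}}\bra{\ln\bar{x}_n^{1/a}+1}$. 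For $n=0,\ldots,N'$ all the $\bar{x}_n$ occurring are positive, and substituting these expressions into \eqref{eq:KKTimpl1} is precisely \eqref{eq:KKTst}. For $n=N'+1$ we have $\bar{x}_{N'+1}=0$, hence $\frac{\partial\phi}{\partial x_{N'+1}}(\bar{x})=0$; moreover in $\frac{\partial\mathcal{F}_{z,a}}{\partial x_{N'+1}}(\bar{x})$ the term $-\frac{N'+1}{a}\bar{x}_{N'+1}^{\frac{1-a}{a}}$ vanishes since $\frac{1-a}{a}>0$, the $S_a$ contribution vanishes since $t^{\frac{1-a}{a}}\ln t\to0$ as $t\to0^+$, and the term $\left(N'+2\right)\frac{1-a}{a}\bar{x}_{N'+2}\,\bar{x}_{N'+1}^{\frac{1-2a}{a}}$ has actual value $0$ at $\bar{x}$ by the differentiability argument above (as $\bar{x}_{N'+2}=0$), leaving $\frac{\partial\mathcal{F}_{z,a}}{\partial x_{N'+1}}(\bar{x})=\left(N'+1\right)\bar{x}_{N'}^{\frac{1-a}{a}}$; then \eqref{eq:KKTimpl2} at $n=N'+1$ reads exactly \eqref{eq:KKTdf}. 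The main obstacle is precisely the differentiability of $F_a$ at boundary configurations with consecutive vanishing coordinates: the point is that the offending cross terms are not merely continuous but differentiable there with zero gradient, which both legitimizes invoking Theorem~\ref{thm:KKT} and makes $\frac{\partial\mathcal{F}_{z,a}}{\partial x_{N'+1}}(\bar{x})$ come out cleanly.
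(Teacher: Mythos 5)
Your proposal is correct and follows essentially the same route as the paper: verify the KKT hypotheses (continuity, differentiability of $\mathcal{F}_{z,a}$ at $\bar{x}$ including the delicate cross terms $x_m|x_{m-1}|^{\frac{1}{a}-1}$ at consecutive vanishing coordinates, nonvanishing of $\nabla\phi(\bar{x})$, and linear independence of the active gradients), then compute the partial derivatives to obtain \eqref{eq:KKTst} and \eqref{eq:KKTdf}. Your direct estimate $|x_m|\,|x_{m-1}|^{\frac{1}{a}-1}\le\max\bra{|x_{m-1}|,|x_m|}^{\frac{1}{a}}$ is just an inline re-proof of the paper's Lemma \ref{lem:diff}, and your complementary-slackness phrasing of the equality/inequality split is equivalent to the paper's use of multipliers only for active constraints.
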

\begin{proof}
Let us prove that the hypotheses of Theorem \ref{thm:KKT} are satisfied.
\begin{description}
\item[(a)] Since $0<a<1$, both $F_a$ and $\phi$ are continuous on $\mathbb{R}^{N+1}$, hence also $\mathcal{F}_{z,a}$ is continuous.
\item[(b)] The functions $\phi$ and $S_a$ are continuously differentiable on $\mathbb{R}^{N+1}$.
Let us prove that $F_a$ is differentiable in $\bar{x}$.
If $N'=N$ or $N'=N-1$, $F_a$ is continuously differentiable in a neighbourhood of $\bar{x}$.
Let us then assume $N'\le N-2$.
We recall that $\bar{x}_0\ge\ldots\ge \bar{x}_{N'}>0 = \bar{x}_{N'+1} = \ldots = \bar{x}_N$.
The only terms in the sum \eqref{eq:defFa} that are not continuously differentiable in a neighbourhood of $\bar{x}$ are
\begin{equation}\label{eq:sumnd}
\sum_{n=N'+2}^N n\,x_n\left|x_{n-1}\right|^{\frac{1-a}{a}}\;.
\end{equation}
From Lemma \ref{lem:diff}, the sum \eqref{eq:sumnd} is differentiable in $\bar{x}$, and the claim follows.
\item[(c)] Since $\phi(\bar{x})=0$ and $\bar{x}$ is nonnegative and decreasing, we have $\bar{x}_0>0$ and
\begin{equation}
\frac{\partial \phi}{\partial x_0}(\bar{x}) = \frac{\left(\bar{x}_0\right)^{\frac{1}{a}-1}}{a}>0\;,
\end{equation}
hence $\nabla \phi(\bar{x})\neq 0$.
\item[(d)] We have $I=\{N'+1,\,\ldots,\,N\}$.
If $N'=N$, $I$ is empty and the condition reduces to $\nabla \phi(\bar{x})\neq0$.
Let us then suppose $N'<N$.
We have for any $m=0,\,\ldots,\,N$ and any $n\in I$
\begin{equation}
\frac{\partial \psi_n}{\partial x_m}(\bar{x}) = \delta_{mn}\;,\qquad \frac{\partial \phi}{\partial x_m}(\bar{x}) = \frac{\left(\bar{x}_m\right)^{\frac{1}{a}-1}}{a}\;.
\end{equation}
Since $\bar{x}_m=0$ iff $m\in I$, the gradients $\nabla \phi(\bar{x})$ and $\{\nabla \psi_n(\bar{x})\}_{n\in I}$ are independent.
\end{description}
The conditions \eqref{eq:KKTimpl1} and \eqref{eq:KKTimpl2} follows from \eqref{eq:KKT} using that $\psi_n(x)=x_n$.
The condition \eqref{eq:KKTst} follows computing explicitly the derivatives in \eqref{eq:KKTimpl1}.
Since $\bar{x}_{N'+1}=0$ and $0<a<1$ we have
\begin{align}
\frac{\partial \phi}{\partial x_{N'+1}}(\bar{x}) &= \frac{\bar{x}_{N'+1}^{\frac{1-a}{a}}}{a} = 0\;,\nonumber\\
\frac{\partial S_a}{\partial x_{N'+1}}(\bar{x}) &= \left.-\frac{x_{N'+1}^\frac{1-a}{a}}{a}\left(\ln x_{N'+1}^{\frac{1}{a}} + 1\right)\right|_{x_{N'+1}=0} = 0\;.
\end{align}
Moreover,
\begin{align}
\frac{\partial F_a}{\partial x_{N'+1}}(\bar{x}) &= \left.\frac{\partial}{\partial x_{N'+1}}\left(\left(N'+1\right)x_{N'+1}\,\bar{x}_{N'}^{\frac{1-a}{a}} - x_{N'+1}^\frac{1}{a}\right)\right|_{x_{N'+1} = 0}\nonumber\\
&= \left(N'+1\right)\bar{x}_{N'}^{\frac{1-a}{a}}\;,
\end{align}
hence \eqref{eq:KKTimpl2} for $n=N'+1$ becomes \eqref{eq:KKTdf}.
\end{proof}

\begin{lem}
Let $\bar{x}\in\mathcal{P}_N$ be a global maximizer of the restriction of $\mathcal{F}_{z,a}$ to $\mathcal{P}_N$ such that $\bar{x}_0\ge \ldots \ge \bar{x}_N\ge0$.
Then, $\bar{x}_0\ge\ldots\ge \bar{x}_N>0$, i.e., its $N'$ defined as in \eqref{eq:defN'} is equal to $N$.
\begin{proof}
If $N'<N$, condition \eqref{eq:KKTdf} is in contradiction with $\bar{x}_{N'} >0$.
\end{proof}
\end{lem}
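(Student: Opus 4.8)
The plan is to derive a contradiction from the assumption $N'<N$ using the Karush--Kuhn--Tucker stationarity conditions already established for $\bar x$. First I would invoke Lemma \ref{lem:reg}: since $\bar x$ is a global maximizer of $\mathcal{F}_{z,a}$ on the compact set $\mathcal{P}_N$ and is in decreasing-rearranged form, the regularity hypotheses of Theorem \ref{thm:KKT} hold at $\bar x$ for the equality constraint $\phi(x)=\sum_n|x_n|^{1/a}-1$ and the inequality constraints $\psi_n(x)=x_n\ge 0$, so $\bar x$ satisfies \eqref{eq:KKTimpl1}--\eqref{eq:KKTimpl2}. The only piece of this that is needed here is the KKT inequality \eqref{eq:KKTimpl2} evaluated at the first vanishing coordinate $n=N'+1$. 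Using the partial derivatives computed in the proof of Lemma \ref{lem:reg} --- namely $\frac{\partial\phi}{\partial x_{N'+1}}(\bar x)=\frac{\partial S_a}{\partial x_{N'+1}}(\bar x)=0$ while $\frac{\partial F_a}{\partial x_{N'+1}}(\bar x)=(N'+1)\,\bar x_{N'}^{\frac{1-a}{a}}$ --- this inequality collapses to the single scalar relation \eqref{eq:KKTdf}, i.e.\ $(N'+1)\,\bar x_{N'}^{\frac{1-a}{a}}\le 0$.

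Now I would simply observe that \eqref{eq:KKTdf} is impossible once $N'<N$. By the very definition \eqref{eq:defN'} of $N'$ we have $\bar x_{N'}>0$, and since $0<a<1$ the exponent $\frac{1-a}{a}$ is positive, so $\bar x_{N'}^{\frac{1-a}{a}}>0$; combined with $N'+1\ge 1>0$ this makes the left-hand side of \eqref{eq:KKTdf} strictly positive, a contradiction. Hence the assumption $N'<N$ is untenable, and therefore $N'=N$, i.e.\ $\bar x_0\ge\ldots\ge\bar x_N>0$.

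There is essentially no remaining obstacle at this point: all the analytic work --- verifying that Theorem \ref{thm:KKT} applies at the boundary maximizer $\bar x$ (in particular the differentiability of $F_a$ at $\bar x$, via Lemma \ref{lem:diff}, and the linear independence of the active-constraint gradients) and extracting \eqref{eq:KKTdf} --- has been discharged in Lemma \ref{lem:reg}, so what is left is the elementary sign observation above. It is worth stressing why this is the statement one wants: it shows the maximizer has full support, which is precisely what licenses using the $N'=N$ equality conditions \eqref{eq:KKTst} simultaneously for all $n=0,\ldots,N$ in the next step, where one solves the resulting recursion to identify $\bar x$ with a truncated geometric sequence --- that is, with a thermal state $\hat\omega_z$.
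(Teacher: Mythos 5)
Your argument is exactly the paper's: the contradiction between the KKT boundary condition \eqref{eq:KKTdf}, supplied by Lemma \ref{lem:reg}, and the strict positivity of $\bar{x}_{N'}^{\frac{1-a}{a}}$ (since $\bar{x}_{N'}>0$ and $\frac{1-a}{a}>0$) rules out $N'<N$. The proposal is correct and matches the paper's proof, merely spelling out the sign observation that the paper states in one line.
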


The following change of variables simplifies the KKT stationarity condition \eqref{eq:KKTst}:
\begin{equation}\label{eq:p}
p_n := \bar{x}_n^{\frac{1}{a}}\;,\quad n = 0,\,\ldots,\,N\;,\qquad p_{-1}:=1\;,\qquad p_{N+1}:=0\;.
\end{equation}
From the constraint \eqref{eq:Trx}, $p$ is a probability measure on $\{0,\ldots,\,N\}$.
We also define for any $0<s<1$
\begin{equation}
\xi := z^\frac{1}{a}\;,\quad \nu(s,a) := \mu(s^a,a) = \frac{a\,s^{a-1}-1+\left(1-a\right)s^a}{\ln s}\;,\quad h(s) := s^{a-1} -1\;.
\end{equation}
Bringing all the terms to the left-hand side, the KKT stationarity condition \eqref{eq:KKTst} reads
\begin{equation}\label{eq:condK}
\frac{p_n^{1-a}}{a}\,\mathcal{K}_n = 0\;,\qquad n=0,\,\ldots,\,N\;,
\end{equation}
where
\begin{align}\label{eq:el-equation}
\mathcal{K}_n &= n\,h\bra{\frac{p_n}{p_{n-1}}} + \frac{n\,p_n}{p_{n-1}}\;h'\bra{ \frac{p_n}{p_{n-1}}} -\left(n+1\right) \bra{\frac{p_{n+1}}{p_{n}}}^2 h'\bra{ \frac{p_{n+1}}{p_{n}}}\nonumber\\
&\phantom{=}-\nu(\xi,a) \left(\ln p_n +1\right) - \lambda\;,\qquad n=0,\,\ldots,\,N\;,
\end{align}
and the function $s\mapsto s^2\,h'(s)$ is set to $0$ in $s=0$ by continuity.

For $n = 0, \dots, N$ we define
\begin{equation}\label{eq:w}
w_n := \frac{p_{n+1}}{p_n}\;,
\end{equation}
so that $0< w_n \le 1$ for $n=0,\ldots,\,N-1$ and $w_N = 0$.
For $n =1, \ldots, N$ we have from \eqref{eq:condK}
\begin{align}\label{eq:ric0}
0&=\mathcal{K}_n - \mathcal{K}_{n-1}\nonumber\\
&=   n\,h(w_{n-1}) + n\,w_{n-1}\,h'(w_{n-1}) -\left(n+1\right) w_n^2\,h'(w_n) - \bra{n-1} h(w_{n-2})\nonumber\\
&\phantom{=} - \bra{n-1} w_{n-2}\,h'(w_{n-2}) + n w_{n-1}^2\,h'(w_{n-1})  - \nu(w,a) \ln w_{n-1}\;.
\end{align}
We notice that
\begin{equation}
\nu(s,a)=\frac{h(s)+s\,h'(s)-s^2\,h'(s)}{\ln s}\;.
\end{equation}
Then, \eqref{eq:ric0} can be recast as
\begin{align}\label{eq:z_n-equation}
&\bra{ n+1} \left(w_n^2\,h'(w_n) - w_{n-1}^2\,h'(w_{n-1})\right)\nonumber\\
&= \bra{n-1} \left(h(w_{n-1}) + w_{n-1}\,h'(w_{n-1}) - h(w_{n-2}) - w_{n-2}\,h'(w_{n-2})\right)\nonumber\\
&\phantom{=}+  \left(\nu(w_{n-1},a)- \nu(w,a)\right) \ln w_{n-1}\;.
\end{align}

\begin{lem}\label{lem:nu}
For any $0<a<1$ and any $0<s<1$, the function $s\mapsto\nu(s,a)$ is strictly increasing and maps $(0,1)$ into $(-\infty,0)$.
\begin{proof}
We have to prove
\begin{equation}
 0 \overset{?}{<}\frac{\partial}{\partial s} \nu(s,a) = \frac{a(a-1)s^{a-1}(1-s) \ln s - as^{a-1}+(a-1)s^a +1}{s (\ln s )^2}\;,
\end{equation}
which reduces to the inequality
\begin{equation}\label{eq:dnu}
-a\left(1-a\right)\left(1-s\right) \ln s -a - \left(1-a\right)s + s^{1-a} \overset{?}{>} 0\;.
\end{equation}
Since the function $s\mapsto s^{1-a}$ is concave, we have for any $0<s<1$
\begin{equation}\label{eq:conca}
s^{1-a} > 1 - \left(1-a\right)\frac{1-s}{s^a}\;.
\end{equation}
Moreover,
\begin{equation}\label{eq:concln}
-a\ln s = \ln\frac{1}{s^a} > \frac{1}{s^a}-1\;,
\end{equation}
and \eqref{eq:dnu} follows since
\begin{align}
&-a\left(1-a\right)\left(1-s\right) \ln s -a - \left(1-a\right)s + s^{1-a}\nonumber\\
&>\left(1-a\right)\left(1-s\right)\left(\frac{1}{s^a}-1\right)-a-\left(1-a\right)s+1- \left(1-a\right)\frac{1-s}{s^a} = 0\;.
\end{align}

For the second part, it is sufficient to notice that $\lim_{s\to0}\nu(s,a)=-\infty$ and $\lim_{s\to1}\nu(1,a)=0$.
\end{proof}
\end{lem}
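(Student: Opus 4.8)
The plan is to get rid of the logarithm in the denominator by the substitution $\beta := -\ln s$, which presents $\nu(s,a)$ as the slope of a secant of a strictly concave function through the origin; the monotonicity and the range then both drop out of elementary convexity facts. With $s = \mathrm{e}^{-\beta}$, $\beta \in (0,\infty)$, I would set
\[
\phi(\beta) := 1 - a\,\mathrm{e}^{(1-a)\beta} - (1-a)\,\mathrm{e}^{-a\beta}\;.
\]
A direct substitution in \eqref{eq:mu00} (via $\nu(s,a)=\mu(s^a,a)$) gives $\nu(s,a) = \phi(\beta)/\beta$, and $\phi(0) = 1 - a - (1-a) = 0$, so $\nu(s,a) = (\phi(\beta) - \phi(0))/(\beta - 0)$ is exactly the slope of the secant of $\phi$ joining $0$ to $\beta$.

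Next I would check that $\phi$ is strictly concave on $[0,\infty)$. Differentiating twice,
\[
\phi'(\beta) = a(1-a)\left(\mathrm{e}^{-a\beta} - \mathrm{e}^{(1-a)\beta}\right)\;,\qquad \phi''(\beta) = -a(1-a)\left(a\,\mathrm{e}^{-a\beta} + (1-a)\,\mathrm{e}^{(1-a)\beta}\right)\;,
\]
and $\phi''(\beta) < 0$ for every $\beta$ since $0 < a < 1$. For a strictly concave $\phi$ with $\phi(0)=0$, the secant slope $\beta \mapsto \phi(\beta)/\beta$ is strictly decreasing on $(0,\infty)$: for $0 < \beta_1 < \beta_2$, writing $\beta_1 = t\beta_2$ with $t \in (0,1)$ and using strict concavity, $\phi(\beta_1) > t\,\phi(\beta_2) + (1-t)\,\phi(0) = t\,\phi(\beta_2)$, which rearranges to $\phi(\beta_1)/\beta_1 > \phi(\beta_2)/\beta_2$. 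Since $s \mapsto \beta = -\ln s$ is a strictly decreasing bijection of $(0,1)$ onto $(0,\infty)$, composing the two monotonicities shows $s \mapsto \nu(s,a)$ is strictly increasing on $(0,1)$.

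For the range I would compute the two endpoint limits. As $s \to 1^-$ one has $\beta \to 0^+$, and since $\phi(0)=0$, $\nu(s,a) = \phi(\beta)/\beta \to \phi'(0) = a(1-a)(1-1) = 0$; hence $\sup_{0<s<1}\nu(s,a) = 0$, not attained, and combined with strict monotonicity this already gives $\nu(s,a) < 0$ for all $s \in (0,1)$, i.e.\ the map takes values in $(-\infty,0)$. As $s \to 0^+$ one has $\beta \to \infty$, and since the term $-a\,\mathrm{e}^{(1-a)\beta}$ dominates both the constant and $-(1-a)\mathrm{e}^{-a\beta}$, $\phi(\beta)/\beta \to -\infty$, so in fact $\nu(\cdot,a)$ is a bijection of $(0,1)$ onto $(-\infty,0)$.

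I do not expect a real obstacle here. The only points deserving a line of care are the algebraic identification $\nu(s,a) = \phi(\beta)/\beta$ \emph{together with} $\phi(0)=0$ (so that the ``secant through the origin'' picture genuinely applies), and the uniform sign of $\phi''$, which is immediate from $0<a<1$. This route is shorter than differentiating $\nu$ in $s$ directly and delivers the monotonicity and both endpoint values at once.
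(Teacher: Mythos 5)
Your proof is correct, and it takes a genuinely different route from the paper. The paper works directly in the variable $s$: it differentiates $\nu$ with respect to $s$, reduces the positivity of $\partial_s\nu$ to the pointwise inequality \eqref{eq:dnu}, and then verifies that inequality by combining the concavity bound \eqref{eq:conca} for $s\mapsto s^{1-a}$ with a logarithm estimate, obtaining the two limits $s\to0$ and $s\to1$ separately for the range statement. You instead substitute $\beta=-\ln s$ and recognize $\nu(s,a)=\phi(\beta)/\beta$ with $\phi(\beta)=1-a\,\mathrm{e}^{(1-a)\beta}-(1-a)\,\mathrm{e}^{-a\beta}$, $\phi(0)=0$ and $\phi''<0$, so that $\nu$ is the secant slope through the origin of a strictly concave function; the strict monotonicity is then the standard decreasing-chord property, and the two endpoint limits come for free from $\phi'(0)=0$ and the exponential decay of $\phi(\beta)/\beta$ as $\beta\to\infty$. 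What your approach buys: it avoids computing and sign-checking $\partial_s\nu$ and the attendant elementary inequalities (note that the paper's intermediate bound \eqref{eq:concln} is printed with the inequality in the direction opposite to the standard $\ln x\le x-1$, so your argument also serves as an independent confirmation of the lemma), it delivers monotonicity and both limits in a single stroke, and it gives slightly more, namely that $\nu(\cdot,a)$ is a bijection of $(0,1)$ onto $(-\infty,0)$. What the paper's approach buys is that it stays entirely in the original variable and in the same style of tangent-line estimates used elsewhere in the proof of Theorem \ref{thm:logs}. The only points to state carefully in your write-up are exactly the ones you flag: the identity $\nu(s,a)=\phi(\beta)/\beta$ together with $\phi(0)=0$, and strictness of the concavity, both of which are immediate.
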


\begin{lem}\label{lem:zdecr}
Let $\bar{x}\in\mathcal{P}_N$ be a global maximizer of the restriction of $\mathcal{F}_{z,a}$ to $\mathcal{P}_N$ such that $\bar{x}_0\ge \ldots \ge \bar{x}_N\ge0$, and let $w\in\mathbb{R}^{N+1}$ be as in \eqref{eq:w}.
Then, $z^\frac{1}{a} = \xi \ge w_0 \ge w_1 \ge\ldots \ge w_N = 0$.
\begin{proof}
Let us first assume that $\xi<w_0$, hence $\nu(\xi,a)<\nu(w_0,a)$ from Lemma \ref{lem:nu}.
We then have from \eqref{eq:z_n-equation} with $n =1$
\begin{equation}
2 \left( w_1^2\,h'(w_1) - w_{0}^2\,h'(w_{0}) \right) = \left(\nu(w_0,a)  - \nu(w,a) \right)\ln w_{0} <0\;,
\end{equation}
thus $w_1 > w_0$ since the function
\begin{equation}
s\mapsto s^2\,h'(s) = -\left(1-a\right)s^a
\end{equation}
is strictly decreasing.

We will prove by induction that $w_n > w_{n-1}$ for $n =1, \ldots, N$.
This yields a contradiction for $n = N$ since $w_N = 0$.
The claim is true for $n=1$.
We can assume from the inductive hypothesis $w_{n-1}>w_{n-2}>\ldots>w_0$, $n\ge2$.
We then have $\nu(w_{n-1},a)>\nu(w_0,a)>\nu(\xi,a)$.
Since the function
\begin{equation}
s\mapsto h(s) + s\,h'(s) = \frac{a}{s^{1-a}} - 1
\end{equation}
is decreasing, we get from \eqref{eq:z_n-equation}
\begin{align}
&\bra{ n+1} \left( w_n^2\,h'(w_n) - w_{n-1}^2\,h'(w_{n-1})\right)\nonumber\\
&= \bra{n-1} \left(h(w_{n-1}) + w_{n-1}\,h'(w_{n-1}) - h(w_{n-2}) - w_{n-2}\,h'(w_{n-2})\right)  \nonumber\\
&\phantom{=} +\left(\nu(w_{n-1},a)  - \nu(w,a)\right) \ln w_{n-1}\nonumber\\
&<  \bra{n-1} \left( h(w_{n-1}) + w_{n-1}\,h'(w_{n-1}) - h(w_{n-2}) - w_{n-2}\,h'(w_{n-2})\right)<0\;.
\end{align}
Since the function $s\mapsto s^2 h'(s)$ is strictly decreasing, we have $w_{n} > w_{n-1}$.

We must then have $w_0<\xi$, hence $\nu\left(w_0,a\right) \le \nu(\xi,a)$.
Thus, from \eqref{eq:z_n-equation} with $n =1$,
\begin{equation}
2 \left( w_1^2\,h'(w_1) - w_{0}^2\,h'(w_{0})\right) = \left(\nu(w_0,a)  - \nu(\xi,a)\right) \ln w_{0} \ge 0\;,
\end{equation}
thus $w_1 \le w_0$ since the function $s\mapsto s^2 h'(s)$ is strictly decreasing.

Let us prove by induction that $w_n \le w_{n-1}$ for $n =1, \ldots, N$.
The claim is true for $n=1$.
We can assume from the inductive hypothesis $w_{n-1}\le w_{n-2}\le\ldots\le w_0$, $n\ge2$, hence $\nu\left(w_{n-1},a\right)\le\nu(w_0,a)\le\nu(\xi,a)$.
Since the function $s\mapsto h(s) + s h'(s)$ is decreasing, we get from \eqref{eq:z_n-equation}
\begin{align}
&\bra{ n+1} \left( w_n^2\,h'(w_n) - w_{n-1}^2\,h'(w_{n-1})\right)\nonumber\\
& = \bra{n-1} \left(h(w_{n-1}) + w_{n-1}\,h'(w_{n-1}) - h(w_{n-2}) - w_{n-2}\,h'(w_{n-2})\right) \nonumber\\
&\phantom{=}+\left(\nu(w_{n-1},a)  - \nu(\xi,a)\right) \ln w_{n-1}\nonumber\\
 & \ge  \bra{n-1} \left(h(w_{n-1}) + w_{n-1}\,h'(w_{n-1}) - h(w_{n-2}) - w_{n-2}\,h'(w_{n-2})\right) \ge 0\;,
\end{align}
and since the function $s\mapsto s^2 h'(s)$ is strictly decreasing, we have $w_{n} \le  w_{n-1}$.
\end{proof}
\end{lem}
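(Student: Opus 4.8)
The plan is to exploit the Karush--Kuhn--Tucker stationarity conditions established above. Since the preceding lemma guarantees $\bar x_0\ge\dots\ge\bar x_N>0$, every constraint $\psi_n$ is inactive at $\bar x$, so \eqref{eq:condK} holds with $\mathcal K_n=0$ for all $n$, and hence the second-difference identity \eqref{eq:z_n-equation} holds for $n=1,\dots,N$. In the variables $w_n=p_{n+1}/p_n$ of \eqref{eq:w} one has $0<w_n\le1$ for $n<N$ (because $\bar x$ is nonincreasing) and $w_N=0$ by the convention $p_{N+1}:=0$ in \eqref{eq:p}. The whole argument then rests on three elementary monotonicity facts on $(0,1]$: the map $s\mapsto s^2h'(s)=-(1-a)s^a$ is strictly decreasing; the map $s\mapsto h(s)+sh'(s)=a\,s^{a-1}-1$ is strictly decreasing; and, by Lemma \ref{lem:nu}, $s\mapsto\nu(s,a)$ is strictly increasing with values in $(-\infty,0)$.

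The key step is to prove $w_0\le\xi$, and I would argue by contradiction. If $w_0>\xi$ then $\nu(w_0,a)>\nu(\xi,a)$, so the $n=1$ case of \eqref{eq:z_n-equation} reads $2\bigl(w_1^2h'(w_1)-w_0^2h'(w_0)\bigr)=(\nu(w_0,a)-\nu(\xi,a))\ln w_0\le0$, and strict monotonicity of $s\mapsto s^2h'(s)$ forces $w_1\ge w_0$, with equality only if $w_0=1$, a case in which the same identity propagates $w_n=1$ for every $n$, already contradicting $w_N=0$. I would then show by induction on $n=1,\dots,N$ that $w_n>w_{n-1}$: granting $w_{n-1}>\dots>w_0>\xi$, we get $\nu(w_{n-1},a)-\nu(\xi,a)>0$ while $\ln w_{n-1}\le0$, so the $\nu$-term on the right of \eqref{eq:z_n-equation} is $\le0$; moreover $w_{n-1}>w_{n-2}$ together with monotonicity of $s\mapsto h(s)+sh'(s)$ makes the $(n-1)$-term strictly negative, so the right side is negative and strict monotonicity of $s\mapsto s^2h'(s)$ gives $w_n>w_{n-1}$. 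Taking $n=N$ contradicts $w_N=0$; hence $w_0\le\xi$.

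Once $w_0\le\xi$, i.e.\ $\nu(w_0,a)\le\nu(\xi,a)$, the same scheme runs with all inequalities reversed: the $n=1$ case of \eqref{eq:z_n-equation} gives $w_1\le w_0$, and an induction of identical shape --- now using $w_{n-1}\le\xi<1$ so that $\ln w_{n-1}<0$, that $\nu(w_{n-1},a)\le\nu(\xi,a)$, and that $w_{n-1}\le w_{n-2}$ --- makes the right side of \eqref{eq:z_n-equation} nonnegative, whence $w_n\le w_{n-1}$. This yields $\xi\ge w_0\ge w_1\ge\dots\ge w_N=0$, which is the assertion.

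I expect the genuine difficulty to lie in the sign bookkeeping of the inductive step rather than in any deep idea: the right-hand side of \eqref{eq:z_n-equation} is a sum of two terms whose signs are a priori unrelated, and they cooperate only because the specific combination appearing in the KKT condition couples the monotonicity of $s\mapsto h(s)+sh'(s)$ with that of $\nu(\cdot,a)$. One must also track strictness carefully, since it is what closes the contradiction at $n=N$, and dispatch the degenerate cases $w_{n-1}=1$ and $w_0=\xi$, where $\ln w_{n-1}=0$ or the $\nu$-difference vanishes; these are routine but should not be skipped.
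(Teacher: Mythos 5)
Your proposal is correct and follows essentially the same route as the paper's proof: the KKT-derived recursion \eqref{eq:z_n-equation}, the strict monotonicity of $s\mapsto s^2h'(s)$, $s\mapsto h(s)+s\,h'(s)$ and $\nu(\cdot,a)$ (Lemma \ref{lem:nu}), a contradiction argument forcing $w_0\le\xi$ via an increasing induction clashing with $w_N=0$, and then a decreasing induction giving $\xi\ge w_0\ge\ldots\ge w_N$. Your explicit handling of the degenerate case $w_0=1$ (where $\ln w_0=0$ makes the paper's strict inequality at $n=1$ fail and one instead propagates $w_n=1$ to contradict $w_N=0$) is a minor refinement of the same argument, not a different approach.
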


We now consider the dependence on $N$.
For any $N\in\mathbb{N}$, let $\bar{x}^{(N)}$ be a maximizer of the restriction $\mathcal{F}_{z,a}$ to $\mathcal{P}_N$ such that $\bar{x}^{(N)}_0\ge\ldots\ge \bar{x}^{(N)}\ge0$, and let $p^{(N)}$ and $w^{(N)}\in\mathbb{R}^{N+1}$ be as in Eqs. \eqref{eq:p} and \eqref{eq:w}, respectively.
With a diagonal argument, there exists a subsequence $\left\{N_k\right\}_{k\in\mathbb{N}}$ such that for any $n \in \mathbb{N}$
\begin{equation}
\lim_{k \to \infty} w^{(N_k)}_n = w^\infty_n\;,\qquad\lim_{k\to\infty}p^{(N_k)}_n=p_n^\infty\;.
\end{equation}
Since monotonicity is preserved in the limit, we have
\begin{equation}\label{eq:monz}
1>\xi\ge w^\infty_0\ge w^\infty_1\ge\ldots\ge0\;.
\end{equation}

\begin{lem}\label{lem:znric}
The $w^\infty_n$ are all strictly positive and satisfy for any $n\ge1$ the recursive relation
\begin{align}\label{eq:z-equation}
&\bra{ n+1} \sqa{ \left(w^\infty_n\right)^2 h'(w^\infty_n) - \left(w^{\infty}_{n-1}\right)^2 h'(w^\infty_{n-1}) }\nonumber\\
&= \bra{n-1} \left( h\bra{w^\infty_{n-1}} + w^\infty_{n-1}\,h'\bra{ w^\infty_{n-1} } - h\bra{w^\infty_{n-2}} - w^\infty_{n-2}\,h'\bra{ w^\infty_{n-2}}\right)\nonumber\\
&\phantom{=} +\left(\nu(w^\infty_{n-1},a)  - \nu(w,a)\right) \ln w^\infty_{n-1}\;.
\end{align}
\begin{proof}
If $w^\infty_0=0$, since the sequence $n\mapsto w_n^{(N)}$ is decreasing we have for any $n\in\mathbb{N}$
\begin{equation}
\limsup_{k\to\infty}w_n^{(N_k)}\le\limsup_{k\to\infty}w_0^{(N_k)}=w^\infty_0=0\;,
\end{equation}
hence $w^\infty_n=\lim_{k\to\infty}w_n^{(N_k)}=0$ for any $n$.
Since $w^{(N)}_{n-1}\le w^{(N)}_{n-2}$ and the function $s\mapsto h(s)+sh'(s)$ is decreasing, we get from \eqref{eq:z_n-equation}
\begin{equation}\label{eq:mu0}
\nu(\xi,a)\le\nu\left(w_{n-1}^{(N_k)},a\right)+\left(n+1\right)\frac{\left(w_n^{(N_k)}\right)^2 h'\left(w_n^{(N_k)}\right)-\left(w_{n-1}^{(N_k)}\right)^2 h'\left(w_{n-1}^{(N_k)}\right)}{-\ln w^{(N_k)}_{n-1}}.
\end{equation}
Since $\lim_{s\to0}s^2h'(s)=0$ and $\lim_{s\to0}\nu(s,a)=-\infty$, the right-hand side of \eqref{eq:mu0} tends to $-\infty$ for $k\to\infty$, giving a contradiction.

We must then have $w^\infty_0>0$.
We proceed by induction on $n$.
From the inductive hypothesis, we can suppose
\begin{equation}
w^\infty_0=\lim_{k\to\infty}w_0^{(N_k)}\ge\ldots\ge\lim_{k\to\infty}w_{n-1}^{(N_k)}=w^\infty_{n-1}>0\;.
\end{equation}
Let us suppose $w^\infty_n=0$.
Since the function $s\mapsto s^2h'(s)$ is nonpositive and continuous for any $0\le s\le1$, taking the limit $k\to\infty$ in \eqref{eq:z_n-equation} with $n+1$ in place of $n$ we get
\begin{align}\label{eq:zn}
0 &\ge \left(n+2\right)\left(w_{n+1}^{\infty}\right)^2 h'(w_{n+1}^\infty)\nonumber\\
&= \lim_{k\to\infty}\left(\left(n+1\right)\left(h\left(w_n^{(N_k)}\right)+w_n^{(N_k)}\;h'\left(w_n^{(N_k)}\right)\right)\nonumber-\nu(\xi,a)\ln w_n^{(N_k)}\right)\nonumber\\
&\phantom{=}-h(w_{n-1}^\infty)-w_{n-1}^\infty\;h'(w_{n-1}^\infty)=\infty\;,
\end{align}
where we have used that $h(s)+sh'(s)=\mathcal{O}(s^{a-1})$ and $s^2h'(s)\to0$ for $s\to0$.
Since \eqref{eq:zn} contains a contradiction, we must have $w_n^\infty>0$, and \eqref{eq:z-equation} follows taking the limit $k\to\infty$ in \eqref{eq:z_n-equation}.
\end{proof}
\end{lem}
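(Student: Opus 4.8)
The plan is to pass to the limit, along the subsequence $\{N_k\}$, in the finite-$N$ stationarity recursion \eqref{eq:z_n-equation}; the only real work is to exclude that any $w_n^\infty$ collapses to $0$, after which \eqref{eq:z-equation} drops out by pure continuity. Indeed, \emph{if} $w_n^\infty>0$ for all $n$, then since the maps $s\mapsto h(s)$, $s\mapsto s\,h'(s)$, $s\mapsto s^2 h'(s)$, $s\mapsto\ln s$ and $s\mapsto\nu(s,a)$ are all continuous at the interior points $w_m^\infty\in(0,1)$, one simply lets $k\to\infty$ in \eqref{eq:z_n-equation} (valid at index $n$ for every $N_k\ge n$), using $w_m^{(N_k)}\to w_m^\infty$, and \eqref{eq:z-equation} follows. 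So the lemma reduces entirely to showing $w_n^\infty>0$ for every $n$, which I would prove by induction on $n$, the monotonicity chain \eqref{eq:monz} already being at hand.

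For the base case, suppose $w_0^\infty=0$. Since $n\mapsto w_n^{(N)}$ is nonincreasing for each $N$ (Lemma \ref{lem:zdecr}), passing to the limit forces $w_n^\infty=0$ for all $n$. Now fix some $n\ge 2$ and rearrange \eqref{eq:z_n-equation} so as to isolate $\nu(\xi,a)$: dividing by $-\ln w_{n-1}^{(N_k)}>0$ and dropping the $(n-1)$-term, which, since $s\mapsto h(s)+s\,h'(s)=a\,s^{a-1}-1$ is decreasing and $w_{n-1}^{(N_k)}\le w_{n-2}^{(N_k)}$, contributes nonpositively to the resulting upper bound, one obtains that for all large $k$ the quantity $\nu(\xi,a)$ is bounded above by $\nu(w_{n-1}^{(N_k)},a)$ plus a correction of order $\big((w_n^{(N_k)})^2 h'(w_n^{(N_k)})-(w_{n-1}^{(N_k)})^2 h'(w_{n-1}^{(N_k)})\big)/(-\ln w_{n-1}^{(N_k)})$. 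Letting $k\to\infty$, the first term tends to $-\infty$ by Lemma \ref{lem:nu} while the second tends to $0$, because $s^2 h'(s)\to 0$ and $\ln s\to-\infty$ as $s\to 0^+$; this contradicts the finiteness of $\nu(\xi,a)$, so $w_0^\infty>0$.

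For the inductive step, assume $w_0^\infty\ge\cdots\ge w_{n-1}^\infty>0$ and suppose, for contradiction, $w_n^\infty=0$ (hence $w_{n+1}^\infty=0$ by monotonicity). Apply \eqref{eq:z_n-equation} with $n+1$ in place of $n$ and send $k\to\infty$. Its left-hand side, $(n+2)\big((w_{n+1}^{(N_k)})^2 h'(w_{n+1}^{(N_k)})-(w_n^{(N_k)})^2 h'(w_n^{(N_k)})\big)$, tends to $0$ since $s^2 h'(s)=-(1-a)s^a\to 0$. On the right, however, the two terms $n\big(h(w_n^{(N_k)})+w_n^{(N_k)}h'(w_n^{(N_k)})\big)=n\big(a(w_n^{(N_k)})^{a-1}-1\big)$ and $\nu(w_n^{(N_k)},a)\ln w_n^{(N_k)}$ both diverge to $+\infty$ at the rate $(w_n^{(N_k)})^{a-1}$ with the same positive sign --- so they reinforce each other rather than cancel --- while $-\nu(\xi,a)\ln w_n^{(N_k)}$ is only logarithmic and the terms built from $w_{n-1}^{(N_k)}$ stay bounded (using $w_{n-1}^\infty>0$). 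Hence the right-hand side tends to $+\infty$, contradicting that the left-hand side tends to $0$; therefore $w_n^\infty>0$, which closes the induction and, together with the reduction above, proves \eqref{eq:z-equation}.

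The main obstacle --- indeed essentially the only one --- is the bookkeeping of competing singularities as $s\to 0^+$: one has to be sure that the power singularity $s^{a-1}$ hidden in $h(s)+s\,h'(s)$ (and therefore also in $\nu(s,a)\ln s$) strictly dominates the merely logarithmic singularity carried by $\ln s$ and by $\nu(\xi,a)\ln s$, that the seemingly singular $s^2 h'(s)$ is in fact $o(1)$, and that the two power-law divergences add rather than cancel. Once the signs and orders of magnitude are pinned down, everything else is continuity together with the monotonicity already supplied by Lemmas \ref{lem:nu} and \ref{lem:zdecr}.
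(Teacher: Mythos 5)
Your proposal is correct and follows essentially the same route as the paper: the base case excludes $w_0^\infty=0$ by isolating $\nu(\xi,a)$ in \eqref{eq:z_n-equation} (dropping the nonnegative $(n-1)$-term) and letting $\nu(w_{n-1}^{(N_k)},a)\to-\infty$ while the correction vanishes, and the inductive step applies \eqref{eq:z_n-equation} at index $n+1$ and contrasts the bounded left-hand side with the power-law divergence $(w_n^{(N_k)})^{a-1}$ dominating the logarithmic term, before concluding \eqref{eq:z-equation} by continuity. Your explicit check that the two $s^{a-1}$-divergences have the same sign is the same bookkeeping the paper performs implicitly in \eqref{eq:zn}.
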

\begin{lem}\label{lem:limzn}
$w_n^\infty=\xi<1$ for any $n\in\mathbb{N}$.
\begin{proof}
From \eqref{eq:monz} and Lemma \ref{lem:znric} we have $0<w_0^\infty\le \xi<1$.
Since the sequence $n\mapsto w_n^\infty$ is positive and decreasing, it has a limit $\lim_{n\to\infty}w_n^\infty=\inf_{n\in\mathbb{N}}w_n^\infty=\xi'$, that satisfies
\begin{equation}\label{eq:inzbar}
0\le \xi'\le w_0^\infty\le \xi<1\;.
\end{equation}
Since the function $s\mapsto h(s)+sh'(s)$ is decreasing and $w_{n-1}^\infty\le w_{n-2}^\infty$, the recursive relation \eqref{eq:z-equation} implies
\begin{align}
&\left(n+1\right)\left(\left(w_n^{\infty}\right)^2 h'(w_n^\infty)-\left(w_{n-1}^{\infty}\right)^2 h'(w_{n-1}^\infty)\right)\nonumber\\
&\ge (\nu(w_{n-1}^\infty,a)-\nu(\xi,a))\ln w_{n-1}^\infty\;,
\end{align}
hence
\begin{equation}\label{eq:mul}
\nu(\xi,a)\le\frac{\left(n+1\right)\left(\left(w_n^{\infty}\right)^2 h'(w_n^\infty)-\left(w_{n-1}^{\infty}\right)^2 h'(w_{n-1}^\infty)\right)}{-\ln w_{n-1}^\infty}+\nu(w_{n-1}^\infty,a)\;.
\end{equation}
Since the function $s\mapsto s^2h'(s)$ is bounded and decreasing for $0\le s\le1$, we have
\begin{equation}
\sum_{n=1}^\infty\left(\left(w_n^{\infty}\right)^2 h'(w_n^\infty)-\left(w_{n-1}^{\infty}\right)^2 h'(w_{n-1}^\infty)\right)={\xi'}^2\,h'(\xi')-\left(w_0^{\infty}\right)^2 h'(w_0^\infty)<\infty\;.
\end{equation}
Since $w_n^\infty\le w_{n-1}^\infty$, the sequence $\left\{\left(w_n^{\infty}\right)^2 h'(w_n^\infty)-\left(w_{n-1}^{\infty}\right)^2 h'(w_{n-1}^\infty)\right\}_{n\in\mathbb{N}_0}$ is positive and summable, hence
\begin{equation}
\liminf_{n\to\infty}\left(n+1\right)\left(\left(w_n^{\infty}\right)^2 h'(w_n^\infty)-\left(w_{n-1}^{\infty}\right)^2 h'(w_{n-1}^\infty)\right)=0\;.
\end{equation}
Since $-\ln w_{n-1}^\infty\ge-\ln w_0^\infty>0$, we also have
\begin{equation}
\liminf_{n\to\infty}\frac{\left(n+1\right)\left(\left(w_n^{\infty}\right)^2 h'(w_n^\infty)-\left(w_{n-1}^{\infty}\right)^2 h'(w_{n-1}^\infty)\right)}{-\ln w_{n-1}^\infty}=0\;,
\end{equation}
and taking the $\liminf$ for $n\to\infty$ of \eqref{eq:mul} we get
\begin{equation}
\nu(\xi,a)\le\nu(\xi',a)\;,
\end{equation}
hence $\xi\le \xi'$.
Since from \eqref{eq:inzbar} also the converse inequality holds, we must have $\xi'=w_0^\infty=\xi$.
Since the sequence $n\mapsto w_n^\infty$ is decreasing and $\xi=\xi'=\inf_{n\in\mathbb{N}}w_n^\infty$, we have $w_0^\infty=\xi\le w_n^\infty\le w_0^\infty$ for any $n$, hence $w_n^\infty=\xi$.
\end{proof}
\end{lem}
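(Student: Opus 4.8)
\emph{Proof proposal.} The plan is to promote the monotone convergence of the sequence $(w_n^\infty)_{n\in\mathbb{N}}$ (from Lemma~\ref{lem:znric}) to the much stronger statement that it is constant, equal to $\xi$. First, by \eqref{eq:monz} and Lemma~\ref{lem:znric} the sequence $(w_n^\infty)$ is positive and nonincreasing with $0<w_0^\infty\le\xi<1$, so it converges to its infimum $\xi':=\inf_n w_n^\infty\in[0,w_0^\infty]$. Since $\nu(\cdot,a)$ is strictly increasing (Lemma~\ref{lem:nu}), it is enough to prove the single inequality $\nu(\xi,a)\le\nu(\xi',a)$: this forces $\xi\le\xi'$, and combined with $\xi'\le w_0^\infty\le\xi$ it gives $\xi'=w_0^\infty=\xi$, whence $w_n^\infty=\xi$ for all $n$ by monotonicity.

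To produce this inequality I would extract it from the recursion \eqref{eq:z-equation}. Using that $s\mapsto h(s)+s\,h'(s)=a\,s^{a-1}-1$ is decreasing and that $w_{n-1}^\infty\le w_{n-2}^\infty$, the first bracket on the right-hand side of \eqref{eq:z-equation} is nonnegative; dividing by $-\ln w_{n-1}^\infty>0$ and rearranging gives, for every $n\ge1$,
\[
\nu(\xi,a)\;\le\;\nu\bigl(w_{n-1}^\infty,a\bigr)+\frac{(n+1)\,\bigl((w_n^\infty)^2h'(w_n^\infty)-(w_{n-1}^\infty)^2h'(w_{n-1}^\infty)\bigr)}{-\ln w_{n-1}^\infty}\;.
\]
The decisive point is then to pass to the limit $n\to\infty$. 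Because $s\mapsto s^2h'(s)=-(1-a)s^a$ is decreasing, the increments $d_n:=(w_n^\infty)^2h'(w_n^\infty)-(w_{n-1}^\infty)^2h'(w_{n-1}^\infty)$ are nonnegative, and because $s\mapsto s^2h'(s)$ is bounded on $[0,1]$ they telescope to the finite number $(\xi')^2h'(\xi')-(w_0^\infty)^2h'(w_0^\infty)$; hence $\sum_n d_n<\infty$, which forces $\liminf_{n\to\infty}(n+1)\,d_n=0$. Since $-\ln w_{n-1}^\infty\ge-\ln w_0^\infty>0$, the whole fraction above also has vanishing $\liminf$. Taking the $\liminf$ in the displayed inequality along a subsequence realising this, and using $w_{n-1}^\infty\to\xi'$ and the continuity of $\nu(\cdot,a)$, yields $\nu(\xi,a)\le\nu(\xi',a)$. (This also shows $\xi'>0$: otherwise $\nu(\xi',a)=-\infty$ and the inequality is absurd.)

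The hard part is the telescoping observation that cancels the growing prefactor $n+1$: the raw bound from \eqref{eq:z-equation} becomes empty in the limit, and only summability of the $d_n$ converts $n+1$ from an obstruction into leverage, via the elementary fact that a nonnegative summable sequence $d_n$ satisfies $\liminf_n n\,d_n=0$. The remaining ingredients---the monotonicity and boundedness of $s\mapsto s^2h'(s)$ and the monotonicity of $s\mapsto h(s)+s\,h'(s)$, already exploited in Lemmas~\ref{lem:zdecr} and~\ref{lem:znric}, together with the double-$\liminf$ bookkeeping---are routine.
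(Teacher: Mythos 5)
Your proposal is correct and follows essentially the same route as the paper's proof: derive the inequality \eqref{eq:mul} from \eqref{eq:z-equation} via monotonicity of $s\mapsto h(s)+s\,h'(s)$, use the telescoping/summability of the nonnegative increments of $s\mapsto s^2h'(s)$ to get $\liminf_n (n+1)d_n=0$, and pass to the $\liminf$ to obtain $\nu(\xi,a)\le\nu(\xi',a)$, which with Lemma \ref{lem:nu} and $\xi'\le w_0^\infty\le\xi$ forces $w_n^\infty=\xi$. Your added remark that the inequality also rules out $\xi'=0$ is a small but welcome clarification of a point the paper leaves implicit.
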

\begin{lem}\label{lem:limqn}
$\lim_{k\to\infty}p_n^{(N_k)}=p_0^\infty\,\xi^n$ for any $n\in\mathbb{N}$.
\begin{proof}
The claim is true for $n=0$.
The inductive hypothesis is
\begin{equation}
\lim_{k\to\infty}p_{n'}^{(N_k)}=p_0^\infty \xi^{n'}\;,\qquad n'=0,\ldots,n\;.
\end{equation}
We then have $\lim_{k\to\infty}p^{(N_k)}_{n+1}=\lim_{k\to\infty}p_n^{(N_k)}w_n^{(N_k)}=p_0^\infty\,\xi^{n+1}$, where we have used the inductive hypothesis and Lemma \ref{lem:limzn}.
\end{proof}
\end{lem}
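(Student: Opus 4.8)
The plan is a short induction on $n$, exploiting the multiplicative structure $p^{(N)}_{n+1}=p^{(N)}_n\,w^{(N)}_n$ built into the change of variables \eqref{eq:p}--\eqref{eq:w}, together with the identification $w^\infty_n=\xi$ already secured in Lemma \ref{lem:limzn}.

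First I would dispatch the base case $n=0$, where the assertion is nothing but the definition of $p^\infty_0$. For the inductive step I would assume $\lim_{k\to\infty}p^{(N_k)}_{n'}=p^\infty_0\,\xi^{n'}$ for every $n'\le n$, and then observe that as soon as $N_k\ge n+1$ — which holds along the tail of the subsequence since $N_k\to\infty$ — the definition \eqref{eq:w} gives $p^{(N_k)}_{n+1}=p^{(N_k)}_n\,w^{(N_k)}_n$. The first factor converges to $p^\infty_0\,\xi^n$ by the inductive hypothesis and the second to $\xi$ by Lemma \ref{lem:limzn}, so the product converges to $p^\infty_0\,\xi^{n+1}$, closing the induction. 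Equivalently, one may unwind $p^{(N_k)}_n=p^{(N_k)}_0\prod_{j=0}^{n-1}w^{(N_k)}_j$ and pass to the limit in each of the $n+1$ factors separately.

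I expect essentially no obstacle internal to this lemma: all the genuine work has been front-loaded into the tower of estimates culminating in Lemma \ref{lem:limzn} (through Lemmas \ref{lem:nu}, \ref{lem:zdecr}, \ref{lem:znric}), which pins every limiting ratio to the single value $\xi=z^{1/a}$, and into the diagonal extraction preceding Lemma \ref{lem:znric}, which furnishes one subsequence $\{N_k\}$ along which both $\{p^{(N_k)}_n\}_k$ and $\{w^{(N_k)}_n\}_k$ converge for each fixed $n$. The only point deserving a word of care is that the recursion $p^{(N_k)}_{n+1}=p^{(N_k)}_n\,w^{(N_k)}_n$ is available only once $N_k\ge n+1$, which is automatic for large $k$. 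Note that this lemma does not need $p^\infty_0>0$; identifying $p^\infty_0=1-\xi$ (equivalently, ruling out loss of mass to infinity) is a complementary ingredient, to be extracted from the normalization $\sum_n p^{(N_k)}_n=1$ together with Fatou's lemma, and it is precisely the combination of this lemma with that identification and with the scaling invariance of Lemma \ref{lem:Tr} that will deliver $\mathcal{F}_{z,a}(x)\le\mathcal{F}_{z,a}(\hat\omega_z)$ in Theorem \ref{thm:logsx}.
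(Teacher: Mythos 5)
Your proof is correct and follows exactly the paper's argument: a trivial base case plus an induction using $p^{(N_k)}_{n+1}=p^{(N_k)}_n\,w^{(N_k)}_n$, with the first factor handled by the inductive hypothesis and the second converging to $\xi$ via the diagonal subsequence and Lemma \ref{lem:limzn}. The extra remarks (requiring $N_k\ge n+1$, not needing $p^\infty_0>0$ here) are harmless clarifications of the same proof.
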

\begin{lem}\label{lem:limq0}
We have $p_0^\infty=1-\xi$, hence $\lim_{k\to\infty} p_n^{(N_k)}=\left(1-\xi\right)\xi^n$ for any $n\in\mathbb{N}$.
\begin{proof}
We have $\sum_{n=0}^N p_n^{(N)}=1$ for any $N\in\mathbb{N}$.
Moreover, since the sequence $n\mapsto w_n^{(N)}$ is decreasing, we also have
\begin{equation}
p_n^{(N)}=p_0^{(N)}w_0^{(N)}\ldots w_{n-1}^{(N)}\le p_0^{(N)}\left(w_0^{(N)}\right)^n\;.
\end{equation}
We have $\lim_{k\to\infty}w_0^{(N_k)}=\xi<1$, hence $w_0^{(N_k)}\le(1+\xi)/2$ for sufficiently large $k$, and since $p_0^{(N)}\le1$,
\begin{equation}\label{eq:boundqnk}
p_n^{(N_k)}\le\left(\frac{1+\xi}{2}\right)^n\;.
\end{equation}
The sums $\sum_{n=0}^{N_k}p_n^{(N_k)}$ are then dominated for any $k\in\mathbb{N}$ by $\sum_{n=0}^\infty\left(\frac{1+\xi}{2}\right)^n<\infty$, and from the dominated convergence theorem we have
\begin{equation}
1 = \lim_{k\to\infty}\sum_{n=0}^{N_k}p_n^{(N_k)}=\sum_{n=0}^\infty\lim_{k\to\infty}p_n^{(N_k)}=p_0^\infty\sum_{n=0}^\infty \xi^n =\frac{p_0^\infty}{1-\xi}\;,
\end{equation}
where we have used Lemma \ref{lem:limqn}.
\end{proof}
\end{lem}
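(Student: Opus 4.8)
The plan is to combine the normalization of the finite vectors $p^{(N_k)}$ with the geometric decay established along the way, so that the limiting sequence $n\mapsto p_0^\infty\,\xi^n$ from Lemma \ref{lem:limqn} is itself forced to be a probability measure; this pins down $p_0^\infty$.

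First I would record that, after the change of variables \eqref{eq:p}, the constraint \eqref{eq:Trx} reads $\sum_{n=0}^{N_k}p_n^{(N_k)}=1$ for every $k$. If one may pass to the limit $k\to\infty$ inside this sum, then by Lemma \ref{lem:limqn} the left-hand side converges to $\sum_{n=0}^\infty p_0^\infty\,\xi^n = p_0^\infty/(1-\xi)$, which is legitimate since $0<\xi<1$ by \eqref{eq:monz}. Equating this to $1$ gives $p_0^\infty = 1-\xi$, and then $\lim_{k\to\infty}p_n^{(N_k)}=(1-\xi)\,\xi^n$ is immediate from Lemma \ref{lem:limqn}.

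The only substantive point is therefore the interchange of limit and sum, which I would justify by dominated convergence with a summable majorant uniform in $k$. Since $n\mapsto w_n^{(N)}$ is nonincreasing by Lemma \ref{lem:zdecr}, writing $p_n^{(N)}=p_0^{(N)}\,w_0^{(N)}\cdots w_{n-1}^{(N)}$ gives $p_n^{(N)}\le p_0^{(N)}\,(w_0^{(N)})^n\le(w_0^{(N)})^n$, using $p_0^{(N)}\le 1$. By Lemma \ref{lem:limzn} we have $w_0^{(N_k)}\to\xi<1$, so $w_0^{(N_k)}\le r$ for all sufficiently large $k$, where $r:=(1+\xi)/2<1$; hence $p_n^{(N_k)}\le r^n$ for every $n$ and every such $k$, and since $\sum_{n=0}^\infty r^n<\infty$ dominated convergence applies. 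The one place that demands care --- the mildest of obstacles here --- is exactly this uniform domination: the geometric bound must hold with a single ratio $r<1$ for all large $k$, and this is where the monotonicity of the $w_n^{(N)}$ together with the strict inequality $\xi<1$ are essential. Once the majorant is in hand, the conclusion follows at once.
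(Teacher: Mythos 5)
Your proposal is correct and follows essentially the same route as the paper's proof: the normalization $\sum_{n=0}^{N_k}p_n^{(N_k)}=1$, the uniform geometric majorant $p_n^{(N_k)}\le\left(\frac{1+\xi}{2}\right)^n$ obtained from the monotonicity of $n\mapsto w_n^{(N)}$, $p_0^{(N)}\le1$ and $w_0^{(N_k)}\to\xi<1$, and then dominated convergence to identify $p_0^\infty=1-\xi$. No gaps to report.
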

We notice that for any $N\in\mathbb{N}$
\begin{equation}
S_a\left(\bar{x}^{(N)}\right) = S\left(p^{(N)}\right) = -\sum_{n=0}^N p_n^{(N)}\ln p_n^{(N)}\;.
\end{equation}
\begin{lem}
$\lim_{k\to\infty}S\left(p^{(N_k)}\right)=S(\hat{\omega}_\xi)$.
\begin{proof}
The function $x\mapsto -x\ln x$ is increasing for $0\le x\le1/\mathrm{e}$.
Let us choose $n_0$ such that $((1+\xi)/2)^{n_0}\le1/\mathrm{e}$.
Recalling \eqref{eq:boundqnk}, the sums
\begin{equation}
-\sum_{n=n_0}^{N_k}p_n^{(N_k)}\ln p_n^{(N_k)}
\end{equation}
are dominated for any $k\in\mathbb{N}$ by $-\sum_{n=n_0}^\infty n\left(\frac{1+\xi}{2}\right)^n\ln\frac{1+\xi}{2}<\infty$.
Then, from the dominated convergence theorem and Lemma \ref{lem:limq0} we have
\begin{align}
\lim_{k\to\infty}S\left(p^{(N_k)}\right) &= -\sum_{n=0}^\infty p_n^{(N_k)}\ln p_n^{(N_k)}\nonumber\\
&=-\sum_{n=0}^\infty\left(1-\xi\right)\xi^n\left(\ln\left(1-\xi\right)+n\ln \xi\right)=S(\hat{\omega}_\xi)\;.
\end{align}
\end{proof}
\end{lem}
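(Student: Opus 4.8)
Proof proposal for the final lemma, $\lim_{k\to\infty}S\bra{p^{(N_k)}}=S\bra{\hat\omega_\xi}$.

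The plan is to read the von Neumann entropy as a sum over $\mathbb{N}$, namely $S\bra{p^{(N_k)}}=-\sum_{n=0}^{N_k}p_n^{(N_k)}\ln p_n^{(N_k)}$, and to pass to the limit $k\to\infty$ termwise. The termwise limit is already available: Lemma \ref{lem:limq0} gives $p_n^{(N_k)}\to(1-\xi)\,\xi^n$ for every fixed $n$. So the only point is to justify exchanging the limit with the infinite sum, and for this I would exhibit a summable majorant for the summands that does not depend on $k$.

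To build the majorant, first I would take $k$ large enough that $w_0^{(N_k)}\le\frac{1+\xi}{2}$, which is legitimate since $w_0^{(N_k)}\to\xi<1$ by Lemma \ref{lem:limzn}. The monotonicity $w_0^{(N)}\ge w_1^{(N)}\ge\ldots$ from Lemma \ref{lem:zdecr}, together with $p_n^{(N)}=p_0^{(N)}w_0^{(N)}\cdots w_{n-1}^{(N)}\le\bra{w_0^{(N)}}^n$, then gives for all large $k$ and all $n$ the uniform geometric bound $p_n^{(N_k)}\le\bra{\frac{1+\xi}{2}}^n$, i.e.\ \eqref{eq:boundqnk}. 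Since $t\mapsto-t\ln t$ is nonnegative on $[0,1]$ and nondecreasing on $[0,1/\mathrm{e}]$, I would fix $n_0$ with $\bra{\frac{1+\xi}{2}}^{n_0}\le1/\mathrm{e}$; for $n\ge n_0$ the summand is then dominated by $-n\bra{\frac{1+\xi}{2}}^n\ln\frac{1+\xi}{2}$, whose series converges because $\frac{1+\xi}{2}<1$, while for the finitely many $n<n_0$ the crude estimate $-t\ln t\le1/\mathrm{e}$ on $[0,1]$ suffices. Dominated convergence for series then applies.

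The conclusion is a short computation:
\[
\lim_{k\to\infty}S\bra{p^{(N_k)}}=-\sum_{n=0}^\infty(1-\xi)\,\xi^n\ln\bra{(1-\xi)\,\xi^n}=-\ln(1-\xi)-\frac{\xi\ln\xi}{1-\xi}=S\bra{\hat\omega_\xi}\;,
\]
where I split $\ln\bra{(1-\xi)\xi^n}=\ln(1-\xi)+n\ln\xi$, use $\sum_n(1-\xi)\xi^n=1$ and $\sum_n n\,(1-\xi)\xi^n=\xi/(1-\xi)$, and invoke \eqref{eq:Sz}.

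The single genuine obstacle is the $k$-uniform tail control: Lemmas \ref{lem:limzn}--\ref{lem:limq0} give convergence of $p_n^{(N_k)}$ and $w_n^{(N_k)}$ only for each fixed $n$, whereas the entropy is an infinite sum, so without the geometric bound \eqref{eq:boundqnk} mass could in principle escape to the tail; once that bound is in place the rest is routine. Looking ahead, this convergence of $S_a\bra{\bar{x}^{(N_k)}}=S\bra{p^{(N_k)}}$, together with an analogous (and simpler, since $s\mapsto s^a-s$ is bounded on $[0,1]$) convergence of $F_a\bra{\bar{x}^{(N_k)}}$, identifies $\lim_k\mathcal{F}_{z,a}\bra{\bar{x}^{(N_k)}}$ with $\mathcal{F}_{z,a}\bra{\hat\omega_z}$, which — since $N\mapsto\sup_{\mathcal{P}_N}\mathcal{F}_{z,a}=\mathcal{F}_{z,a}\bra{\bar{x}^{(N)}}$ is nondecreasing — completes the proof of Theorem \ref{thm:logsx}.
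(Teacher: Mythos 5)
Your proof is correct and follows essentially the same route as the paper: the uniform geometric bound \eqref{eq:boundqnk} (which you re-derive exactly as in the proof of Lemma \ref{lem:limq0}), monotonicity of $t\mapsto -t\ln t$ on $[0,1/\mathrm{e}]$ to dominate the tail by $-\sum_{n\ge n_0} n\left(\frac{1+\xi}{2}\right)^n\ln\frac{1+\xi}{2}$, dominated convergence with the pointwise limits from Lemma \ref{lem:limq0}, and the closing geometric-series computation yielding $S(\hat{\omega}_\xi)$. Your explicit handling of the finitely many terms $n<n_0$ is a harmless addition to the same argument.
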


\begin{lem}
$\lim_{k\to\infty}F_a\left(\bar{x}^{(N_k)}\right)=\frac{\xi\,h(\xi)}{1-\xi}$
\begin{proof}
We can rewrite
\begin{equation}\label{eq:FNk}
F_a\left(\bar{x}^{(N_k)}\right)=\sum_{n=1}^{N_k} n\,p_{n-1}^{(N_k)}\,w_{n-1}^{(N_k)}\,h\left(w_{n-1}^{(N_k)}\right)\;.
\end{equation}
For any $0\le s\le 1$ we have
\begin{equation}
0\le s \,h(s) = s^a-s \le 2\;,
\end{equation}
and from \eqref{eq:boundqnk} the sums in \eqref{eq:FNk} are dominated by $2\sum_{n=1}^\infty n\left(\frac{1+\xi}{2}\right)^n<\infty$.
Then, from the dominated convergence theorem and Lemma \ref{lem:limq0} we have
\begin{equation}
\lim_{k\to\infty}F_a\left(\bar{x}^{(N_k)}\right)=h(\xi)\sum_{n=1}^{\infty} n\left(1-\xi\right)\xi^n=\frac{\xi\,h(\xi)}{1-\xi}\;.
\end{equation}
\end{proof}
\end{lem}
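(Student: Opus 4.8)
The plan is to reduce the statement to a single application of the dominated convergence theorem, mirroring exactly the structure of the two preceding lemmas (the ones computing $\lim_k p_0^{(N_k)}$ and $\lim_k S(p^{(N_k)})$). Concretely, I would first recast $F_a(\bar{x}^{(N_k)})$ in the variables $p_n^{(N_k)}$ and $w_n^{(N_k)}$, then dominate the summands by a fixed summable sequence independent of $k$, pass to the limit term by term, and finally sum the resulting geometric-type series.

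The first step is the algebraic identity \eqref{eq:FNk}. Starting from the definition \eqref{eq:defFa} with nonnegative entries, and using $\bar{x}_n^{(N_k)}=(p_n^{(N_k)})^a$, $\bar{x}_n^{1/a}=p_n$, $\bar{x}_{n-1}^{1/a-1}=p_{n-1}^{1-a}$, and $w_{n-1}=p_n/p_{n-1}$, each term rewrites as $n\,p_{n-1}^{(N_k)}w_{n-1}^{(N_k)}h(w_{n-1}^{(N_k)})$, since a direct factorization gives $p_n^a p_{n-1}^{1-a}-p_n=p_{n-1}w_{n-1}(w_{n-1}^{a-1}-1)=p_{n-1}w_{n-1}h(w_{n-1})$. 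This is routine.

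Next I would produce the dominating sequence. By \eqref{eq:w} we have $0\le w_{n-1}^{(N_k)}\le 1$ for every index appearing in the sum, and the elementary identity $s\,h(s)=s^a-s$ gives $0\le w_{n-1}^{(N_k)}h(w_{n-1}^{(N_k)})\le 2$ on $[0,1]$. Combined with the bound \eqref{eq:boundqnk}, namely $p_n^{(N_k)}\le((1+\xi)/2)^n$ for $k$ large, each summand in \eqref{eq:FNk} is dominated (for large $k$) by $2n((1+\xi)/2)^{n-1}$, whose sum over $n\ge 1$ converges because $(1+\xi)/2<1$. Viewing $F_a(\bar{x}^{(N_k)})$ as the infinite series $\sum_{n\ge 1} n\,p_{n-1}^{(N_k)}w_{n-1}^{(N_k)}h(w_{n-1}^{(N_k)})$ with terms set to zero for $n>N_k$, the dominated convergence theorem then lets me interchange $\lim_k$ with the sum. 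Using $w_{n-1}^{(N_k)}\to\xi$ (Lemma \ref{lem:limzn}) and $p_{n-1}^{(N_k)}\to(1-\xi)\xi^{n-1}$ (Lemma \ref{lem:limq0}), each term tends to $n(1-\xi)\xi^{n-1}\cdot\xi\,h(\xi)=n(1-\xi)\xi^n h(\xi)$, and summing $\sum_{n\ge 1}n\xi^n=\xi/(1-\xi)^2$ yields $\lim_{k}F_a(\bar{x}^{(N_k)})=(1-\xi)h(\xi)\cdot\xi/(1-\xi)^2=\xi\,h(\xi)/(1-\xi)$, as claimed.

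The only subtle point — and the reason a naive interchange of limit and sum is not immediately legitimate — is that the upper limit $N_k$ of the sum grows with $k$, so this is a limit of partial sums of increasing length rather than a finite sum of convergent terms. The uniform summable bound furnished by $s\,h(s)\le 2$ and \eqref{eq:boundqnk} is exactly what licenses the swap; once it is in place the computation is mechanical, so I expect no real obstacle beyond verifying this domination carefully.
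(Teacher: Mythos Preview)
Your proposal is correct and follows essentially the same route as the paper's proof: the same rewriting \eqref{eq:FNk}, the same bound $0\le s\,h(s)\le 2$, the same domination via \eqref{eq:boundqnk}, and the same application of dominated convergence together with Lemmas \ref{lem:limzn} and \ref{lem:limq0}. The only cosmetic difference is that you keep the (sharper) exponent $n-1$ in the majorant whereas the paper writes $n$; either works.
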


Since $\mathcal{P}_N\subset\mathcal{P}_{N'}$ for any $N<N'$, the sequence $\left\{\mathcal{F}_{z,a}\left(\bar{x}^{(N)}\right)\right\}_{N\in\mathbb{N}}$ is increasing.
We then have for any $N\in\mathbb{N}$ and any $x\in\mathbb{R}^{N+1}$ with nonnegative components and satisfying \eqref{eq:Trx}
\begin{align}
\mathcal{F}_{z,a}(x) &\le \mathcal{F}_{z,a}\left(\bar{x}^{(N)}\right) \le \sup_{N'\in\mathbb{N}}\mathcal{F}\left(\bar{x}^{(N')}\right) = \lim_{N'\to\infty}\mathcal{F}\left(p^{(N')}\right)\nonumber\\
&= \lim_{k\to\infty}\mathcal{F}\left(p^{(N_k)}\right) = \frac{\xi\,h(\xi)}{1-\xi}+\nu(\xi,a)\,S(\hat{\omega}_\xi) = \mathcal{F}_{z,a}\left(\hat{\omega}_{\xi}\right)\;,
\end{align}
and the claim follows since
\begin{equation}
\hat{\omega}_{\xi}=\frac{\hat{\omega}_z^{1/a}}{\mathrm{Tr}\,\hat{\omega}_z^{1/a}}\;.
\end{equation}

\section{The quantum-limited attenuator has Gaussian maximizers}\label{sec:pq}
\begin{thm}[$p\to q$ norms]\label{thm:pq}
For any $1<p<q$ and any $0 < \lambda < 1$ the $p\to q$ norm of $\mathcal{E}_\lambda$ is achieved by a thermal Gaussian state $\hat{\omega}$ (that depends on $\lambda$, $p$ and $q$), i.e.\ for any operator $\hat{X}$ with $\left\|\hat{X}\right\|_p<\infty$
\begin{equation}\label{eq:pq}
\frac{\left\|\mathcal{E}_\lambda\left(\hat{X}\right)\right\|_q}{\left\|\hat{X}\right\|_p}\le \frac{\left\|\mathcal{E}_\lambda\left(\hat{\omega}\right)\right\|_q}{\left\|\hat{\omega}\right\|_p} = \left\|\mathcal{E}_\lambda\right\|_{p\to q}\;.
\end{equation}
For any $p>1$ and any $0<\lambda<1$ the $p\to p$ norm of $\mathcal{E}_\lambda$ is asymptotically achieved by thermal Gaussian states with infinite temperature, i.e.\ for any operator $\hat{X}$ with $\left\|\hat{X}\right\|_p<\infty$
\begin{equation}\label{eq:ppP}
\frac{\left\|\mathcal{E}_\lambda\left(\hat{X}\right)\right\|_p}{\left\|\hat{X}\right\|_p} \le \lim_{z\to1} \frac{\left\|\mathcal{E}_\lambda\left(\hat{\omega}_z\right)\right\|_p}{\left\|\hat{\omega}_z\right\|_p} = \lambda^\frac{1-p}{p} = \left\|\mathcal{E}_\lambda\right\|_{p\to p}\;.
\end{equation}
For any $1<q<p$ and any $0<\lambda<1$ the $p\to q$ norm of $\mathcal{E}_\lambda$ is infinite and is asymptotically achieved by thermal Gaussian states with infinite temperature:
\begin{equation}
\lim_{z\to1} \frac{\left\|\mathcal{E}_\lambda\left(\hat{\omega}_z\right)\right\|_q}{\left\|\hat{\omega}_z\right\|_p} = \infty = \left\|\mathcal{E}_\lambda\right\|_{p\to q}\;.
\end{equation}
\end{thm}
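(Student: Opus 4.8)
The plan is to reduce the computation of $\norm{\mathcal{E}_\lambda}_{p\to q}$ to finite-rank positive Fock-diagonal inputs and then run a Gross-type interpolation driven by the logarithmic Sobolev inequality of Theorem \ref{thm:logs}. For the reduction, given $\hat X$ with $\norm{\hat X}_p<\infty$, Theorem \ref{thm:maj} shows that replacing $\hat X$ by its Fock rearrangement does not decrease $\norm{\mathcal{E}_\lambda(\hat X)}_q$ and leaves $\norm{\hat X}_p$ unchanged; the passage to positive operators is the standard one for the $p\to q$ norm of a completely positive map, and truncating $\hat X$ to $\hat X^{(N)}=\sum_{n=0}^N x_n\,|n\rangle\langle n|$, which increases to $\hat X$, reduces the claim, via monotone convergence of Schatten norms, to finite-rank positive Fock-diagonal $\hat X$ -- a class preserved by $\mathrm{e}^{t\mathcal{L}}$ (by Lemma \ref{lem:dlX}), so that Theorem \ref{thm:logs} will be available along the whole flow below. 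It then suffices to produce, for each such $\hat X$, a thermal state $\hat\omega_{z_0}$ with $\norm{\mathcal{E}_\lambda(\hat X)}_q/\norm{\hat X}_p\le\norm{\mathcal{E}_\lambda(\hat\omega_{z_0})}_q/\norm{\hat\omega_{z_0}}_p$; then $\hat\omega:=\hat\omega_{z_0}$ is the maximizer in \eqref{eq:pq}.

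Set $T:=-\ln\lambda$, so $\mathcal{E}_\lambda=\mathrm{e}^{T\mathcal{L}}$. For $z_0\in(0,1)$ write $\mathrm{e}^{t\mathcal{L}}(\hat\omega_{z_0})=\hat\omega_{\zeta(t)}$ with $\zeta(t)=\mathrm{e}^{-t}z_0/(1-(1-\mathrm{e}^{-t})z_0)\in(0,1)$ by \eqref{eq:zlambda}, and let $a\colon[0,T]\to(0,1)$ solve the ordinary differential equation $a'(t)=\mu(\zeta(t),a(t))$ with $a(0)=1/p$. Since $\mu(z,a)=\nu(z^{1/a},a)<0$ for $0<z<1$ by Lemma \ref{lem:nu}, the solution is decreasing; moreover $\mu$ is strictly increasing in $z$ and $\zeta(t)$ is increasing in $z_0$, so by comparison $a(T)$ is increasing in $z_0$, tending to $1/p$ as $z_0\to1$ (then $\zeta\equiv1$ and $\mu\equiv0$) and to $0$ as $z_0\to0$ (the drift then diverging to $-\infty$). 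Hence for every $q\in(p,\infty)$ there is $z_0\in(0,1)$ with $a(T)=1/q$; for this $z_0$ the flow stays in $[1/q,1/p]\subset(0,1)$ and $\zeta(t)\in(0,1)$, so Theorem \ref{thm:logs} applies at every $t$.

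With this $z_0$, consider
\begin{equation}
\Psi(t):=\ln\norm{\mathrm{e}^{t\mathcal{L}}(\hat X)}_{1/a(t)}-\ln\norm{\mathrm{e}^{t\mathcal{L}}(\hat\omega_{z_0})}_{1/a(t)}\;,\qquad t\in[0,T]\;.
\end{equation}
Differentiating $\ln\norm{\mathrm{e}^{t\mathcal{L}}(\hat Y)}_{1/a}$ in $t$ at fixed $a$ produces $F_a$ of the current operator $\mathrm{e}^{t\mathcal{L}}(\hat Y)$ (by the semigroup law and the definition \eqref{eq:defs}), while differentiating in $a$ at fixed $t$ produces $S_a$ of that operator (a short computation starting from $\ln\norm{\hat Y}_{1/a}=a\ln\mathrm{Tr}\,\hat Y^{1/a}$). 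With $\hat Y_t:=\mathrm{e}^{t\mathcal{L}}(\hat X)$, which is finite-rank and positive, this gives
\begin{equation}
\Psi'(t)=\Bigl(F_{a(t)}(\hat Y_t)-F_{a(t)}(\hat\omega_{\zeta(t)})\Bigr)+a'(t)\Bigl(S_{a(t)}(\hat Y_t)-S_{a(t)}(\hat\omega_{\zeta(t)})\Bigr)\;.
\end{equation}
Theorem \ref{thm:logs} applied to $\hat Y_t$ with $z=\zeta(t)$ bounds the first bracket by $-\mu(\zeta(t),a(t))\bigl(S_{a(t)}(\hat Y_t)-S_{a(t)}(\hat\omega_{\zeta(t)})\bigr)$, and because $a'(t)=\mu(\zeta(t),a(t))$ the two contributions cancel, so $\Psi'\le0$. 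Therefore $\Psi(T)\le\Psi(0)$, which after exponentiation reads $\norm{\mathcal{E}_\lambda(\hat X)}_q/\norm{\hat X}_p\le\norm{\mathcal{E}_\lambda(\hat\omega_{z_0})}_q/\norm{\hat\omega_{z_0}}_p$; taking the supremum over $\hat X$ and observing that $\hat\omega_{z_0}$ is itself an admissible input yields \eqref{eq:pq}.

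For the $p\to p$ statement, keep $a=1/p$ fixed: $\frac{d}{dt}\ln\norm{\mathrm{e}^{t\mathcal{L}}(\hat X)}_p=F_{1/p}(\mathrm{e}^{t\mathcal{L}}(\hat X))\le1-1/p$ for every $t$ by inequality \eqref{eq:Fa}, so integrating over $[0,-\ln\lambda]$ and using the reductions above gives $\norm{\mathcal{E}_\lambda(\hat X)}_p\le\lambda^{(1-p)/p}\norm{\hat X}_p$; conversely, from $\mathcal{E}_\lambda(\hat\omega_z)=\hat\omega_{z'}$ with $z'$ as in \eqref{eq:zlambda} and $\norm{\hat\omega_z}_p=(1-z)(1-z^p)^{-1/p}$, an elementary expansion as $z\to1$ gives $\norm{\mathcal{E}_\lambda(\hat\omega_z)}_p/\norm{\hat\omega_z}_p\to\lambda^{(1-p)/p}$, matching the bound. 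The same expansion gives, for $1<q<p$, $\norm{\mathcal{E}_\lambda(\hat\omega_z)}_q/\norm{\hat\omega_z}_p\sim c\,(1-z)^{1/p-1/q}$ for some $c>0$, which diverges as $z\to1$ since $1/p-1/q<0$; hence $\norm{\mathcal{E}_\lambda}_{p\to q}=\infty$. The main obstacle is the calibration of the second paragraph: proving that the flow $a(t)$ is well defined on $[0,T]$, stays in $(0,1)$, and that $a(T;z_0)$ depends monotonically on $z_0$ with the stated limiting values, so that the interpolation lands exactly on the endpoint $1/q$ for the prescribed $\lambda,p,q$. Everything downstream is then forced by Theorem \ref{thm:logs}.
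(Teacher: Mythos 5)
Your proposal is correct and follows essentially the same route as the paper: reduction to positive, passive, finite-rank Fock-diagonal inputs via Theorem \ref{thm:maj} and truncation, then the Gross-type interpolation along $a'(t)=\mu(z(t),a(t))$ driven by Theorem \ref{thm:logs} (your monotonicity of $\Psi$ is exactly Lemma \ref{lem:pq}), and the same thermal-state asymptotics for the $p=q$ and $1<q<p$ cases. For the calibration step you flag as the main obstacle, the paper argues just as you sketch, except that instead of a monotone-comparison argument in $z_0$ it extends $\mu$ to $z=1$ by $\mu(1,a):=0$, invokes continuous dependence on parameters (which your endpoint-limits argument also needs in order to pass from monotone limits to the intermediate value $a(T)=1/q$), and drives $a(T)$ to $0$ for small $z_0$ via the bound $\mu(z,a)\le\phi(z,a)$ with $\phi(z_0,1/p)\to-\infty$ as $z_0\to0$.
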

\begin{rem}
$\|\mathcal{E}_0\|_{p\to q} = \infty$ for any $p,\,q>1$.
Indeed, $\mathcal{E}_0(\hat{\omega}_z) = |0\rangle\langle0|$ for any $0\le z <1$, hence
\begin{equation}
\lim_{z\to1}\frac{\left\|\mathcal{E}_0(\hat{\omega}_z)\right\|_q}{\|\hat{\omega}_z\|_p} = \lim_{z\to1}\frac{1}{{\|\hat{\omega}_z\|_p}} = \infty\;.
\end{equation}
\end{rem}
\begin{rem}
$\mathcal{E}_1 = \mathbb{I}$, hence
\begin{equation}
\|\mathcal{E}_1\|_{p\to q} = \left\{
                               \begin{array}{cc}
                                 1 & \text{if}\,1\le p\le q \\
                                 \infty & \text{if}\,1\le q<p \\
                               \end{array}
                             \right.\;.
\end{equation}
\end{rem}

\subsection{\it 1\textless p\textless q}

\begin{lem}
It is sufficient to prove Theorem \ref{thm:pq} for positive operators.
\begin{proof}
From \cite[Theorem 1]{audenaert2009note}, we can restrict to $\hat{X}$ Hermitian.
From \cite[Eq. (2)]{audenaert2009note}, we have $\left\|\mathcal{E}_\lambda\left(\hat{X}\right)\right\|_q\le\left\|\mathcal{E}_\lambda\left(\left|\hat{X}\right|\right)\right\|_q$, where $\left|\hat{X}\right|$ is the absolute value of $\hat{X}$.
Since $\left\|\hat{X}\right\|_p=\left\|\left|\hat{X}\right|\right\|_p$,
\begin{equation}
\frac{\left\|\mathcal{E}_\lambda\left(\hat{X}\right)\right\|_q}{\left\|\hat{X}\right\|_p}\le \frac{\left\|\mathcal{E}_\lambda\left(\left|\hat{X}\right|\right)\right\|_q}{\left\|\left|\hat{X}\right|\right\|_p}\;,
\end{equation}
and the claim follows.
\end{proof}
\end{lem}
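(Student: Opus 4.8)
The plan is a two-step reduction of the supremum in \eqref{eq:defCpq}: first from an arbitrary operator $\hat{X}$ with $\|\hat{X}\|_p<\infty$ to a Hermitian one, then from a Hermitian operator to a positive one. For the first step I would invoke \cite[Theorem 1]{audenaert2009note}, which asserts that the $p\to q$ norm of a Hermiticity-preserving map — in particular of the quantum channel $\mathcal{E}_\lambda$ — is attained on Hermitian inputs, so the supremum in \eqref{eq:defCpq} is unchanged if restricted to Hermitian $\hat{X}$.

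For the second step, given a Hermitian $\hat{X}$ I would use the positivity of $\mathcal{E}_\lambda$ together with the inequality $\|\mathcal{E}_\lambda(\hat{X})\|_q\le\|\mathcal{E}_\lambda(|\hat{X}|)\|_q$, valid in every unitarily invariant norm (this is \cite[Eq. (2)]{audenaert2009note}). Passing from $\hat{X}$ to $|\hat{X}|$ leaves the singular values, hence all Schatten norms, unchanged, so $\||\hat{X}|\|_p=\|\hat{X}\|_p$ and the ratio $\|\mathcal{E}_\lambda(\hat{X})\|_q/\|\hat{X}\|_p$ is dominated by the same ratio at the positive operator $|\hat{X}|$. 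Combining the two reductions, the supremum defining $\|\mathcal{E}_\lambda\|_{p\to q}$ is unchanged when one restricts to positive $\hat{X}$, which is exactly the assertion of the lemma.

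There is no real obstacle here beyond bookkeeping: both cited inequalities hold for Hermiticity-preserving (respectively positive) maps on the Schatten classes, hypotheses that the quantum channel $\mathcal{E}_\lambda$ satisfies, and the infinite-dimensional case is handled by the usual truncation on operators of finite Schatten-$p$ norm. The substantive work, which occupies the rest of the subsection, is the bound for positive $\hat{X}$: one normalizes $\|\hat{X}\|_p=1$ (so that a suitable power of $\hat{X}$ is a density operator), exploits the semigroup identity $\mathcal{E}_\lambda=\mathrm{e}^{t\mathcal{L}}$ with $\lambda=\mathrm{e}^{-t}$ to differentiate the logarithm of the output Schatten norm along the flow, and controls this derivative by the logarithmic Sobolev inequality of Theorem \ref{thm:logs}, in the spirit of Gross's treatment of the classical Gaussian kernels. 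The genuinely delicate point there — not in the present lemma — will be matching the free parameter $z$ in Theorem \ref{thm:logs} to $p$, $q$ and $\lambda$ so that the resulting differential inequality integrates to precisely the thermal-state value, simultaneously yielding the sharp constant and identifying the maximizer as a thermal Gaussian state.
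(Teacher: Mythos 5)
Your proposal matches the paper's proof: it uses the same two reductions, citing \cite[Theorem 1]{audenaert2009note} to restrict to Hermitian inputs and \cite[Eq. (2)]{audenaert2009note} together with the equality of the Schatten $p$ norms of $\hat{X}$ and $\left|\hat{X}\right|$ to pass to the positive operator $\left|\hat{X}\right|$. No gaps; this is essentially the paper's argument.
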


\begin{lem}
It is sufficient to prove Theorem \ref{thm:pq} for positive passive operators.
In other words, if \eqref{eq:pq} holds for any positive passive operator $\hat{X}$ with $\left\|\hat{X}\right\|_p<\infty$, then it holds for any operator $\hat{X}$ with $\left\|\hat{X}\right\|_p<\infty$.
\begin{proof}
From Theorem \ref{thm:maj}, the Fock rearrangement of the input does not change its $p$ norm and increases the $q$ norm of the output, i.e.\ for any $\hat{X}\ge0$ with $\left\|\hat{X}\right\|_p<\infty$
\begin{equation}
\frac{\left\|\mathcal{E}_\lambda\left(\hat{X}\right)\right\|_q}{\left\|\hat{X}\right\|_p}\le \frac{\left\|\mathcal{E}_\lambda\left(\hat{X}^\downarrow\right)\right\|_q}{\left\|\hat{X}^\downarrow\right\|_p}\;.
\end{equation}
\end{proof}
\end{lem}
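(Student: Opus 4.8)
The plan is to deduce this reduction directly from the majorization statement of Theorem \ref{thm:maj}, which has already absorbed all the analytic difficulty. By the preceding lemma it suffices to establish \eqref{eq:pq} for positive operators, so I would fix an arbitrary positive $\hat{X}$ with $\left\|\hat{X}\right\|_p<\infty$ and pass to its Fock rearrangement $\hat{X}^\downarrow$, which is passive by construction and, being a decreasing rearrangement of the same eigenvalues, shares the spectrum of $\hat{X}$.

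First I would check that the denominator of \eqref{eq:pq} is untouched by the rearrangement. Since the Schatten $p$-norm is a function of the singular values alone and, by the Remark following the definition of Fock rearrangement, $\hat{X}^\downarrow=\hat{U}\,\hat{X}\,\hat{U}^\dag$ for a suitable unitary $\hat{U}$, we have $\left\|\hat{X}^\downarrow\right\|_p=\left\|\hat{X}\right\|_p$; in particular $\left\|\hat{X}^\downarrow\right\|_p<\infty$, so the passive hypothesis is applicable to $\hat{X}^\downarrow$.

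Next I would control the numerator through Theorem \ref{thm:maj}, which states precisely that rearranging the input can only increase the output $q$-norm, $\left\|\mathcal{E}_\lambda(\hat{X})\right\|_q\le\left\|\mathcal{E}_\lambda(\hat{X}^\downarrow)\right\|_q$. Dividing by the common value $\left\|\hat{X}\right\|_p=\left\|\hat{X}^\downarrow\right\|_p$ gives
\[
\frac{\left\|\mathcal{E}_\lambda(\hat{X})\right\|_q}{\left\|\hat{X}\right\|_p}\le\frac{\left\|\mathcal{E}_\lambda(\hat{X}^\downarrow)\right\|_q}{\left\|\hat{X}^\downarrow\right\|_p}.
\]
Because $\hat{X}^\downarrow$ is passive, the assumed validity of \eqref{eq:pq} for passive operators bounds the right-hand side by $\left\|\mathcal{E}_\lambda\right\|_{p\to q}$, whence \eqref{eq:pq} holds for the arbitrary positive $\hat{X}$, and combining with the previous lemma for a general operator.

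As for obstacles, there is essentially none remaining at this stage: the entire weight rests on Theorem \ref{thm:maj} (itself built on the majorization result of \cite{de2015passive}). The only points requiring care are bookkeeping, namely that $\hat{X}^\downarrow$ is a well-defined positive compact operator — guaranteed since $\left\|\hat{X}\right\|_p<\infty$ forces the eigenvalues to decrease to $0$ — and that the unitary invariance of the Schatten norm applies verbatim; both are routine.
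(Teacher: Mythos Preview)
Your proposal is correct and follows exactly the paper's approach: invoke the preceding lemma to reduce to positive $\hat{X}$, then use Theorem \ref{thm:maj} together with the unitary invariance of the Schatten norm to conclude that passing to the Fock rearrangement $\hat{X}^\downarrow$ preserves the $p$-norm while only increasing the output $q$-norm. The paper's proof is simply a terser version of what you wrote.
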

\begin{lem}
It is sufficient to prove Theorem \ref{thm:pq} for positive passive operators with finite rank.
\begin{proof}
Let us suppose that
\begin{equation}\label{eq:hypC}
\frac{\left\|\mathcal{E}_\lambda\left(\hat{Y}\right)\right\|_q}{\left\|\hat{Y}\right\|_p}\le C
\end{equation}
for any positive passive $\hat{Y}$ with finite rank.
For any $n\in\mathbb{N}$ let
\begin{equation}
\hat{P}_n:=\sum_{k=0}^n|k\rangle\langle k|
\end{equation}
be the projector onto the first $n+1$ Fock states, and let $\hat{X}$ be a generic positive passive operator with $\left\|\hat{X}\right\|_p<\infty$.
Since $\hat{X}$ is diagonal in the Fock basis we have
\begin{equation}
0\le\hat{P}_n\,\hat{X}=\hat{X}\,\hat{P}_n=\hat{P}_n\,\hat{X}\,\hat{P}_n\le \hat{X}
\end{equation}
and
\begin{equation}
0\le\left(\hat{P}_n\,\hat{X}\right)^p=\hat{P}_n\,\hat{X}^p\le \hat{X}^p\;.
\end{equation}
We then have from the dominated convergence theorem
\begin{equation}
\lim_{n\to\infty}\mathrm{Tr}\left[\left(\hat{P}_n\,\hat{X}\right)^p\right]=\lim_{n\to\infty}\mathrm{Tr}\left[\hat{P}_n\,\hat{X}^p\right]=\mathrm{Tr}\,\hat{X}^p\;,
\end{equation}
i.e.
\begin{equation}\label{eq:limn}
\lim_{n\to\infty}\left\|\hat{P}_n\,\hat{X}\right\|_p=\left\|\hat{X}\right\|_p\;.
\end{equation}
Since $\mathcal{E}_\lambda$ is a positive map and preserves the set of Fock-diagonal operators, we also have
\begin{equation}
0\le\mathcal{E}_\lambda\left(\hat{P}_n\,\hat{X}\right)\leq\mathcal{E}_\lambda\left(\hat{X}\right)
\end{equation}
and
\begin{equation}
0\le\mathcal{E}_\lambda\left(\hat{P}_n\,\hat{X}\right)^q\leq\mathcal{E}_\lambda\left(\hat{X}\right)^q\;.
\end{equation}
Fatou's lemma implies
\begin{equation}
\mathrm{Tr}\left[\mathcal{E}_\lambda\left(\hat{X}\right)^q\right] \le \liminf_{n\to\infty}\mathrm{Tr}\left[\mathcal{E}_\lambda\left(\hat{P}_n\,\hat{X}\right)^q\right] \;,
\end{equation}
i.e.
\begin{equation}\label{eq:limE}
\left\|\mathcal{E}_\lambda\left(\hat{X}\right)\right\|_q\le \liminf_{n\to\infty}\left\|\mathcal{E}_\lambda\left(\hat{P}_n\,\hat{X}\right)\right\|_q\;.
\end{equation}
Finally, we have from \eqref{eq:limn} and \eqref{eq:limE}
\begin{equation}
\frac{\left\|\mathcal{E}_\lambda\left(\hat{X}\right)\right\|_q}{\left\|\hat{X}\right\|_p}\le \liminf_{n\to\infty}\frac{\left\|\mathcal{E}_\lambda\left(\hat{P}_n\,\hat{X}\right)\right\|_q}{\left\|\hat{P}_n\,\hat{X}\right\|_p}\le C\;,
\end{equation}
where the last inequality follows from the hypothesis \eqref{eq:hypC} applied to $\hat{Y}=\hat{P}_n\,\hat{X}$.
\end{proof}
\end{lem}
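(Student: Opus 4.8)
The plan is to realize an arbitrary positive passive operator $\hat{X}$ with $\norm{\hat{X}}_p<\infty$ as a monotone limit of finite-rank positive passive operators, and then pass the inequality \eqref{eq:pq} to the limit. Concretely, letting $\hat{P}_n=\sum_{k=0}^n|k\rangle\langle k|$ be the projector onto the first $n+1$ Fock states, I would take as approximants the truncations $\hat{P}_n\hat{X}$. Since a passive operator is by definition diagonal in the Fock basis with decreasing eigenvalues, each $\hat{P}_n\hat{X}$ is again positive and passive, has finite rank, and commutes with $\hat{X}$. The whole argument then reduces to controlling the two norms in the ratio $\norm{\mathcal{E}_\lambda(\hat{P}_n\hat{X})}_q/\norm{\hat{P}_n\hat{X}}_p$ separately as $n\to\infty$.

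For the input norm I would exploit that $\hat{P}_n$ and $\hat{X}$ are simultaneously diagonal, so that $(\hat{P}_n\hat{X})^p=\hat{P}_n\hat{X}^p$ and $0\le\hat{P}_n\hat{X}^p\le\hat{X}^p$ with $\hat{P}_n\hat{X}^p\nearrow\hat{X}^p$. The dominated (equivalently monotone) convergence theorem applied to the trace then yields $\mathrm{Tr}\,(\hat{P}_n\hat{X})^p\to\mathrm{Tr}\,\hat{X}^p$, i.e.\ $\norm{\hat{P}_n\hat{X}}_p\to\norm{\hat{X}}_p$. For the output norm the key point is that $\mathcal{E}_\lambda$ is a positive map that preserves the set of Fock-diagonal operators, so $\mathcal{E}_\lambda(\hat{P}_n\hat{X})$ and $\mathcal{E}_\lambda(\hat{X})$ are both Fock-diagonal and hence commute; from $0\le\mathcal{E}_\lambda(\hat{P}_n\hat{X})\le\mathcal{E}_\lambda(\hat{X})$ I may therefore raise both sides to the power $q$ and conclude $\mathcal{E}_\lambda(\hat{P}_n\hat{X})^q\le\mathcal{E}_\lambda(\hat{X})^q$. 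Fatou's lemma for the trace then gives the lower-semicontinuity estimate $\norm{\mathcal{E}_\lambda(\hat{X})}_q\le\liminf_{n\to\infty}\norm{\mathcal{E}_\lambda(\hat{P}_n\hat{X})}_q$.

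With these two facts in hand, I would close the argument by applying the finite-rank hypothesis \eqref{eq:hypC} to each $\hat{Y}=\hat{P}_n\hat{X}$ and taking the liminf:
\begin{equation}
\frac{\norm{\mathcal{E}_\lambda(\hat{X})}_q}{\norm{\hat{X}}_p}\le\liminf_{n\to\infty}\frac{\norm{\mathcal{E}_\lambda(\hat{P}_n\hat{X})}_q}{\norm{\hat{P}_n\hat{X}}_p}\le C\;.
\end{equation}
The main obstacle I anticipate is a subtle one: the maps $t\mapsto t^p$ and $t\mapsto t^q$ are not operator monotone for exponents larger than one, so the order relations $\hat{P}_n\hat{X}\le\hat{X}$ and $\mathcal{E}_\lambda(\hat{P}_n\hat{X})\le\mathcal{E}_\lambda(\hat{X})$ do not by themselves survive being raised to these powers. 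What rescues the argument — and is the feature I would emphasize — is that every operator appearing here is diagonal in the Fock basis, a property preserved by the channel, so all the relevant pairs commute and the ordering is stable under the functional calculus. The only genuine analytic input is then the interchange of limit and trace, justified by monotone convergence on the input side and by Fatou's lemma, which supplies exactly the one-sided inequality needed, on the output side.
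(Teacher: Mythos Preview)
Your proposal is correct and is essentially the same argument as the paper's: the same Fock-space truncations $\hat{P}_n\hat{X}$, dominated convergence for the input $p$-norm, Fatou's lemma for the output $q$-norm, and the key observation that all operators involved are Fock-diagonal so the ordering survives the functional calculus. Your explicit remark about why operator monotonicity is not an obstacle here is a useful clarification that the paper leaves implicit.
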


Theorem \ref{thm:pq} follows integrating the logarithmic Sobolev inequality \eqref{logs}.
\begin{lem}\label{lem:S}
For any positive operator $\hat{X}$ with finite rank
\begin{equation}
\frac{d}{da} \ln \norm{ \hat{X}  }_\frac{1}{a} = S_a\left(\hat{X}\right)\;,
\end{equation}
where $S_a$ is as in \eqref{eq:defs}.
\begin{proof}
We have
\begin{align}
\frac{d}{da}\ln\left\|\hat{X}\right\|_\frac{1}{a} &= \frac{d}{da}\left(a\ln\mathrm{Tr}\,\hat{X}^\frac{1}{a}\right) = \ln \mathrm{Tr}\,\hat{X}^\frac{1}{a} - \frac{\mathrm{Tr}\left[\hat{X}^\frac{1}{a}\ln\hat{X}\right]}{a\,\mathrm{Tr}\,\hat{X}^\frac{1}{a}}\nonumber\\
&= -\mathrm{Tr}\left[\frac{\hat{X}^\frac{1}{a}}{\mathrm{Tr}\,\hat{X}^\frac{1}{a}}\ln\frac{\hat{X}^\frac{1}{a}}{\mathrm{Tr}\,\hat{X}^\frac{1}{a}}\right] = S\left(\frac{\hat{X}^\frac{1}{a}}{\mathrm{Tr}\,\hat{X}^\frac{1}{a}}\right)\;.
\end{align}
\end{proof}
\end{lem}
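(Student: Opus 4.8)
The plan is to prove the identity by a direct differentiation, the whole point being that the finite-rank hypothesis reduces the trace to a finite sum of elementary functions of $a$, for which term-by-term differentiation is immediate. First I would rewrite the left-hand side in a convenient form. For a positive operator $\hat{X}$ the definition \eqref{eq:defpn} gives $\norm{\hat{X}}_{1/a}=\bra{\mathrm{Tr}\,\hat{X}^{1/a}}^a$, since $\hat{X}^\dag\hat{X}=\hat{X}^2$ and $\bra{\hat{X}^2}^{1/(2a)}=\hat{X}^{1/a}$. Taking the logarithm, the quantity to be differentiated is
\begin{equation}
\ln\norm{\hat{X}}_{1/a}=a\,\ln\mathrm{Tr}\,\hat{X}^{1/a}\;.
\end{equation}
Writing the spectral decomposition of $\hat{X}$ in its own eigenbasis as $\hat{X}=\sum_n x_n\,|e_n\rangle\langle e_n|$, the finite-rank assumption ensures that only finitely many eigenvalues $x_n$ are strictly positive, so that $\mathrm{Tr}\,\hat{X}^{1/a}=\sum_{x_n>0}x_n^{1/a}$ is a finite sum of smooth functions of $a$.

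Next I would differentiate. Applying the product rule,
\begin{equation}
\frac{d}{da}\ln\norm{\hat{X}}_{1/a}=\ln\mathrm{Tr}\,\hat{X}^{1/a}+\frac{a}{\mathrm{Tr}\,\hat{X}^{1/a}}\,\frac{d}{da}\mathrm{Tr}\,\hat{X}^{1/a}\;,
\end{equation}
and differentiating each surviving term $x_n^{1/a}=\mathrm{e}^{(1/a)\ln x_n}$ gives $\frac{d}{da}x_n^{1/a}=-a^{-2}\,x_n^{1/a}\ln x_n$, hence
\begin{equation}
\frac{d}{da}\mathrm{Tr}\,\hat{X}^{1/a}=-\frac{1}{a^2}\,\mathrm{Tr}\sqa{\hat{X}^{1/a}\ln\hat{X}}\;.
\end{equation}
Substituting back, the derivative equals $\ln\mathrm{Tr}\,\hat{X}^{1/a}-\frac{1}{a}\,\mathrm{Tr}\sqa{\hat{X}^{1/a}\ln\hat{X}}/\mathrm{Tr}\,\hat{X}^{1/a}$.

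It then remains to recognise this expression as $S_a(\hat{X})$. Setting $c:=\mathrm{Tr}\,\hat{X}^{1/a}$ and $\hat{\rho}:=\hat{X}^{1/a}/c$, I would expand the von Neumann entropy using $\ln\hat{\rho}=\frac{1}{a}\ln\hat{X}-(\ln c)\,\hat{\mathbb{I}}$ on the support of $\hat{X}$, obtaining
\begin{equation}
S_a(\hat{X})=-\mathrm{Tr}\sqa{\hat{\rho}\ln\hat{\rho}}=-\frac{1}{a}\,\frac{\mathrm{Tr}\sqa{\hat{X}^{1/a}\ln\hat{X}}}{c}+\ln c\;,
\end{equation}
which matches the differentiated expression term by term, establishing the claim.

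There is no genuine obstacle here: the finite-rank hypothesis is exactly what trivialises the argument, turning the trace into a finite sum to which term-by-term differentiation applies with no interchange-of-limit or convergence concerns. The only point requiring a word of care is the treatment of the kernel of $\hat{X}$, where the convention $0\ln0=0$ must be used so that the vanishing eigenvalues drop out of both $\mathrm{Tr}\sqa{\hat{X}^{1/a}\ln\hat{X}}$ and the entropy; this is automatic since $x^{1/a}\ln x\to0$ as $x\to0^+$ for $a>0$, so the zero eigenvalues contribute nothing throughout.
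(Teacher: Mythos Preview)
Your proof is correct and follows essentially the same route as the paper: rewrite $\ln\|\hat{X}\|_{1/a}=a\ln\mathrm{Tr}\,\hat{X}^{1/a}$, differentiate via the product rule to obtain $\ln\mathrm{Tr}\,\hat{X}^{1/a}-\frac{1}{a}\,\mathrm{Tr}[\hat{X}^{1/a}\ln\hat{X}]/\mathrm{Tr}\,\hat{X}^{1/a}$, and recognise this as the von Neumann entropy of the normalised power. Your version is somewhat more detailed in spelling out why the finite-rank hypothesis makes the differentiation unproblematic and in handling the kernel of $\hat{X}$, but the argument is the same.
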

\begin{lem}\label{lem:pq}
Let us fix $T>0$ and $0<z_0<1$.
We define for any $0\le t\le T$
\begin{equation}\label{eq:defzt}
z(t) = \frac{\mathrm{e}^{-t}z}{1-\left(1-\mathrm{e}^{-t}\right)z}\;,
\end{equation}
such that
\begin{equation}\label{eq:defwt}
\hat{\omega}_{z(t)}=\mathrm{e}^{t\mathcal{L}}\left(\hat{\omega}_{z_0}\right)\;,
\end{equation}
and let $a(t)$ satisfy
\begin{equation}\label{eq:ode-a}
0<a(t)<1\;,\qquad \frac{d}{dt}a(t)=\mu(z(t),a(t))\;,
\end{equation}
with $\mu(z,a)$ as in \eqref{eq:mu00}.
Then for any positive passive operator $\hat{X}$ with finite rank
\begin{equation}\label{eq:inequality-lemma-12}
\frac{\left\|\mathrm{e}^{T\mathcal{L}}\left(\hat{X}\right)\right\|_\frac{1}{a(T)}}{\left\|\hat{X}\right\|_\frac{1}{a(0)}}\le \frac{\left\|\mathrm{e}^{T\mathcal{L}}\left(\hat{\omega}_{z_0}\right)\right\|_\frac{1}{a(T)}}{\left\|\hat{\omega}_{z_0}\right\|_\frac{1}{a(0)}}\;.
\end{equation}
\begin{proof}
We have with the help of Lemma \ref{lem:S} and of the differential equation \eqref{eq:ode-a} for $a(t)$
\begin{align}\label{eq:dt}
\frac{d}{dt}\ln\left\|\mathrm{e}^{t\mathcal{L}}\left(\hat{X}\right)\right\|_\frac{1}{a(t)} &=F_{a(t)}\left(\mathrm{e}^{t\mathcal{L}}\left(\hat{X}\right)\right) + S_{a(t)}\left(\mathrm{e}^{t\mathcal{L}}\left(\hat{X}\right)\right)\frac{d}{dt}a(t)\nonumber\\
&= \mathcal{F}_{z(t),a(t)}\left(\mathrm{e}^{t\mathcal{L}}\left(\hat{X}\right)\right) \le \mathcal{F}_{z(t),a(t)}\left(\hat{\omega}_{z(t)}\right)\nonumber\\
&=F_{a(t)}\left(\mathrm{e}^{t\mathcal{L}}\left(\hat{\omega}_{z_0}\right)\right) +S_{a(t)}\left(\mathrm{e}^{t\mathcal{L}}\left(\hat{\omega}_{z_0}\right)\right)\frac{d}{dt}a(t)\nonumber\\
&=\frac{d}{dt}\ln\left\|\mathrm{e}^{t\mathcal{L}}\left(\hat{\omega}_{z_0}\right)\right\|_\frac{1}{a(t)}\;,
\end{align}
where we have used the logarithmic Sobolev inequality \eqref{logs} with $\hat{X}$ replaced by $\mathrm{e}^{t\mathcal{L}}\left(\hat{X}\right)$.
Inequality \eqref{eq:inequality-lemma-12} then follows integrating \eqref{eq:dt} between $t=0$ and $t=T$.
\end{proof}
\end{lem}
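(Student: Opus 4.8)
The plan is to prove the inequality \eqref{eq:inequality-lemma-12} in differential form and then integrate. Specifically, I would introduce the two auxiliary functions
\begin{equation}
\Phi(t) := \ln\norm{\mathrm{e}^{t\mathcal{L}}\bra{\hat{X}}}_{\frac{1}{a(t)}}\;,\qquad \Psi(t) := \ln\norm{\mathrm{e}^{t\mathcal{L}}\bra{\hat{\omega}_{z_0}}}_{\frac{1}{a(t)}}\;,
\end{equation}
and aim to establish the pointwise bound $\Phi'(t)\le\Psi'(t)$ for all $0\le t\le T$. Since $\mathrm{e}^{0\cdot\mathcal{L}}=\hat{\mathbb{I}}$, the two ratios appearing in \eqref{eq:inequality-lemma-12} are exactly $\exp\bra{\Phi(T)-\Phi(0)}$ and $\exp\bra{\Psi(T)-\Psi(0)}$; hence integrating the pointwise inequality from $t=0$ to $t=T$ and exponentiating would yield the claim.

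The heart of the argument is the chain-rule computation of $\Phi'(t)$, which depends on $t$ in two distinct ways: through the evolving operator $\mathrm{e}^{t\mathcal{L}}(\hat{X})$ and through the time-dependent Schatten exponent $1/a(t)$. First I would extract the contribution of the evolving operator at fixed exponent: by the semigroup identity $\mathrm{e}^{(t+s)\mathcal{L}}=\mathrm{e}^{s\mathcal{L}}\,\mathrm{e}^{t\mathcal{L}}$ together with the definition \eqref{eq:defs} of $F_a$, this contribution equals $F_{a(t)}\bra{\mathrm{e}^{t\mathcal{L}}(\hat{X})}$. Next I would extract the contribution of the varying exponent at fixed operator: by Lemma \ref{lem:S}, $\frac{d}{da}\ln\norm{\hat{Y}}_{1/a}=S_a(\hat{Y})$, so this term is $S_{a(t)}\bra{\mathrm{e}^{t\mathcal{L}}(\hat{X})}\,a'(t)$. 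Substituting the defining ODE $a'(t)=\mu(z(t),a(t))$ and recognizing the definition \eqref{eq:defs} of $\mathcal{F}_{z,a}$, the two contributions combine into
\begin{equation}
\Phi'(t)=\mathcal{F}_{z(t),a(t)}\bra{\mathrm{e}^{t\mathcal{L}}\bra{\hat{X}}}\;.
\end{equation}

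At this point I would invoke the logarithmic Sobolev inequality \eqref{logs} of Theorem \ref{thm:logs}, applied with $\hat{X}$ replaced by the positive finite-rank operator $\mathrm{e}^{t\mathcal{L}}(\hat{X})$ and with parameters $z=z(t)$, $a=a(t)$; this is admissible because $\mathrm{e}^{t\mathcal{L}}=\mathcal{E}_{\mathrm{e}^{-t}}$ is a positive map preserving finite rank. It gives $\mathcal{F}_{z(t),a(t)}\bra{\mathrm{e}^{t\mathcal{L}}(\hat{X})}\le\mathcal{F}_{z(t),a(t)}\bra{\hat{\omega}_{z(t)}}$. The identical chain-rule computation applied to the thermal family, combined with the semigroup identity \eqref{eq:defwt} $\hat{\omega}_{z(t)}=\mathrm{e}^{t\mathcal{L}}(\hat{\omega}_{z_0})$, shows that this upper bound equals precisely $\Psi'(t)$. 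Therefore $\Phi'(t)\le\Psi'(t)$ for every $t$, and integrating finishes the proof.

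The step I expect to require the most care is the chain-rule decomposition of $\Phi'(t)$, namely justifying that the total derivative in $t$ splits cleanly into the ``operator'' term $F_{a(t)}$ and the ``exponent'' term $S_{a(t)}\,a'(t)$. This rests on the joint differentiability of the map $(t,a)\mapsto\ln\norm{\mathrm{e}^{t\mathcal{L}}(\hat{X})}_{1/a}$. Because $\hat{X}$ has finite rank and $\mathrm{e}^{t\mathcal{L}}$ keeps it supported on a fixed finite-dimensional Fock subspace (by the explicit formula in Lemma \ref{lem:dlX}), the norm is a smooth positive function of finitely many nonnegative eigenvalues, and the map $s\mapsto s^{1/a}$ is $C^1$ up to $s=0$ since $1/a>1$; together these make the joint smoothness and the chain rule rigorous. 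I would therefore record this regularity first, after which the computation above proceeds term by term.
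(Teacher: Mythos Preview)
Your proposal is correct and follows exactly the same route as the paper: compute the total $t$-derivative of $\ln\|\mathrm{e}^{t\mathcal{L}}(\hat{X})\|_{1/a(t)}$ as $F_{a(t)}+S_{a(t)}\,a'(t)=\mathcal{F}_{z(t),a(t)}$ via Lemma~\ref{lem:S} and the ODE~\eqref{eq:ode-a}, bound it by $\mathcal{F}_{z(t),a(t)}(\hat{\omega}_{z(t)})$ using Theorem~\ref{thm:logs}, recognize the latter as the analogous derivative for $\hat{\omega}_{z_0}$, and integrate. Your added remarks on the chain-rule regularity (finite Fock support, $C^1$ of $s\mapsto s^{1/a}$) make explicit what the paper leaves implicit, but the argument is otherwise identical.
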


The claim \eqref{eq:pq} then follows from
\begin{prop}
For any $1<p<q$ and any $T >0$ there is $0<z_0<1$ such that the solution to \eqref{eq:ode-a}, with $a(0) = 1/p$ and $z(t)$ given by \eqref{eq:defwt}, satisfies $a(T) = 1/q$.
\begin{proof}
We can extend the function $(z,a)\mapsto\mu(z,a)$ to $z=1$ with $\mu(1,a):=0$.
This extension is continuously differentiable for $0<z\le1$ and $0<a<1$.
We then have $a(T)=a(0)=1/p$ for $z_0=1$.
Since $0<1/q<1/p$, by continuous dependence of solutions of differential equations with parameters, e.g., \cite[Theorem 3.1]{hartman1982ordinary}, it is sufficient to show that there exists some $0<z_0<1$ such that $a(T)=0$.

We have for $0<a<1$ and $0<z<1$
\begin{equation}\label{eq:muphi}
\mu(z,a)\le \frac{a\,z^\frac{a-1}{a}-1}{\ln z^\frac{1}{a}}:=\phi(z,a)\;.
\end{equation}
The function $a\mapsto\phi(z,a)$ is increasing for $z\le \mathrm{e}^{-2}$.
Indeed,
\begin{equation}
\frac{\partial}{\partial a}\phi(z,a)= z^\frac{a-1}{a}\frac{2a-z^\frac{1-a}{a}+\ln z}{\ln z}\ge z^\frac{a-1}{a}\frac{2+\ln z}{\ln z}\ge0\;.
\end{equation}
Moreover, the function $z\mapsto\phi(z,a)$ is increasing for $z\le \mathrm{e}^\frac{a}{a-1}$.
Indeed,
\begin{equation}
\frac{\partial}{\partial z}\phi(z,a)= a\frac{z^\frac{1-a}{a}-a-\left(1-a\right)\ln z}{z^\frac{1}{a}\ln^2z}\ge a\frac{-a-\left(1-a\right)\ln z}{z^\frac{1}{a}\ln^2z}\ge0\;.
\end{equation}
From Lemma \ref{lem:mu} $\mu(z,a)<0$, hence the function $t\mapsto a(t)$ is decreasing and therefore
\begin{equation}\label{eq:a}
a(t)\le a(0)=1/p\;.
\end{equation}
From \eqref{eq:defzt}, the function $t\mapsto z(t)$ is decreasing, and therefore
\begin{equation}\label{eq:zd}
z(t)\le z_0\;.
\end{equation}
Then, choosing $z_0\le \mathrm{e}^{-2}$ and $z_0\le \mathrm{e}^\frac{1}{1-p}$ we have $z(t)\le \mathrm{e}^{-2}$ and $z(t)\le \mathrm{e}^\frac{1}{1-p}$ for any $t>0$.
Then,
\begin{equation}
\mu(z(t),a(t))\le\phi(z(t),a(t))\le \phi(z(t),1/p)\le\phi(z_0,1/p)\;,
\end{equation}
where the first inequality follows from \eqref{eq:muphi}, the second inequality follows from \eqref{eq:a} since the function $a\mapsto\phi(z,a)$ is increasing for $z\le \mathrm{e}^{-2}$, and the last inequality follows from \eqref{eq:zd} since the function $z\mapsto\phi(z,a)$ is increasing for $z\le \mathrm{e}^\frac{a}{a-1}$.
We then have
\begin{equation}
a(T)\le \frac{1}{p}+\phi(z_0,1/p)\;T\;.
\end{equation}
Since $\lim_{z\to0}\phi(z,1/p)=-\infty$ for any $p>1$, for sufficiently small $z_0$ we will surely have $a(T)=0$.
\end{proof}
\end{prop}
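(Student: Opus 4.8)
\emph{Proof proposal.} The plan is a shooting argument in the initial temperature parameter $z_0$, with the intermediate value theorem finishing the job. I would first record the two structural facts about $\mu$ that control everything. Writing $\mu(z,a)=\nu(z^{1/a},a)$ with $\nu$ as in the statement of Lemma~\ref{lem:nu}, that lemma gives $\mu(z,a)<0$ for $0<z<1$, $0<a<1$, so along any solution of \eqref{eq:ode-a} the function $t\mapsto a(t)$ is strictly decreasing; since $a(0)=1/p<1$, the constraint $a(t)<1$ is then automatic, and the only way the solution can fail to reach time $T$ is by $a$ hitting $0$. Moreover $s\mapsto\nu(s,a)$ is increasing and, by \eqref{eq:defzt}, $z(t;z_0)$ is increasing in $z_0$ for each $t$ and decreasing in $t$ for each $z_0$; the comparison principle then shows $a(t;z_0)$ is nondecreasing in $z_0$. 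I would also extend $\mu$ to $z=1$ by $\mu(1,a):=0$; expanding $\mu$ around $z=1$ one sees the numerator vanishes to second order while the denominator vanishes to first order, so the extension is continuous (indeed $C^1$) on $(0,1]\times(0,1)$, which is what the continuous-dependence theorem (e.g.\ \cite[Theorem~3.1]{hartman1982ordinary}) needs.

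Next I would pin down the endpoints of the shooting. At $z_0=1$ the composition law \eqref{eq:defzt} gives $z(t)\equiv1$, hence $\dot a\equiv\mu(1,a)=0$ and $a(\,\cdot\,;1)\equiv 1/p$ on all of $[0,T]$. For small $z_0$ I would compare $\mu$ with $\phi(z,a):=(a\,z^{(a-1)/a}-1)/\ln z^{1/a}$: one has $\mu-\phi=(1-a)z/\ln z^{1/a}<0$, while two one-line sign checks of partial derivatives give that $a\mapsto\phi(z,a)$ is increasing for $z\le\mathrm{e}^{-2}$ and $z\mapsto\phi(z,a)$ is increasing for $z\le\mathrm{e}^{a/(a-1)}$. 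Along the trajectory $a(t;z_0)\le 1/p$ and $z(t;z_0)\le z_0$, so for $z_0\le\min(\mathrm{e}^{-2},\mathrm{e}^{1/(1-p)})$ we obtain $\dot a=\mu(z(t),a(t))\le\phi(z_0,1/p)$, a negative constant, hence $a(T)\le 1/p+\phi(z_0,1/p)\,T$. Since $\lim_{z_0\to0}\phi(z_0,1/p)=-\infty$, once $z_0$ is small enough the solution is forced to hit $0$ strictly before time $T$; together with monotonicity in $z_0$, the set $I:=\{z_0\in(0,1]:a(\,\cdot\,;z_0)\text{ exists and stays positive on }[0,T]\}$ is an interval $(c,1]$ with $0<c<1$.

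Finally I would apply the intermediate value theorem. Let $\tau(z_0)\in(0,\infty]$ be the time at which $a(\,\cdot\,;z_0)$ reaches $0$, and set $\Psi(z_0):=a(\min(T,\tau(z_0));z_0)$. Then $\Psi(1)=1/p$, $\Psi\equiv0$ on $(0,c]$, and $\Psi$ is continuous on $(0,1]$: at a point with $\tau>T$ this is continuous dependence on $[0,T]$ in the region where $a$ stays bounded away from $0$, and near a point with $\tau\le T$ one uses that $a$ is decreasing to sandwich $\Psi$ between $0$ and a quantity tending to $0$. By the intermediate value theorem there is $z_0\in(c,1)\subset(0,1)$ with $\Psi(z_0)=1/q$, and since $1/q>0$ this forces $\tau(z_0)>T$, i.e.\ the solution genuinely satisfies $0<a(t)<1$ on $[0,T]$ with $a(T)=1/q$, as required.

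The step I expect to be the main obstacle is not any single estimate but the continuity bookkeeping at the left endpoint of $I$: establishing that $\Psi(z_0)\to0$ as $z_0\downarrow c$ (equivalently, that the hitting time $\tau$ behaves continuously up to the boundary) so that the intermediate value theorem truly applies, and that $c>0$. The growth bound $\phi(z_0,1/p)\to-\infty$ as $z_0\to0$ is exactly what gives $c>0$, and it is useful precisely because it trades the parameter-dependent differential inequality $\dot a\le\mu$ for a comparison with the constant-coefficient equation $\dot a=\phi(z_0,1/p)$; checking that this constant bound holds along the whole trajectory is what forces the monotonicity analysis of $\phi$ and hence the two thresholds $\mathrm{e}^{-2}$ and $\mathrm{e}^{1/(1-p)}$.
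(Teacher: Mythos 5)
Your proposal is correct and follows essentially the same route as the paper: a shooting argument in $z_0$, with the endpoint $z_0=1$ giving $a(T)=1/p$, the comparison $\mu\le\phi$ together with the same monotonicity thresholds $\mathrm{e}^{-2}$ and $\mathrm{e}^{1/(1-p)}$ forcing $a$ to reach $0$ for small $z_0$, and continuity in $z_0$ to land on $1/q$. Your hitting-time function $\Psi$ and the monotonicity of $a(\cdot\,;z_0)$ in $z_0$ are simply a more careful bookkeeping of what the paper extracts directly from continuous dependence on parameters, not a different method.
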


\subsection{\it 1\textless p=q}
With the same proof used in the case $1<p<q$, we can restrict to $\hat{X}$ positive passive with finite rank.
Integrating \eqref{eq:Fa} with $a=1/p$ between $t=0$ and $t=-\ln\lambda$ we get
\begin{equation}
\frac{\left\|\mathcal{E}_\lambda\left(\hat{X}\right)\right\|_p}{\left\|\hat{X}\right\|_p} \le \lambda^\frac{1-p}{p}\;.
\end{equation}
We can compute for any $0<z<1$
\begin{equation}
\left\|\hat{\omega}_z\right\|_p = \frac{1-z}{\left(1-z^p\right)^\frac{1}{p}} := f_p(z)\;.
\end{equation}
We have for $\epsilon\to0$
\begin{equation}\label{eq:deffp}
f_p(1-\epsilon) = \epsilon^\frac{p-1}{p}\frac{1+\mathcal{O}(\epsilon)}{p^\frac{1}{p}}\;,
\end{equation}
and using \eqref{eq:zlambda}
\begin{equation}
\lim_{\epsilon\to0}\frac{\left\|\mathcal{E}_\lambda\left(\hat{\omega}_{1-\epsilon}\right)\right\|_p}{\left\|\hat{\omega}_{1-\epsilon}\right\|_p} = \lim_{\epsilon\to0}\frac{f_p\left(1-\frac{\epsilon}{\lambda+\left(1-\lambda\right)\epsilon}\right)}{f_p(1-\epsilon)} = \lambda^\frac{1-p}{p}\;.
\end{equation}

\subsection{\it 1\textless q\textless p}
We have using \eqref{eq:zlambda} and \eqref{eq:deffp}
\begin{equation}
\lim_{\epsilon\to0}\frac{\left\|\mathcal{E}_\lambda\left(\hat{\omega}_{1-\epsilon}\right)\right\|_q}{\left\|\hat{\omega}_{1-\epsilon}\right\|_p} = \lim_{\epsilon\to0}\frac{f_q\left(1-\frac{\epsilon}{\lambda+\left(1-\lambda\right)\epsilon}\right)}{f_p(1-\epsilon)} = \infty\;.
\end{equation}

\section{The quantum-limited amplifier has Gaussian maximizers}\label{sec:ampl}
\begin{thm}\label{thm:ampl}
For any $1<p<q$ and any $\kappa>1$ the $p\to q$ norm of $\mathcal{A}_\kappa$ is achieved by a thermal Gaussian state $\hat{\omega}$ (that depends on $\kappa$, $p$ and $q$), i.e.\ for any operator $\hat{X}$ with $\left\|\hat{X}\right\|_p<\infty$
\begin{equation}
\frac{\left\|\mathcal{A}_\kappa\left(\hat{X}\right)\right\|_q}{\left\|\hat{X}\right\|_p}\le \frac{\left\|\mathcal{A}_\kappa\left(\hat{\omega}\right)\right\|_q}{\left\|\hat{\omega}\right\|_p} = \left\|\mathcal{A}_\kappa\right\|_{p\to q}\;.
\end{equation}
For any $p>1$ and any $\kappa>1$ the $p\to p$ norm of $\mathcal{A}_\kappa$ is asymptotically achieved by thermal Gaussian states with infinite temperature, i.e.\ for any operator $\hat{X}$ with $\left\|\hat{X}\right\|_p<\infty$
\begin{equation}
\frac{\left\|\mathcal{A}_\kappa\left(\hat{X}\right)\right\|_p}{\left\|\hat{X}\right\|_p} \le \lim_{z\to1} \frac{\left\|\mathcal{A}_\kappa\left(\hat{\omega}_z\right)\right\|_p}{\left\|\hat{\omega}_z\right\|_p} = \kappa^\frac{1-p}{p} = \left\|\mathcal{A}_\kappa\right\|_{p\to p}\;.
\end{equation}
For any $1<q<p$ and any $\kappa>1$ the $p\to q$ norm of $\mathcal{A}_\kappa$ is infinite and is asymptotically achieved by thermal Gaussian states with infinite temperature:
\begin{equation}
\lim_{z\to1} \frac{\left\|\mathcal{A}_\kappa\left(\hat{\omega}_z\right)\right\|_q}{\left\|\hat{\omega}_z\right\|_p} = \infty = \left\|\mathcal{A}_\kappa\right\|_{p\to q}\;.
\end{equation}
\end{thm}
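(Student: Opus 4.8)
The plan is to deduce Theorem \ref{thm:ampl} entirely from the attenuator results of Theorem \ref{thm:pq}, using the Hilbert--Schmidt duality of Proposition \ref{prop:dual}. Recall the general fact that for a linear map $\Phi$ on operators and its adjoint $\Phi^*$ with respect to the trace pairing one has $\norm{\Phi}_{p\to q}=\norm{\Phi^*}_{q'\to p'}$, where $1/p+1/p'=1/q+1/q'=1$: the inequality ``$\le$'' is immediate from H\"older's inequality, and ``$\ge$'' follows by testing against a suitable $p'$-dual vector of the image. By \eqref{eq:dual} the adjoint of $\mathcal{A}_\kappa$ is $\frac1\kappa\,\mathcal{E}_{1/\kappa}$, so
\begin{equation}
\norm{\mathcal{A}_\kappa}_{p\to q}=\frac1\kappa\,\norm{\mathcal{E}_{1/\kappa}}_{q'\to p'}\;.
\end{equation}
It then remains to match the ordering of the conjugate exponents with the three regimes of Theorem \ref{thm:pq} and to track the maximizers.

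For $1<p<q$ one has $1<q'<p'$, so Theorem \ref{thm:pq} applies to $\mathcal{E}_{1/\kappa}$ in the ``source exponent smaller than target exponent'' regime: the norm $\norm{\mathcal{E}_{1/\kappa}}_{q'\to p'}$ is finite and is attained by some thermal state $\hat\omega$, which we may take with $\norm{\hat\omega}_{q'}=1$. Set $\hat X_0:=\big(\mathcal{E}_{1/\kappa}(\hat\omega)\big)^{p'-1}$. Since $\mathcal{E}_{1/\kappa}$ sends thermal states to thermal states by \eqref{eq:zlambda}, and since by \eqref{eq:omegaz} any positive power of a thermal state is proportional to a thermal state, $\hat X_0$ is proportional to a thermal Gaussian state. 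A short computation with H\"older's inequality, using the identity $p(p'-1)=p'$ (so that $\norm{\hat A^{\,p'-1}}_p=\norm{\hat A}_{p'}^{p'/p}$), shows that $\hat X_0$ saturates the chain of inequalities underlying the duality, whence $\norm{\mathcal{A}_\kappa(\hat X_0)}_q/\norm{\hat X_0}_p=\norm{\mathcal{A}_\kappa}_{p\to q}$; this yields the first assertion. As in the attenuator case, one first proves the inequality for finite-rank passive inputs (reducing via Theorem \ref{thm:maj} and the truncations $\hat P_n$), and then removes the restriction by dominated convergence and Fatou's lemma.

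For $1<p=q$ one has $p'=q'$, and Theorem \ref{thm:pq} gives $\norm{\mathcal{E}_{1/\kappa}}_{p'\to p'}=(1/\kappa)^{(1-p')/p'}=\kappa^{1/p}$, hence $\norm{\mathcal{A}_\kappa}_{p\to p}=\kappa^{(1-p)/p}$. For $1<q<p$ one has $q'>p'$, so the last part of Theorem \ref{thm:pq} gives $\norm{\mathcal{E}_{1/\kappa}}_{q'\to p'}=\infty$ and therefore $\norm{\mathcal{A}_\kappa}_{p\to q}=\infty$. In both cases the asymptotic achievement by infinite-temperature thermal states is obtained by direct computation: by \eqref{eq:z'} we have $\mathcal{A}_\kappa(\hat\omega_{1-\epsilon})=\hat\omega_{1-\epsilon/\kappa}$, so \eqref{eq:deffp} gives
\begin{equation}
\lim_{\epsilon\to0}\frac{\norm{\mathcal{A}_\kappa(\hat\omega_{1-\epsilon})}_q}{\norm{\hat\omega_{1-\epsilon}}_p}=\lim_{\epsilon\to0}\frac{f_q(1-\epsilon/\kappa)}{f_p(1-\epsilon)}\;,
\end{equation}
which equals $\kappa^{(1-p)/p}$ when $q=p$ and $+\infty$ when $q<p$, matching the two upper bounds above.

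The main obstacle is not analytic depth --- that work is already done in Theorems \ref{thm:logs} and \ref{thm:pq} --- but careful bookkeeping around the duality. Proposition \ref{prop:dual} pairs a trace-class operator with a bounded one, so the identity $\norm{\mathcal{A}_\kappa}_{p\to q}=\frac1\kappa\norm{\mathcal{E}_{1/\kappa}}_{q'\to p'}$, and in particular its equality case, must first be established on a restricted class of inputs (finite-rank, or passive finite-rank) and then extended by the same approximation used for Theorem \ref{thm:pq}. One must also check that the extremal operator $\hat X_0$ produced by the duality from the attenuator's thermal maximizer is genuinely a thermal Gaussian state, which rests only on the two elementary facts that $\mathcal{E}_{1/\kappa}$ preserves thermality (eq.\ \eqref{eq:zlambda}) and that powers of thermal states are thermal (eq.\ \eqref{eq:omegaz}). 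Everything else is a routine computation with the explicit formula $\norm{\hat\omega_z}_p=f_p(z)$ and the M\"obius transformations \eqref{eq:zlambda} and \eqref{eq:z'}.
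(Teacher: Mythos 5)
Your proposal is correct and follows essentially the same route as the paper: Hilbert--Schmidt duality with the quantum-limited attenuator (Proposition \ref{prop:dual}) combined with H\"older's inequality and Theorem \ref{thm:pq} for $1<q'<p'$, the explicit extremizer $\mathcal{E}_{1/\kappa}(\hat\omega)^{p'-1}$ recognized as thermal because the attenuator preserves thermality and powers of thermal states are thermal, and the direct $f_p$-asymptotics via \eqref{eq:z'} and \eqref{eq:deffp} for the $p=q$ and $q<p$ cases. The only cosmetic difference is that you invoke the abstract identity $\norm{\Phi}_{p\to q}=\norm{\Phi^*}_{q'\to p'}$ up front, whereas the paper obtains the same conclusion by two explicit applications of H\"older's inequality (one for the upper bound, one to exhibit the maximizer), restricted throughout to positive passive finite-rank inputs exactly as you indicate.
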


With the same argument we used for the attenuator, we can reduce to $\hat{X}$ positive, passive and with finite rank.
In the proof we will repeatedly use
\begin{thm}[H\"older's inequality]
For any two commuting positive matrices $\hat{A}$ and $\hat{B}$ and any $p>1$
\begin{equation}
\mathrm{Tr}\left[\hat{A}\;\hat{B}\right]\le\left\|\hat{A}\right\|_p\left\|\hat{B}\right\|_{p'}\;,
\end{equation}
where $p':=\frac{p}{p-1}$ is the conjugate exponent of $p$.
\end{thm}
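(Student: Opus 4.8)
The plan is to reduce the operator inequality to the classical H\"older inequality for nonnegative sequences by exploiting commutativity. First I would note that since $\hat{A}$ and $\hat{B}$ are positive they are in particular Hermitian, and since they commute they can be simultaneously diagonalized: there is an orthonormal basis $\left\{|e_i\rangle\right\}$ in which
\begin{equation}
\hat{A}=\sum_i a_i\,|e_i\rangle\langle e_i|\;,\qquad \hat{B}=\sum_i b_i\,|e_i\rangle\langle e_i|\;,\qquad a_i,\,b_i\ge0\;.
\end{equation}
Then $\hat{A}\,\hat{B}$ is diagonal in the same basis with nonnegative entries $a_i\,b_i$, so that
\begin{equation}
\mathrm{Tr}\left[\hat{A}\,\hat{B}\right]=\sum_i a_i\,b_i\;.
\end{equation}

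Next I would identify the two Schatten norms with $\ell^p$ norms of the eigenvalue sequences. Because $\hat{A}\ge0$, its singular values coincide with its eigenvalues $a_i$, and $\left(\hat{A}^\dag\hat{A}\right)^{p/2}=\hat{A}^p$; hence from the definition \eqref{eq:defpn}
\begin{equation}
\left\|\hat{A}\right\|_p=\left(\sum_i a_i^p\right)^\frac{1}{p}\;,\qquad \left\|\hat{B}\right\|_{p'}=\left(\sum_i b_i^{p'}\right)^\frac{1}{p'}\;.
\end{equation}

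The claim then reduces exactly to the scalar H\"older inequality $\sum_i a_i\,b_i\le\left(\sum_i a_i^p\right)^{1/p}\left(\sum_i b_i^{p'}\right)^{1/p'}$ for the nonnegative sequences $\left\{a_i\right\}$ and $\left\{b_i\right\}$, which is standard and follows for instance from Young's inequality $a\,b\le a^p/p+b^{p'}/p'$ applied termwise after normalizing both norms to $1$. I do not expect any genuine obstacle here: the sole role of the commutativity hypothesis is to provide the simultaneous diagonalization that turns the trace into the ordinary pairing of eigenvalue sequences, thereby removing the noncommutative corrections---such as a rearrangement of singular values or the passage $\mathrm{Tr}\left[\hat{A}\,\hat{B}\right]\le\mathrm{Tr}\left|\hat{A}\,\hat{B}\right|$---that would otherwise be needed for non-commuting operators.
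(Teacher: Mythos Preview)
Your argument is correct: commutativity of positive operators gives simultaneous diagonalization, the trace becomes the pairing of eigenvalue sequences, and the Schatten norms become the $\ell^p$ norms of those sequences, so the statement reduces to the classical scalar H\"older inequality.

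There is nothing to compare against, however: the paper states this theorem as a standard fact and gives no proof. It is invoked only as a tool in the duality argument for the amplifier. Your reduction via simultaneous diagonalization is exactly the natural justification one would supply, and indeed the paper's restriction to \emph{commuting} positive operators (rather than the general trace H\"older inequality) is presumably made precisely so that this elementary argument applies without any noncommutative refinements.
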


\subsection{1\textless p\textless q}
We recalling that the dual of $\mathcal{A}_\kappa$ is $\frac{1}{\kappa}\mathcal{E}_\frac{1}{\kappa}$ (Proposition \ref{prop:dual}), and that both $\mathcal{A}_\kappa$ and $\mathcal{E}_\frac{1}{\kappa}$ preserve the set of Fock-diagonal states.
Then, H\"older's inequality gives
\begin{align}
\left\|\mathcal{A}_\kappa\left(\hat{X}\right)\right\|_q^q &= \frac{1}{\kappa}\mathrm{Tr}\left[\hat{X}\;\mathcal{E}_\frac{1}{\kappa}\left(\mathcal{A}_\kappa\left(\hat{X}\right)^{q-1}\right)\right]\nonumber\\
&\le \frac{1}{\kappa}\left\|\hat{X}\right\|_p\left\|\mathcal{E}_\frac{1}{\kappa}\left(\mathcal{A}_\kappa\left(\hat{X}\right)^{q-1}\right)\right\|_{p'}\nonumber\\
&\le \left\|\hat{X}\right\|_p\left\|\mathcal{A}_\kappa\left(\hat{X}\right)^{q-1}\right\|_{q'} \frac{\left\|\mathcal{E}_\frac{1}{\kappa}\left(\hat{\omega}\right)\right\|_{p'}}{\kappa\left\|\hat{\omega}\right\|_{q'}}\nonumber\\
&=\left\|\hat{X}\right\|_p\left\|\mathcal{A}_\kappa\left(\hat{X}\right)\right\|_q^{q-1} \frac{\left\|\mathcal{E}_\frac{1}{\kappa}\left(\hat{\omega}\right)\right\|_{p'}}{\kappa\left\|\hat{\omega}\right\|_{q'}}\;,
\end{align}
where the first identity follows from \eqref{eq:dual} and $\hat{\omega}$ is the Gaussian state that achieves the $q'\to p'$ norm of $\mathcal{E}_{1/\kappa}$ according to Theorem \ref{thm:pq} ($1<p<q$ implies $1<q'<p'$).
Then,
\begin{equation}\label{eq:Apq}
\frac{\left\|\mathcal{A}_\kappa\left(\hat{X}\right)\right\|_q}{\left\|\hat{X}\right\|_p}\le \frac{\left\|\mathcal{E}_\frac{1}{\kappa}\left(\hat{\omega}\right)\right\|_{p'}}{\kappa \left\|\hat{\omega}\right\|_{q'}}\;.
\end{equation}

Since the dual of $\frac{1}{\kappa}\mathcal{E}_\kappa$ is $\mathcal{A}_\kappa$ again, H\"older's inequality gives
\begin{align}
\frac{1}{\kappa}\left\|\mathcal{E}_\frac{1}{\kappa}\left(\hat{\omega}\right)\right\|_{p'}^{p'} &= \mathrm{Tr}\left[\hat{\omega}\;\mathcal{A}_\kappa\left(\mathcal{E}_\frac{1}{\kappa}\left(\hat{\omega}\right)^{p'-1}\right)\right]\nonumber\\
&\le \left\|\hat{\omega}\right\|_{q'}\left\|\mathcal{A}_\kappa\left(\mathcal{E}_\frac{1}{\kappa}\left(\hat{\omega}\right)^{p'-1}\right)\right\|_q\;,
\end{align}
hence
\begin{align}\label{eq:Awpq}
\frac{\left\|\mathcal{E}_\frac{1}{\kappa}\left(\hat{\omega}\right)\right\|_{p'}}{\kappa\left\|\hat{\omega}\right\|_{q'}} &\le \frac{\left\|\mathcal{A}_\kappa\left(\mathcal{E}_\frac{1}{\kappa}\left(\hat{\omega}\right)^{p'-1}\right)\right\|_q} {\left\|\mathcal{E}_\frac{1}{\kappa}\left(\hat{\omega}\right)\right\|_{p'}^{p'-1}}\nonumber\\
&= \frac{\left\|\mathcal{A}_\kappa\left(\mathcal{E}_\frac{1}{\kappa}\left(\hat{\omega}\right)^{p'-1}\right)\right\|_q} {\left\|\mathcal{E}_\frac{1}{\kappa}\left(\hat{\omega}\right)^{p'-1}\right\|_p}\;.
\end{align}
Putting together \eqref{eq:Apq} and \eqref{eq:Awpq} we get that the $p\to q$ norm of $\mathcal{A}_\kappa$ is achieved by the quantum state
\begin{equation}
\hat{\omega}':=\frac{\mathcal{E}_\frac{1}{\kappa}\left(\hat{\omega}\right)^{p'-1}}{\mathrm{Tr}\;\mathcal{E}_\frac{1}{\kappa}\left(\hat{\omega}\right)^{p'-1}}\;.
\end{equation}
Since $\hat{\omega}$ is a thermal Gaussian state, also $\mathcal{E}_\frac{1}{\kappa}\left(\hat{\omega}\right)$ is a (different) thermal Gaussian state.
Finally, from \eqref{eq:omegaz} any positive power of a thermal Gaussian state is proportional to another thermal Gaussian state, hence also $\hat{\omega}'$ is a thermal Gaussian state.

\subsection{1\textless p=q}
Recalling that the dual of $\mathcal{A}_\kappa$ is $\frac{1}{\kappa}\mathcal{E}_\frac{1}{\kappa}$, and that both $\mathcal{A}_\kappa$ and $\mathcal{E}_\frac{1}{\kappa}$ preserve the set of Fock-diagonal states, H\"older's inequality gives
\begin{align}
\left\|\mathcal{A}_\kappa\left(\hat{X}\right)\right\|_p^p &= \frac{1}{\kappa}\mathrm{Tr}\left[\hat{X}\;\mathcal{E}_\frac{1}{\kappa}\left(\mathcal{A}_\kappa\left(\hat{X}\right)^{p-1}\right)\right]\nonumber\\
&\le \frac{1}{\kappa}\left\|\hat{X}\right\|_p\left\|\mathcal{E}_\frac{1}{\kappa}\left(\mathcal{A}_\kappa\left(\hat{X}\right)^{p-1}\right)\right\|_{p'}\nonumber\\
&\le \left\|\hat{X}\right\|_p\left\|\mathcal{A}_\kappa\left(\hat{X}\right)^{p-1}\right\|_{p'} \frac{\left\|\mathcal{E}_\frac{1}{\kappa}\right\|_{p'\to p'}}{\kappa}\nonumber\\
&=\left\|\hat{X}\right\|_p\left\|\mathcal{A}_\kappa\left(\hat{X}\right)\right\|_p^{p-1} \kappa^\frac{1-p}{p}\;,
\end{align}
where we have used \eqref{eq:ppP}, hence
\begin{equation}
\left\|\mathcal{A}_\kappa\right\|_{p\to p}\le \kappa^\frac{1-p}{p}\;.
\end{equation}
Moreover, using \eqref{eq:z'} and \eqref{eq:deffp} we have
\begin{equation}
\lim_{\epsilon\to0}\frac{\left\|\mathcal{A}_\kappa\left(\hat{\omega}_{1-\epsilon}\right)\right\|_p}{\left\|\hat{\omega}_{1-\epsilon}\right\|_p} = \lim_{\epsilon\to0}\frac{f_p\left(1-\frac{\epsilon}{\kappa}\right)}{f_p(1-\epsilon)} = \kappa^\frac{1-p}{p}\;.
\end{equation}

\subsection{1\textless q\textless p}
We have from \eqref{eq:z'} and \eqref{eq:deffp}
\begin{equation}
\lim_{\epsilon\to0}\frac{\left\|\mathcal{A}_\kappa\left(\hat{\omega}_{1-\epsilon}\right)\right\|_q}{\left\|\hat{\omega}_{1-\epsilon}\right\|_p} = \lim_{\epsilon\to0}\frac{f_q\left(1-\frac{\epsilon}{\kappa}\right)}{f_p(1-\epsilon)} = \infty\;.
\end{equation}

\section{Thermal quantum Gaussian channels}\label{sec:thermal}
The one-mode phase-covariant quantum Gaussian channel are the quantum channels resulting from the composition of a quantum-limited amplifier with a quantum-limited attenuator \cite{mari2014quantum,giovannetti2015solution,holevo2015gaussian,garcia2012majorization}.
The following upper bound to their $p\to q$ norms holds:
\begin{prop}\label{prop:bound}
For any $0\le\lambda\le 1$, any $\kappa\ge1$ and any $p,\,q\ge1$,
\begin{equation}\label{eq:boundc}
\left\|\mathcal{A}_\kappa\circ\mathcal{E}_\lambda\right\|_{p\to q} \le \inf_{r\ge 1}\left(\left\|\mathcal{A}_\kappa\right\|_{r\to q}\left\|\mathcal{E}_\lambda\right\|_{p\to r}\right)\;.
\end{equation}
\end{prop}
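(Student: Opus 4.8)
The plan is to derive the bound from the general submultiplicativity of $p\to q$ norms under composition, applied to the factorization of the thermal channel as an attenuator followed by an amplifier. Concretely, I would show that for every $r\ge 1$ and every operator $\hat X$ with $0<\|\hat X\|_p<\infty$,
\[
\left\|\left(\mathcal{A}_\kappa\circ\mathcal{E}_\lambda\right)\left(\hat X\right)\right\|_q \le \|\mathcal{A}_\kappa\|_{r\to q}\,\|\mathcal{E}_\lambda\|_{p\to r}\,\|\hat X\|_p\;,
\]
and then pass to the supremum over $\hat X$ and to the infimum over $r\ge 1$.

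First I would fix $r\ge 1$ and dispose of the trivial case: since $\mathcal{E}_\lambda$ and $\mathcal{A}_\kappa$ are nonzero (trace-preserving) quantum channels, the norms $\|\mathcal{E}_\lambda\|_{p\to r}$ and $\|\mathcal{A}_\kappa\|_{r\to q}$ are strictly positive, so if either of them is infinite the displayed inequality holds trivially; hence I may assume both are finite. Then, for $\hat X$ with $0<\|\hat X\|_p<\infty$, the definition \eqref{eq:defCpq} applied to $\mathcal{E}_\lambda$ gives $\|\mathcal{E}_\lambda(\hat X)\|_r\le\|\mathcal{E}_\lambda\|_{p\to r}\,\|\hat X\|_p<\infty$. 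Writing $\hat Y:=\mathcal{E}_\lambda(\hat X)$, which therefore has finite $r$-norm, the definition of $\|\mathcal{A}_\kappa\|_{r\to q}$ (together with the trivial subcase $\hat Y=0$) yields $\|\mathcal{A}_\kappa(\hat Y)\|_q\le\|\mathcal{A}_\kappa\|_{r\to q}\,\|\hat Y\|_r$. Combining the two estimates and using $(\mathcal{A}_\kappa\circ\mathcal{E}_\lambda)(\hat X)=\mathcal{A}_\kappa(\mathcal{E}_\lambda(\hat X))$ proves the displayed bound. Dividing by $\|\hat X\|_p$, taking the supremum over all admissible $\hat X$ and then the infimum over $r\ge 1$ gives \eqref{eq:boundc}.

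I do not expect a genuine obstacle here: the entire content is the submultiplicativity $\|\Psi\circ\Phi\|_{p\to q}\le\|\Psi\|_{r\to q}\,\|\Phi\|_{p\to r}$, and the only delicate point is the bookkeeping with possibly infinite norms, which is handled by treating the cases $\|\mathcal{E}_\lambda\|_{p\to r}=\infty$ or $\|\mathcal{A}_\kappa\|_{r\to q}=\infty$ separately before running the chain of inequalities.
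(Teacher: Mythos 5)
Your proposal is correct and follows essentially the same route as the paper: the paper's proof is exactly the chain $\left\|\left(\mathcal{A}_\kappa\circ\mathcal{E}_\lambda\right)(\hat X)\right\|_q \le \left\|\mathcal{A}_\kappa\right\|_{r\to q}\left\|\mathcal{E}_\lambda(\hat X)\right\|_r \le \left\|\mathcal{A}_\kappa\right\|_{r\to q}\left\|\mathcal{E}_\lambda\right\|_{p\to r}\left\|\hat X\right\|_p$, followed by taking the supremum over $\hat X$ and the infimum over $r$. Your extra bookkeeping for the cases of infinite or vanishing norms is a harmless refinement of the same argument.
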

\begin{proof}
For any operator $\hat{X}$ with $\left\|\hat{X}\right\|_p<\infty$ and for any $r\ge1$,
\begin{equation}
\left\|\left(\mathcal{A}_\kappa\circ\mathcal{E}_\lambda\right)\left(\hat{X}\right)\right\|_q \le \left\|\mathcal{A}_\kappa\right\|_{r\to q}\left\|\mathcal{E}_\lambda\left(\hat{X}\right)\right\|_r \le \left\|\mathcal{A}_\kappa\right\|_{r\to q}\left\|\mathcal{E}_\lambda\right\|_{p\to r}\left\|\hat{X}\right\|_p\;.
\end{equation}
\end{proof}
The lack of an analytic expression for $\left\|\mathcal{A}_\kappa\right\|_{r\to q}$ and $\left\|\mathcal{E}_\lambda\right\|_{p\to r}$ makes the computation of the right-hand side of \eqref{eq:boundc} difficult.
However, we conjecture that the right-hand side of \eqref{eq:boundc} coincides with the $p\to q$ norm of $\mathcal{A}_\kappa\circ\mathcal{E}_\lambda$.

\section{The constrained minimum output entropy conjecture}\label{secMOE}
A crucial consequence of Theorem \ref{thm:ampl} is the constrained minimum output entropy conjecture for the one-mode phase-covariant quantum Gaussian channels.
\begin{thm}[constrained minimum output entropy conjecture {\cite[Theorem 4]{de2016gaussiannew}}]\label{thm:MOE}
Quantum thermal Gaussian input states minimize the output von Neumann entropy of any one-mode phase-covariant quantum Gaussian channel among all the input states with a given von Neumann entropy.
In other words, for any one-mode phase-covariant quantum Gaussian channel $\Phi$, any $0\le z<1$ and any one-mode quantum state $\hat{\rho}$ with $S(\hat{\rho}) = S(\hat{\omega}_z)$,
\begin{equation}
S(\Phi(\hat{\rho}))\ge S(\Phi(\hat{\omega}_z))\;.
\end{equation}
\end{thm}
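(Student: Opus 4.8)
We only sketch the proof, referring to \cite{de2016gaussiannew} for the full details. The plan is to reduce Theorem \ref{thm:MOE} to the quantum-limited attenuator and amplifier, and then to extract the amplifier case from the $p\to q$ norms of Theorem \ref{thm:ampl} by a limiting argument in the spirit of Gross \cite{gross1993logarithmic}. Any one-mode phase-covariant quantum Gaussian channel $\Phi$ is a composition $\mathcal{A}_\kappa\circ\mathcal{E}_\lambda$ of a quantum-limited amplifier and a quantum-limited attenuator \cite{mari2014quantum,giovannetti2015solution,holevo2015gaussian,garcia2012majorization} (the purely additive-noise channels being recovered as a limit of such compositions). The attenuator case, $S(\mathcal{E}_\lambda(\hat{\rho}))\ge S(\mathcal{E}_\lambda(\hat{\omega}_z))$ whenever $S(\hat{\rho})=S(\hat{\omega}_z)$, is \cite{de2016gaussian}. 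Granting the analogous statement for $\mathcal{A}_\kappa$, Theorem \ref{thm:MOE} follows by composition: putting $\hat{\sigma}:=\mathcal{E}_\lambda(\hat{\rho})$, the attenuator case gives $S(\hat{\sigma})\ge S(\mathcal{E}_\lambda(\hat{\omega}_z))=S(\hat{\omega}_{z'})$ with $z'$ as in \eqref{eq:zlambda}; the amplifier case gives $S(\mathcal{A}_\kappa(\hat{\sigma}))\ge S(\mathcal{A}_\kappa(\hat{\omega}_{z''}))$, where $z''$ is defined by $S(\hat{\omega}_{z''})=S(\hat{\sigma})\ge S(\hat{\omega}_{z'})$, hence $z''\ge z'$; and since $S(\mathcal{A}_\kappa(\hat{\omega}_w))=S(\hat{\omega}_{(w+\kappa-1)/\kappa})$ is increasing in $w$ by \eqref{eq:z'} and \eqref{eq:Sz}, one concludes $S(\Phi(\hat{\rho}))=S(\mathcal{A}_\kappa(\hat{\sigma}))\ge S(\mathcal{A}_\kappa(\hat{\omega}_{z'}))=S(\Phi(\hat{\omega}_z))$.

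The core of the argument is therefore the amplifier case: $S(\mathcal{A}_\kappa(\hat{\rho}))\ge S(\mathcal{A}_\kappa(\hat{\omega}))$ whenever $\hat{\omega}$ is the thermal state with $S(\hat{\omega})=S(\hat{\rho})$. Here I would use the identity $\ln\norm{\hat{\sigma}}_r=\frac{1-r}{r}S_r(\hat{\sigma})$ relating Schatten norms to the R\'enyi entropies $S_r$, so that the bound $\norm{\mathcal{A}_\kappa(\hat{\rho})}_{1/b}\le\norm{\mathcal{A}_\kappa}_{1/a\to1/b}\,\norm{\hat{\rho}}_{1/a}$ of Theorem \ref{thm:ampl}, which for $0<b<a<1$ holds with equality when $\hat{\rho}$ is the optimizing thermal state, rearranges (using $a-1<0$ and $b-1<0$) into
\[
S_{1/b}(\mathcal{A}_\kappa(\hat{\rho}))\ge\frac{1-a}{1-b}\,S_{1/a}(\hat{\rho})-\frac{1}{1-b}\ln\norm{\mathcal{A}_\kappa}_{1/a\to1/b}\;.
\]
Next I would set $a=1-\varepsilon u$ and $b=1-\varepsilon v$ with $0<u<v$ fixed and let $\varepsilon\to0^+$: the R\'enyi entropies tend to the von Neumann entropies, $\frac{1-a}{1-b}\to u/v$, and, because $\norm{\mathcal{A}_\kappa}_{1\to1}=1$, one has $\ln\norm{\mathcal{A}_\kappa}_{1/a\to1/b}=\varepsilon\,d(u,v)+o(\varepsilon)$ for a coefficient $d(u,v)$ expressible through the Gaussian optimizers via \eqref{eq:omegaz}, \eqref{eq:Sz} and \eqref{eq:z'}. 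This produces the family of linear lower bounds
\[
S(\mathcal{A}_\kappa(\hat{\rho}))\ge\frac{u}{v}\,S(\hat{\rho})-\frac{d(u,v)}{v}\;,\qquad 0<u<v\;.
\]

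To conclude I would show that the envelope $\sup_{0<u<v}\left(\frac{u}{v}\,s-\frac{d(u,v)}{v}\right)$ of these bounds equals $S(\mathcal{A}_\kappa(\hat{\omega}))$, where $\hat{\omega}$ is the thermal state with $S(\hat{\omega})=s$. The inequality ``$\le$'' is immediate by applying the linear bounds to $\hat{\rho}=\hat{\omega}$; the reverse inequality ``$\ge$'' is where the hypothesis that the $p\to q$ norms are attained by Gaussians enters decisively, and it amounts to exhibiting, for each $s$, a ratio $u/v$ such that as $\varepsilon\to0^+$ the thermal state optimizing the $\frac{1}{1-\varepsilon u}\to\frac{1}{1-\varepsilon v}$ norm of $\mathcal{A}_\kappa$ converges to $\hat{\omega}$ while the equality case of Theorem \ref{thm:ampl} survives the limit. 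Concretely I expect this to be organized, as in Lemma \ref{lem:pq}, by integrating the differential inequality obtained from the $p\to q$ norm inequality along a path of R\'enyi pairs together with the associated path of thermal states, arranged so that the thermal state at one endpoint has entropy $s$, the limit $p,q\to1$ then converting the inequality into the von Neumann entropy statement. The main obstacle will be precisely this limiting analysis: it requires quantitative control of the optimizing thermal state and of $(p,q)\mapsto\ln\norm{\mathcal{A}_\kappa}_{p\to q}$ near $(p,q)=(1,1)$, together with uniform control of the error terms, all of which should follow from the explicit description of the Gaussian optimizers in Theorem \ref{thm:ampl} and the closed forms \eqref{eq:omegaz}, \eqref{eq:Sz}, \eqref{eq:z'}. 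The complete proof is carried out in \cite{de2016gaussiannew}.
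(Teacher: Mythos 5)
Your reduction to the amplifier is fine and matches the paper: decompose $\Phi=\mathcal{A}_\kappa\circ\mathcal{E}_\lambda$, invoke \cite{de2016gaussian} for the attenuator, and use monotonicity of $w\mapsto S(\mathcal{A}_\kappa(\hat{\omega}_w))$ from \eqref{eq:z'} and \eqref{eq:Sz}. The gap is in the amplifier step, exactly where the paper's argument has its key ingredient. The paper (following \cite[Lemma 1]{de2016gaussiannew}) does \emph{not} linearize the norm at $(p,q)=(1,1)$: it first proves that for the \emph{given} thermal state $\hat{\omega}_z$ and every $q\in(1,\tfrac32)$ there exists $p\in(1,q)$ such that $\hat{\omega}_z$ itself achieves $\left\|\mathcal{A}_\kappa\right\|_{p\to q}$. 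With the inequality \eqref{eq:pqA} anchored at $\hat{\omega}_z$, the Rényi rewriting gives $S_q(\mathcal{A}_\kappa(\hat{\rho}))\ge S_q(\mathcal{A}_\kappa(\hat{\omega}_z))+\frac{p-1}{q-1}\frac{q}{p}\left(S_p(\hat{\rho})-S_p(\hat{\omega}_z)\right)$, where the coefficient is bounded and $S_p(\hat{\rho})-S_p(\hat{\omega}_z)\to S(\hat{\rho})-S(\hat{\omega}_z)=0$, so the limit $q\to1$ is immediate. The whole difficulty is that the norm optimizer of Theorem \ref{thm:ampl} depends on $(p,q)$ and must be matched to the prescribed $\hat{\omega}_z$; the tuned-$p$ lemma is precisely the device that does this.

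Your substitute for that lemma --- the expansion $\ln\left\|\mathcal{A}_\kappa\right\|_{1/a\to1/b}=\varepsilon\,d(u,v)+o(\varepsilon)$ plus the envelope identity $\sup_{0<u<v}\bigl(\frac{u}{v}s-\frac{d(u,v)}{v}\bigr)=S(\mathcal{A}_\kappa(\hat{\omega}))$ --- is left unproven, and both halves are genuinely nontrivial. For the expansion you must interchange the supremum over thermal states (whose entropies diverge as $z\to1$) with the limit $\varepsilon\to0$, i.e. you need quantitative control of the optimizing state near $(p,q)=(1,1)$; you justify this by appeal to ``the explicit description of the Gaussian optimizers in Theorem \ref{thm:ampl}'', but no such description exists --- the paper emphasizes that the optimization over Gaussian states cannot be performed analytically and that no closed formula for the norms is available. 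Moreover, even granting the expansion, writing $r=u/v$ one gets $d(u,v)/v=\sup_{0<w<1}\bigl(r\,S(\hat{\omega}_w)-S(\mathcal{A}_\kappa(\hat{\omega}_w))\bigr)$, so your envelope is a double Legendre transform and recovers $S(\mathcal{A}_\kappa(\hat{\omega}))$ only if the output entropy is a convex function of the input entropy along thermal states, with slopes in $(0,1)$ --- an additional fact you neither state nor prove. Deferring to \cite{de2016gaussiannew} does not close these holes, because that reference proves the tuned-optimizer lemma and the simple limit above, not your linearization-and-envelope argument.
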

We provide here a sketch of the proof of Theorem \ref{thm:MOE}.
The reader can find the complete proof in Ref. \cite{de2016gaussiannew}.

Theorem \ref{thm:MOE} has been proven in Ref. \cite{de2016gaussian} in the case where $\Phi$ is a one-mode quantum-limited attenuator.
The proof of Ref. \cite{de2016gaussian} is based on an isoperimetric inequality and does not use the $p\to q$ norms.
Since any one-mode phase-covariant quantum Gaussian channel can be decomposed as a quantum-limited attenuator followed by a quantum-limited amplifier \cite{mari2014quantum,giovannetti2015solution,holevo2015gaussian,garcia2012majorization}, it is sufficient to prove Theorem \ref{thm:MOE} when $\Phi$ is a one-mode quantum-limited amplifier (see \cite{de2016gaussiannew} for the details).
For this purpose, it is convenient to rephrase Theorem \ref{thm:ampl} in the following way.
\begin{lem}[{\cite[Lemma 1]{de2016gaussiannew}}]
Let us fix $\kappa\ge1$ and $0 < z <1$.
Then, for any $1<q<\frac{3}{2}$ there exists $1 < p<q$ such that the $p\to q$ norm of $\mathcal{A}_{\kappa}$ is achieved by $\hat{\omega}_z$, and for any one-mode quantum state $\hat{\rho}$
\begin{equation}\label{eq:pqA}
\frac{\|\mathcal{A}_{\kappa}(\hat{\rho})\|_q}{\|\hat{\rho}\|_p} \le \|\mathcal{A}_{\kappa}\|_{p\to q} = \frac{\|\mathcal{A}_{\kappa}(\hat{\omega}_z)\|_q}{\|\hat{\omega}_z\|_p}\;.
\end{equation}
\end{lem}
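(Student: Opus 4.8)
The plan is to view the thermal–state maximizer furnished by Theorem~\ref{thm:ampl} as a function of $p$ and to show, by continuity and the intermediate value theorem, that as $p$ ranges over $(1,q)$ this maximizer sweeps out \emph{all} thermal states $\hat\omega_w$, $w\in(0,1)$; the $p$ claimed in the lemma is then the one whose maximizer is the prescribed $\hat\omega_z$. Note that, for any such $p$, the inequality in \eqref{eq:pqA} is nothing but the definition \eqref{eq:defCpq} of the $p\to q$ norm applied to the state $\hat\rho$ (which has $\|\hat\rho\|_p\le1<\infty$ for every $p\ge1$); so the whole content is the identity $\|\mathcal{A}_\kappa\|_{p\to q}=\|\mathcal{A}_\kappa(\hat\omega_z)\|_q/\|\hat\omega_z\|_p$, i.e.\ that $\hat\omega_z$ is a maximizer for the right exponent $p$.

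First I would make the dependence of the maximizer on $p$ explicit. Write $p':=p/(p-1)$ and $q':=q/(q-1)$, and recall that for $1<p<q$ one has $1<q'<p'$. In the proof of Theorem~\ref{thm:ampl} the maximizer of $\mathcal{A}_\kappa$ for the $p\to q$ norm is produced, through the duality of Proposition~\ref{prop:dual}, from the maximizer of the $q'\to p'$ norm of $\mathcal{E}_{1/\kappa}$; by Lemma~\ref{lem:pq} the latter is the thermal state $\hat\omega_{z_0}$, where $z_0=z_0(p)\in(0,1)$ is the initial datum of the logarithmic Sobolev ODE $\dot a(t)=\mu(z(t),a(t))$, with $z(t)$ as in \eqref{eq:defzt}, that carries $a(0)=1/q'$ to $a(\ln\kappa)=1/p'$ over the \emph{fixed} time $\ln\kappa$. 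Unwinding Section~\ref{sec:ampl}, the maximizer of $\mathcal{A}_\kappa$ is then the thermal state $\hat\omega_{w(p)}$ with
\begin{equation}
w(p):=z_1(p)^{\,p'-1}\;,\qquad \hat\omega_{z_1(p)}=\mathcal{E}_{1/\kappa}\big(\hat\omega_{z_0(p)}\big)\;,
\end{equation}
where $z_1(p)$ is given in terms of $z_0(p)$ by \eqref{eq:zlambda} with $\lambda=1/\kappa$; in particular $z_1(p)\le z_0(p)<1$.

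Second I would prove that $p\mapsto w(p)$ is continuous on $(1,q)$ with image $(0,1)$. Continuity: since $\mu$ is increasing in its first argument (Lemma~\ref{lem:nu}) and $z(t)$ is increasing in $z_0$, a comparison argument shows $z_0\mapsto a(\ln\kappa)$ is strictly increasing, so $z_0(p)$ is a well-defined continuous function of $1/p'$; composing with \eqref{eq:zlambda} and with $p\mapsto p'$ gives continuity of $w$. Boundary behaviour: as $p\to q^-$ one has $1/p'\to(1/q')^-$, so the ODE must leave $a$ almost unchanged over the time $\ln\kappa$, and since $\mu(z,a)\to0$ as $z\to1$ a Gronwall estimate forces $z_0(p)\to1$, hence $z_1(p)\to1$ and $w(p)=z_1(p)^{1/(p-1)}\to1$. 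As $p\to1^+$ one has $p'\to\infty$; since $a(\ln\kappa)=1/p'\to0$ is bounded away from $a(0)=1/q'$, the same estimate prevents $z_0(p)$ from approaching $1$, so $z_1(p)\le z_0(p)$ stays bounded away from $1$ while $p'-1\to\infty$, whence $w(p)=z_1(p)^{\,p'-1}\to0$. By the intermediate value theorem $w$ maps $(1,q)$ onto $(0,1)$; choosing $p\in(1,q)$ with $w(p)=z$ and invoking Theorem~\ref{thm:ampl} yields \eqref{eq:pqA}.

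The main obstacle is this last step: $w(p)=z_1(p)^{p'-1}$ couples the \emph{implicitly} defined ODE datum $z_0(p)$, the fractional–linear map \eqref{eq:zlambda}, and an exponent $p'-1$ that diverges as $p\to1^+$, so one must control all three simultaneously at each endpoint; in particular the strict monotonicity of $z_0\mapsto a(\ln\kappa)$ and the Gronwall bounds giving $z_0(p)\to1$ as $p\to q^-$ and $z_0(p)$ bounded away from $1$ as $p\to1^+$ are where the real work lies. The numerical restriction $q<3/2$ plays no role in this argument; it is imposed only because that is the range in which the lemma is applied in \autoref{secMOE}.
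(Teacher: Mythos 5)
The paper does not actually prove this lemma: it is quoted from \cite[Lemma 1]{de2016gaussiannew} and presented here only as a rephrasing of Theorem \ref{thm:ampl}, so there is no internal proof to compare against. Judged on its own terms, your argument is essentially correct and is a genuinely different, self-contained route. The alternative suggested by the paper's own results (and presumably the one used in \cite{de2016gaussiannew}) is more computational: once Theorem \ref{thm:ampl} guarantees a thermal maximizer, \eqref{eq:z'} together with $\|\hat\omega_w\|_p=f_p(w)$ reduces $\|\mathcal{A}_\kappa\|_{p\to q}$ to the explicit scalar supremum of $f_q\bigl(\tfrac{w+\kappa-1}{\kappa}\bigr)/f_p(w)$ over $w\in[0,1)$, and one picks $p$ so that $w=z$ is the (stationary, then global) maximizer. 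You instead re-enter the ODE construction behind Theorems \ref{thm:pq} and \ref{thm:ampl}, track the parameter $w(p)=z_1(p)^{p'-1}$ of the constructed maximizer, and conclude by continuity and the intermediate value theorem; what this buys is that no stationarity computation or global analysis of the explicit ratio is needed, at the price of ODE comparison and uniformity estimates.

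Two points should be made explicit. First, your argument, like Theorem \ref{thm:ampl}, requires $\kappa>1$: the ODE runs for time $T=\ln\kappa$, and for $\kappa=1$ the statement is anyway void, since $\mathcal{A}_1=\mathbb{I}$ and for $p<q$ its norm $1$ is attained only on rank-one operators, never on $\hat\omega_z$ with $z>0$. Second, the steps you defer are indeed where the work lies, but they do go through: strict monotonicity of $z_0\mapsto a(T;z_0)$ follows from the standard scalar comparison theorem, because \eqref{eq:defzt} is increasing in $z_0$, $\mu$ is strictly increasing in $z$ (Lemma \ref{lem:nu}) and $C^1$ in $a$ away from $a=0$; for $p\to q^-$ you need $\sup\mu<0$ on $\bigl\{a\in\bigl[\tfrac{1}{2q'},\tfrac1{q'}\bigr],\ 0<z\le\bar z\bigr\}$, which holds because $\mu$ is continuous and strictly negative there and tends to $-\infty$ as $z\to0^+$; and for $p\to1^+$ you need $\mu\to0$ as $z\to1$ uniformly only for $a$ in a compact interval bounded away from $0$, which your bootstrap (with $z_0$ close to $1$, $a$ cannot drop below $\tfrac1{2q'}$ before time $T$) supplies. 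Together with the observation that continuity of $z_0(p)$ follows by inverting a continuous strictly increasing surjection onto $(0,1/q')$, this completes the proof; you are also right that the restriction $q<\tfrac32$ plays no role in this route and is only needed for the application in \autoref{secMOE}.
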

For any $p>1$, the R\'enyi $p$-entropy \cite{holevo2013quantum} of the quantum state $\hat{\rho}$ is
\begin{equation}
S_p(\hat{\rho}) = \frac{p}{1-p}\ln\|\hat{\rho}\|_p
\end{equation}
and satisfies
\begin{equation}\label{eq:limR}
\lim_{p\to 1} S_p(\hat{\rho}) = S(\hat{\rho})\;.
\end{equation}
Let now $\hat{\rho}$ be a one-mode quantum state with
\begin{equation}
S(\hat{\rho}) = S(\hat{\omega}_z)\;.
\end{equation}
Rewriting \eqref{eq:pqA} in terms of the R\'enyi entropies we get
\begin{equation}
S_q(\mathcal{A}_{\kappa}(\hat{\rho})) \ge S_q(\mathcal{A}_{\kappa}(\hat{\omega}_z)) + \frac{p-1}{q-1}\frac{q}{p}\left(S_p(\hat{\rho}) - S_p(\hat{\omega}_z)\right)\;.
\end{equation}
Taking the limit $q\to 1$ and recalling \eqref{eq:limR} we get the claim
\begin{equation}
S(\mathcal{A}_{\kappa}(\hat{\rho})) \ge S(\mathcal{A}_{\kappa}(\hat{\omega}_z))\;.
\end{equation}

\section{The thinning}\label{secthinning}
The thinning \cite{renyi1956characterization} is the map acting on classical probability distributions on the set of natural numbers that is the discrete analogue of the continuous rescaling operation on positive real numbers.
\begin{defn}[Thinning]
Let $N$ be a random variable with values in $\mathbb{N}$.
The thinning with parameter $0\leq\lambda\leq1$ is defined as
\begin{equation}
T_\lambda(N)=\sum_{i=1}^N B_i\;,
\end{equation}
where the $\{B_n\}_{n\in\mathbb{N}^+}$ are independent Bernoulli variables with parameter $\lambda$, i.e.\ each $B_i$ is one with probability $\lambda$, and zero with probability $1-\lambda$.
\end{defn}
From a physical point of view, the thinning can be understood as follows: each incoming photon has probability $\lambda$ of being transmitted, and $1-\lambda$ of being reflected or absorbed.
Let $N$ be the random variable associated to the number of incoming photons, and $\{p_n\}_{n\in\mathbb{N}}$ its probability distribution, i.e.\ $p_n$ is the probability that $N=n$ (i.e.\ that $n$ photons are sent).
Then, $T_\lambda(p)$ is the probability distribution of the number of transmitted photons.
It is easy to show that
\begin{equation}\label{Tn}
\left[T_\lambda(p)\right]_n=\sum_{k=0}^\infty r_{n|k}\;p_k\;,
\end{equation}
where the transition probabilities $r_{n|k}$ are given by
\begin{equation}\label{rnk}
r_{n|k}=\binom{k}{n}\lambda^n(1-\lambda)^{k-n}\;,
\end{equation}
and vanish for $k<n$.

The thinning coincides with the restriction of the attenuator to input states diagonal in the Fock basis:
\begin{thm}\label{thinatt}
Let $\mathcal{E}_\lambda$ and $T_\lambda$ be the quantum-limited attenuator and the thinning of parameter $0\leq\lambda\leq1$, respectively.
Then for any probability distribution $p$ on $\mathbb{N}$
\begin{equation}
\mathcal{E}_\lambda\left(\sum_{n=0}^\infty p_n\;|n\rangle\langle n|\right)=\sum_{n=0}^\infty \left[T_\lambda(p)\right]_n\;|n\rangle\langle n|\;.
\end{equation}
\begin{proof}
See \cite[Theorem 56]{de2015passive}.
\end{proof}
\end{thm}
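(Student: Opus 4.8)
The plan is to reduce the identity to a coefficient-by-coefficient comparison in the Fock basis. Since $\mathcal{E}_\lambda$ preserves the set of Fock-diagonal operators \cite{ivan2011operator}, the left-hand side is necessarily of the form $\sum_{m\ge0}q_m\,|m\rangle\langle m|$ for some nonnegative sequence $\{q_m\}$, so it suffices to show that $q_m=\sum_{k\ge m}r_{m|k}\,p_k$ with $r_{m|k}$ as in \eqref{rnk}. I would first establish this for finitely supported $p$ and then remove the restriction by a truncation argument.

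For the finite-support case there are two natural routes. The first uses the semigroup structure already exploited throughout the paper: for $0<\lambda\le1$ write $\lambda=\mathrm{e}^{-t}$, so that $\mathcal{E}_\lambda=\mathrm{e}^{t\mathcal{L}}$, and invoke the explicit formula recorded in the proof of Lemma \ref{lem:dlX} (taken from \cite[Eqs.~(VI.3), (VI.4)]{de2016gaussian}), which sends $\sum_{k=0}^N x_k\,|k\rangle\langle k|$ to $\sum_{m=0}^N x_m(t)\,|m\rangle\langle m|$ with $x_m(t)=\sum_{k=m}^N\binom{k}{m}\,\mathrm{e}^{-mt}(1-\mathrm{e}^{-t})^{k-m}\,x_k$; substituting $\mathrm{e}^{-t}=\lambda$ turns this kernel into exactly $r_{m|k}=\binom{k}{m}\lambda^m(1-\lambda)^{k-m}$. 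The second route goes through the dilation directly: from $\hat{U}_\lambda^\dag\hat a\hat{U}_\lambda=\sqrt\lambda\,\hat a+\sqrt{1-\lambda}\,\hat b$ one obtains the standard beamsplitter action $\hat{U}_\lambda\bigl(|n\rangle\otimes|0\rangle_B\bigr)=\sum_{m=0}^n\sqrt{\binom{n}{m}}\,\lambda^{m/2}(1-\lambda)^{(n-m)/2}\,|m\rangle\otimes|n-m\rangle_B$ (up to phases, which drop out on tracing over $B$), whence $\mathcal{E}_\lambda(|n\rangle\langle n|)=\sum_{m=0}^n\binom{n}{m}\lambda^m(1-\lambda)^{n-m}\,|m\rangle\langle m|$ is the image of $\delta_n$ under the thinning; linearity in $p$ then finishes the finite-support case. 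The boundary value $\lambda=0$ is immediate by either route, since $\mathcal{E}_0(\hat X)=(\mathrm{Tr}\,\hat X)\,|0\rangle\langle0|$ while $T_0$ collapses every distribution to $\delta_0$.

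It then remains to pass to an arbitrary probability distribution $p$ on $\mathbb{N}$. I would truncate, setting $p^{(N)}_k=p_k$ for $k\le N$ and $0$ otherwise, apply the finite-support identity to $\sum_{k\le N}p_k\,|k\rangle\langle k|$, and let $N\to\infty$. Since $\mathcal{E}_\lambda$ is positive and trace preserving, the trace norm of the difference between $\mathcal{E}_\lambda(\sum_k p_k\,|k\rangle\langle k|)$ and $\mathcal{E}_\lambda(\sum_{k\le N}p_k\,|k\rangle\langle k|)$ equals $\sum_{k>N}p_k$, which tends to $0$, so the outputs converge entrywise; on the other hand $[T_\lambda(p^{(N)})]_m=\sum_{k=m}^N r_{m|k}\,p_k\uparrow[T_\lambda(p)]_m$ by monotone convergence, and comparing the two limits yields the claim. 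The only mildly delicate point is precisely this limit exchange, and it is routine once one observes that no mass is lost: $\sum_m[T_\lambda(p)]_m=\sum_kp_k=1$. This is exactly the content of \cite[Theorem~56]{de2015passive}.
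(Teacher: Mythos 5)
Your argument is correct, but it is genuinely different in character from what the paper does: the paper offers no argument at all for Theorem \ref{thinatt}, simply deferring to \cite[Theorem 56]{de2015passive}, whereas you give a self-contained proof. Both of your routes for the finite-support case are sound. The semigroup route is particularly economical within this paper, since the kernel $x_m(t)=\sum_{k\ge m}\binom{k}{m}\mathrm{e}^{-mt}(1-\mathrm{e}^{-t})^{k-m}x_k$ quoted in the proof of Lemma \ref{lem:dlX} is literally the thinning kernel \eqref{rnk} after the substitution $\lambda=\mathrm{e}^{-t}$, so the identity for Fock-diagonal operators of finite rank is immediate; the dilation route is the standard independent verification (the beamsplitter sends $|n\rangle\otimes|0\rangle_B$ to a superposition whose reduced state on the first mode is binomial, and the relative phases indeed disappear under $\mathrm{Tr}_B$), and it has the advantage of not presupposing the generator formula imported from \cite{de2016gaussian}. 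Your passage to general $p$ is also handled correctly: the trace-norm estimate $\bigl\|\mathcal{E}_\lambda(\hat X)-\mathcal{E}_\lambda(\hat X_N)\bigr\|_1=\sum_{k>N}p_k$ (positivity plus trace preservation) gives entrywise convergence of the outputs, while monotone convergence handles $[T_\lambda(p^{(N)})]_m\uparrow[T_\lambda(p)]_m$, and comparing limits closes the argument. In short, what the paper's citation buys is brevity; what your proof buys is self-containedness, at the modest cost of reproving a result the authors were content to quote.
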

Thanks to Theorem \ref{thinatt}, our main result applies also to the thinning with the usual $l^p$ norms:
\begin{defn}[$l^p$ norm]
For any $p\ge1$, the $l^p$ norm of a sequence of complex numbers $\left\{x_n\right\}_{n\in\mathbb{N}}$ is
\begin{equation}
\left\|x\right\|_p = \left(\sum_{n\in\mathbb{N}}|x_n|^p\right)^\frac{1}{p}\;.
\end{equation}
\end{defn}
\begin{thm}[$l^p\to l^q$ norms]\label{thm:pqT}
For any $1<p<q$ and any $0<\lambda<1$ the $l^p\to l^q$ norm of $T_\lambda$ is achieved by a geometric probability distribution
\begin{equation}
\omega_n(z)=\left(1-z\right)z^n\;,\qquad n\in\mathbb{N}
\end{equation}
with $0<z<1$ depending on $\lambda$, $p$ and $q$, i.e.\ for any sequence $\{x_n\}_{n\in\mathbb{N}}$ of complex numbers with $\left\|x\right\|_p<\infty$
\begin{equation}
\frac{\left\|T_\lambda\left(x\right)\right\|_q}{\left\|x\right\|_p}\le \frac{\left\|T_\lambda\left(\omega\right)\right\|_q}{\left\|\omega\right\|_p} = \left\|T_\lambda\right\|_{p\to q}\;.
\end{equation}
For any $p>1$ and any $0<\lambda<1$ the $l^p\to l^p$ norm of $T_\lambda$ is asymptotically achieved by geometric probability distributions with $z\to1$, i.e.\ for any sequence $\{x_n\}_{n\in\mathbb{N}}$ of complex numbers with $\left\|x\right\|_p<\infty$
\begin{equation}
\frac{\left\|T_\lambda\left(x\right)\right\|_p}{\left\|x\right\|_p} \le \lim_{z\to1} \frac{\left\|T_\lambda\left(\omega(z)\right)\right\|_p}{\left\|\omega(z)\right\|_p} = \lambda^\frac{1-p}{p} = \left\|T_\lambda\right\|_{p\to p}\;.
\end{equation}
For any $1<q<p$ and any $0<\lambda<1$ the $l^p\to l^q$ norm of $T_\lambda$ is infinite and is asymptotically achieved by geometric probability distributions with $z\to1$:
\begin{equation}
\lim_{z\to1} \frac{\left\|T_\lambda\left(\omega(z)\right)\right\|_q}{\left\|\omega(z)\right\|_p} = \infty = \left\|T_\lambda\right\|_{p\to q}\;.
\end{equation}
\end{thm}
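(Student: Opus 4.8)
The plan is to obtain Theorem \ref{thm:pqT} as a direct consequence of Theorem \ref{thm:pq}, using that the thinning is exactly the restriction of the quantum-limited attenuator to Fock-diagonal operators (Theorem \ref{thinatt}). The bridge is the following elementary correspondence: to a nonnegative sequence $x$ with $\|x\|_p<\infty$ we associate the positive operator $\hat{X}:=\sum_{n\in\mathbb{N}}x_n\,|n\rangle\langle n|$, which satisfies $\|\hat{X}\|_p=\|x\|_p<\infty$; by Theorem \ref{thinatt} one has $\mathcal{E}_\lambda(\hat{X})=\sum_{n\in\mathbb{N}}[T_\lambda(x)]_n\,|n\rangle\langle n|$, hence $\|\mathcal{E}_\lambda(\hat{X})\|_q=\|T_\lambda(x)\|_q$; and the geometric distribution $\omega(z)$ corresponds under this association to the thermal Gaussian state $\hat{\omega}_z$, so that $\|\omega(z)\|_p=\|\hat{\omega}_z\|_p$ and $\|T_\lambda(\omega(z))\|_q=\|\mathcal{E}_\lambda(\hat{\omega}_z)\|_q$. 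In short, the correspondence is an isometry for every $l^p$ versus Schatten norm and it intertwines $T_\lambda$ with $\mathcal{E}_\lambda$.

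First I would reduce to nonnegative sequences: since the transition probabilities $r_{n|k}$ in \eqref{rnk} are nonnegative, any complex sequence $x$ obeys $|[T_\lambda(x)]_n|\le[T_\lambda(|x|)]_n$ for every $n$, whence $\|T_\lambda(x)\|_q\le\|T_\lambda(|x|)\|_q$ while $\| \, |x| \, \|_p=\|x\|_p$; therefore the three suprema in the statement are attained along nonnegative sequences. For such sequences the correspondence above identifies $\|T_\lambda\|_{p\to q}$ with the supremum of $\|\mathcal{E}_\lambda(\hat{X})\|_q/\|\hat{X}\|_p$ over positive Fock-diagonal $\hat{X}$ with $\|\hat{X}\|_p<\infty$, which in turn is bounded above by $\|\mathcal{E}_\lambda\|_{p\to q}$.

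For $1<p<q$, Theorem \ref{thm:pq} states that $\|\mathcal{E}_\lambda\|_{p\to q}$ is achieved by a thermal Gaussian state $\hat{\omega}_z$. Since $\hat{\omega}_z$ is positive and Fock-diagonal, it already lies in the class over which the restricted supremum of the previous paragraph is taken, so the two suprema coincide; translating back, the geometric distribution $\omega(z)$ achieves $\|T_\lambda\|_{p\to q}$. For $1<p=q$, the correspondence gives $\|T_\lambda\|_{p\to p}\le\|\mathcal{E}_\lambda\|_{p\to p}=\lambda^{(1-p)/p}$ by \eqref{eq:ppP}, while $\|T_\lambda(\omega(z))\|_p/\|\omega(z)\|_p=\|\mathcal{E}_\lambda(\hat{\omega}_z)\|_p/\|\hat{\omega}_z\|_p\to\lambda^{(1-p)/p}$ as $z\to1$ by Theorem \ref{thm:pq}; equality follows, and the limit along the $\omega(z)$ exhibits the asymptotic maximizers. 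Likewise, for $1<q<p$ the limit along $\omega(z)$ as $z\to1$ equals the one computed in Theorem \ref{thm:pq}, namely $+\infty$, so $\|T_\lambda\|_{p\to q}=\infty$ and is asymptotically achieved by geometric distributions with $z\to1$.

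I expect no serious obstacle, since everything is transported from Theorems \ref{thm:pq} and \ref{thinatt}. The only point deserving a line of care is that Theorem \ref{thinatt} is phrased for probability distributions, whereas the argument needs it for an arbitrary nonnegative $x\in l^p$, which need not be summable when $p>1$; this extension is immediate because for $\lambda<1$ each row $k\mapsto r_{n|k}$ of the kernel \eqref{rnk} decays exponentially, so $[T_\lambda(x)]_n$ is finite for every $n$ and the identity \eqref{Tn}, and hence Theorem \ref{thinatt}, persists by dominated convergence.
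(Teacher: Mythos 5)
Your proposal is correct and follows essentially the same route as the paper, which presents Theorem \ref{thm:pqT} precisely as a transfer of Theorem \ref{thm:pq} through the identification of Theorem \ref{thinatt} between Fock-diagonal operators and sequences (with $\hat{\omega}_z$ corresponding to the geometric distribution, and Schatten norms to $l^p$ norms). Your explicit reductions — to nonnegative sequences via $|[T_\lambda(x)]_n|\le[T_\lambda(|x|)]_n$, and the remark on extending \eqref{Tn} beyond summable sequences — merely spell out details the paper leaves implicit.
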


\section{Conclusions}\label{sec:concl}
We have proven that Gaussian states achieve the $p\to q$ norms of the two building blocks of one-mode phase-covariant quantum Gaussian channels: the quantum-limited attenuator and amplifier.
This fundamental result proves a longstanding conjecture, which was open since 2006 \cite{holevo2006multiplicativity}.

Our result has led to the proof \cite{de2016gaussiannew} of the constrained minimum output entropy conjecture, which was open since 2008 and states that Gaussian states minimize the output entropy of any one-mode phase-covariant quantum Gaussian channel among all the input states with a given entropy.
The logarithm of the $p$-norm of a quantum state is proportional to its R\'enyi $p$-entropy \cite{holevo2013quantum}.
The R\'enyi entropies play a key role in quantum information processing with finite resources \cite{tomamichel2015quantum}.
Our result might then also find application in this scenario.

Our proof technique cannot be directly applied to the one-mode phase-covariant quantum Gaussian channels that are not quantum-limited.
Indeed, the application of the Karush--Kuhn--Tucker conditions heavily relies on the restriction to input states diagonal in the Fock basis and with finite rank and on the property that the quantum-limited attenuator preserves this class of states.
Indeed, it has been shown in the case of the minimization of the output entropy that for any other phase-covariant quantum Gaussian channel, quantum Gaussian states are not the only solution to the Karush--Kuhn--Tucker conditions \cite{qi2017minimum}.
The quantum-limited amplifier is an exception since its $p\to q$ norms are determined by the norms of the quantum-limited attenuator via a duality argument.
We have also been able to prove an upper bound to the $p\to q$ norms of any one-mode phase-covariant quantum Gaussian channel, and we conjecture that this upper bound is optimal.

The fundamental property stating that the output generated by the vacuum input state majorizes the output generated by any other input state holds for any multimode phase-covariant quantum Gaussian channel \cite{giovannetti2015majorization,holevo2015gaussian}.
We then conjecture that multimode Gaussian states achieve the $p\to q$ norms of multimode phase-covariant quantum Gaussian channels.
This result would imply the multiplicativity of such norms \cite{holevo2015gaussian}
\begin{equation}
\left\|\Phi\otimes\Phi'\right\|_{p\to q}=\left\|\Phi\right\|_{p\to q}\left\|\Phi'\right\|_{p\to q}
\end{equation}
for any two multimode phase-covariant quantum Gaussian channels $\Phi$ and $\Phi'$, proven so far only for $p=1$ \cite{holevo2015gaussian} and $p=q$ \cite{frank2017norms}.
While the multiplicativity of the norms of any two classical integral kernels can be proven via the Minkowski inequality, in the quantum case entanglement makes it a highly nontrivial property that does not necessarily hold (see \cite{holevo2006multiplicativity} and references therein for a review).

Our proof relies on the majorization results of \cite{de2015passive}, that fail in the multimode scenario (see \cite[section IV.A]{de2016passive}).
The multimode extension will be an open challenge for the future, that could exploit the recently discovered relation between quantum Gaussian semigroups and optimal mass transport \cite{carlen2014analog,carlen2016gradient}.
We conjecture that quantum Gaussian states are the only operators that achieve the $p\to q$ norms of quantum Gaussian channels, as it is for classical Gaussian integral kernels \cite{lieb1990Gaussian}.
The proof of this conjecture will be another open challenge for the future.

\subsection*{Acknowledgment}
GdP acknowledges financial support from the European Research Council (ERC Grant Agreements Nos. 337603 and 321029), the Danish Council for Independent Research (Sapere Aude), VILLUM FONDEN via the QMATH Centre of Excellence (Grant No. 10059), and the Marie Sk\l odowska-Curie Action GENIUS (Grant No. 792557).

\includegraphics[width=0.05\textwidth]{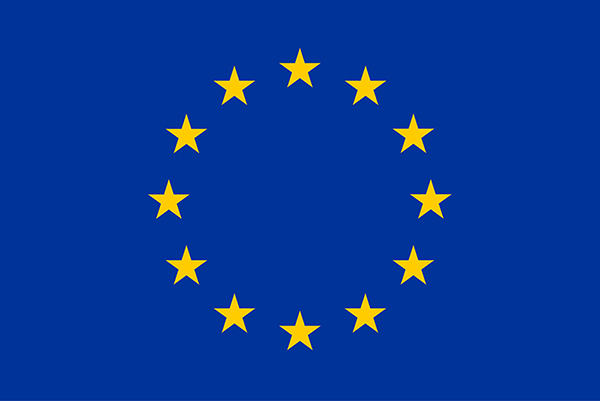}
This project has received funding from the European Union's Horizon 2020 research and innovation programme under the Marie Sk\l odowska-Curie grant agreement No. 792557.

\appendix
\section{Auxiliary Lemmas}
\begin{lem}\label{lem:mu}
The function
\begin{equation}\label{eq:mu3}
\mu(z,a):=\frac{a\,z^\frac{a-1}{a}-1+\left(1-a\right)z}{\ln z^\frac{1}{a}}
\end{equation}
is negative for any $0<a<1$ and any $0<z<1$.
\end{lem}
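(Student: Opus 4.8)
The plan is to dispose of the denominator first. For $0<a<1$ and $0<z<1$ we have $\ln z^{1/a}=\tfrac1a\ln z<0$, so the sign of $\mu(z,a)$ is the opposite of the sign of its numerator. Hence the claim $\mu(z,a)<0$ is equivalent to the strict inequality
\[
a\,z^{\frac{a-1}{a}}-1+(1-a)\,z>0\qquad(0<a<1,\ 0<z<1).
\]

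I would then prove this last inequality by the weighted arithmetic--geometric mean inequality with weights $a$ and $1-a$: for any $x,y>0$ one has $a\,x+(1-a)\,y\ge x^{a}y^{1-a}$. Applying it with $x=z^{\frac{a-1}{a}}$ and $y=z$ gives
\[
a\,z^{\frac{a-1}{a}}+(1-a)\,z\ \ge\ \left(z^{\frac{a-1}{a}}\right)^{a}z^{1-a}=z^{a-1}z^{1-a}=1,
\]
with equality if and only if $x=y$, i.e.\ $z^{\frac{a-1}{a}}=z$, equivalently $z^{-1/a}=1$, which holds only at $z=1$. Thus the inequality is strict for every $0<z<1$, proving the lemma.

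Alternatively, a monotonicity argument works just as well: writing $g(z):=a\,z^{\frac{a-1}{a}}+(1-a)\,z-1$, one checks $g(1)=0$ and
\[
g'(z)=(a-1)\,z^{\frac{a-1}{a}-1}+(1-a)=(1-a)\bigl(1-z^{-1/a}\bigr)<0\qquad(0<z<1),
\]
since $z^{-1/a}>1$ on $(0,1)$; hence $g$ is strictly decreasing there and $g(z)>g(1)=0$. Either route is entirely elementary, so I do not expect a genuine obstacle here; the only points requiring a little care are identifying the sign of the denominator $\ln z^{1/a}$ and correctly ruling out the boundary equality case $z=1$, both of which are immediate.
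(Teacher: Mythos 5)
Your proposal is correct and follows essentially the same route as the paper: both reduce the claim to showing the numerator $a\,z^{\frac{a-1}{a}}-1+(1-a)z$ is positive while the denominator $\ln z^{\frac{1}{a}}$ is negative on $0<z<1$. The only (immaterial) difference is how the numerator inequality is justified --- the paper uses convexity of $z\mapsto a\,z^{\frac{a-1}{a}}$ and its tangent line at $z=1$, while you use the equivalent weighted AM--GM inequality (or, alternatively, your monotonicity argument for $g$), so there is no gap.
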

\begin{proof}
Since $0<a<1$, the function $z\mapsto a z^\frac{a-1}{a}$ is convex and
\begin{equation}
a\,z^\frac{a-1}{a} > a + \left(a-1\right)\left(z-1\right) = 1 - \left(1-a\right)z\;.
\end{equation}
Then, the numerator in \eqref{eq:mu3} is positive.
Since $0<z<1$, the denominator in \eqref{eq:mu3} is negative, and the claim follows.
\end{proof}

\begin{lem}\label{lem:diff}
For any $0<a<1$, the function
\begin{equation}
(x,y)\mapsto x\left|y\right|^{\frac{1}{a}-1}
\end{equation}
is differentiable in $(0,0)$ with a vanishing differential.
\end{lem}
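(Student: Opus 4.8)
\emph{Proof proposal.} The plan is to verify the definition of differentiability at the origin directly and to check that the candidate differential is the zero linear map. Set $b := \frac{1}{a} - 1$; since $0 < a < 1$ we have $b > 0$, so in particular $|0|^b = 0$ and the function $f(x,y) := x\,|y|^b$ vanishes at $(0,0)$. Differentiability at $(0,0)$ with vanishing differential is therefore equivalent to the statement
\begin{equation*}
\lim_{(x,y)\to(0,0)}\frac{|x|\,|y|^b}{\sqrt{x^2+y^2}} = 0 .
\end{equation*}

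To establish this, I would bound the numerator by the two elementary estimates $|x| \le \sqrt{x^2+y^2}$ and $|y|^b \le (x^2+y^2)^{b/2}$, the latter using $b>0$. Multiplying these and dividing by $\sqrt{x^2+y^2}$ gives
\begin{equation*}
\frac{|x|\,|y|^b}{\sqrt{x^2+y^2}} \le \frac{\sqrt{x^2+y^2}\;(x^2+y^2)^{b/2}}{\sqrt{x^2+y^2}} = \bra{x^2+y^2}^{b/2},
\end{equation*}
and since $b > 0$ the right-hand side tends to $0$ as $(x,y)\to(0,0)$. This proves the limit above, hence the claim.

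There is essentially no obstacle here: the entire content is the inequality $b>0$, which is precisely the hypothesis $0<a<1$, and it is exactly what makes the factor $|y|^b$ (and thus the remainder term) vanish at the origin strictly faster than the Euclidean distance $\sqrt{x^2+y^2}$. The only care needed is to record that $f(0,0)=0$ before comparing with the zero differential.
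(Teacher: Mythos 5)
Your proof is correct and follows essentially the same route as the paper: after noting $f(0,0)=0$, you bound $|x|\le\sqrt{x^2+y^2}$ so that the remainder quotient is controlled by a power of $|y|$ (or of the distance) with positive exponent $\frac{1}{a}-1>0$, which vanishes at the origin. The paper's one-line argument does exactly this, simply keeping $|y|^{\frac{1}{a}-1}$ instead of further bounding it by $(x^2+y^2)^{(\frac{1}{a}-1)/2}$.
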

\begin{proof}
\begin{equation}
\limsup_{(x,y)\to(0,0)}\frac{\left|x\right|\left|y\right|^{\frac{1}{a}-1}}{\sqrt{x^2+y^2}} \le \limsup_{(x,y)\to(0,0)}\left|y\right|^{\frac{1}{a}-1} = 0\;.
\end{equation}
\end{proof}
\section{Karush--Kuhn--Tucker conditions}
\begin{thm}[Karush--Kuhn--Tucker conditions {\cite[Theorem 7.2.9]{borwein2013convex}}]\label{thm:KKT}
Let $f$, $\{\psi_n\}_{n=0}^N$, $\phi:\mathbb{R}^M\to\mathbb{R}$ be continuous functions {\bf(a)}.
Let $\bar{x}\in\mathbb{R}^M$ be a local maximizer for $f(x)$ on the domain $\Omega\subset \mathbb{R}^M$ defined by the constraints
\begin{equation}
\psi_n(x)\ge0\quad \forall\,n=0,\,\ldots,\,N,\qquad \phi(x)=0\;.
\end{equation}
Let $f$, $\{\psi_n\}_{n=0}^N$ and $\phi$ be differentiable in $\bar{x}$ {\bf(b)} with $\nabla \phi(\bar{x})\neq 0$ {\bf(c)}.
Let
\begin{equation}
I=\left\{n\in\{0,\,\ldots,\,N\}:\psi_n(\bar{x})=0\right\}\;,
\end{equation}
and let us suppose that the gradients $\nabla \phi(\bar{x})$ and $\{\nabla \psi_n(\bar{x})\}_{n\in I}$ are linearly independent {\bf(d)}.
Then, there exist $\lambda\in\mathbb{R}$ and $\{\alpha_n\}_{n\in I}$ with $\alpha_n\ge0$ for any $n\in I$ such that
\begin{equation}\label{eq:KKT}
\nabla f(\bar{x}) - \lambda\,\nabla \phi(\bar{x}) + \sum_{n\in I}\alpha_n\,\nabla \psi_n(\bar{x})=0\;.
\end{equation}
\end{thm}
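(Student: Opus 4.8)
The plan is to prove the Karush--Kuhn--Tucker conditions along the classical two-step route: first a Fritz John type necessary condition, which needs no constraint qualification, and then an upgrade to \eqref{eq:KKT} using hypotheses (c) and (d). Since $\psi_n(\bar{x})>0$ for $n\notin I$, continuity (hypothesis (a)) gives $\psi_n>0$ on a neighbourhood of $\bar{x}$, so these constraints are locally inactive; one may set $\alpha_n=0$ for $n\notin I$ and work only with $\phi$ and $\{\psi_n\}_{n\in I}$.

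For the Fritz John step I would argue by contradiction. Suppose no relation $\lambda_0\,\nabla f(\bar{x})-\lambda\,\nabla\phi(\bar{x})+\sum_{n\in I}\alpha_n\,\nabla\psi_n(\bar{x})=0$ holds with $\lambda_0\ge0$, $\alpha_n\ge0$ not all zero. A Farkas/Gordan-type alternative in $\mathbb{R}^M$ then produces a direction $d$ with $\langle\nabla f(\bar{x}),d\rangle>0$, $\langle\nabla\phi(\bar{x}),d\rangle=0$ and $\langle\nabla\psi_n(\bar{x}),d\rangle\ge0$ for all $n\in I$. By the linear independence in hypothesis (d), the linear map $d'\mapsto(\langle\nabla\phi(\bar{x}),d'\rangle,(\langle\nabla\psi_n(\bar{x}),d'\rangle)_{n\in I})$ is onto, so there is $d'$ with $\langle\nabla\phi(\bar{x}),d'\rangle=0$ and $\langle\nabla\psi_n(\bar{x}),d'\rangle=1$ for all $n\in I$; replacing $d$ by $d+\varepsilon d'$ with $\varepsilon>0$ small preserves $\langle\nabla f(\bar{x}),d\rangle>0$ and $\langle\nabla\phi(\bar{x}),d\rangle=0$ while making every inequality for $n\in I$ strict. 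Next I would build a continuous curve $t\mapsto x(t)$, $t\in[0,\delta)$, with $x(0)=\bar{x}$, $\phi(x(t))=0$ and $x(t)=\bar{x}+t\,d+o(t)$. Along it $\psi_n(x(t))=t\,\langle\nabla\psi_n(\bar{x}),d\rangle+o(t)>0$ for small $t>0$, so $x(t)\in\Omega$, while $f(x(t))=f(\bar{x})+t\,\langle\nabla f(\bar{x}),d\rangle+o(t)>f(\bar{x})$, contradicting the local maximality of $\bar{x}$. The upgrade is then immediate: if $\lambda_0=0$ the Fritz John identity becomes $\lambda\,\nabla\phi(\bar{x})=\sum_{n\in I}\alpha_n\,\nabla\psi_n(\bar{x})$ with some $\alpha_n\neq0$ (hypothesis (c), $\nabla\phi(\bar{x})\neq0$, rules out $\lambda$ being the only nonzero multiplier), contradicting the linear independence in (d); hence $\lambda_0>0$, and dividing through gives $\lambda\in\mathbb{R}$ and $\alpha_n\ge0$ as in \eqref{eq:KKT}.

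The delicate point is the construction of the feasible curve on $\{\phi=0\}$: since $f$, $\phi$ and the $\psi_n$ are only assumed differentiable at the single point $\bar{x}$, the implicit function theorem is not directly available. I would handle this topologically. Because $\nabla\phi(\bar{x})\neq0$, the affine map $x\mapsto\phi(\bar{x})+\langle\nabla\phi(\bar{x}),x-\bar{x}\rangle$ is onto $\mathbb{R}$ and $\phi$ differs from it by a term that is $o(|x-\bar{x}|)$; a Brouwer fixed point / topological degree argument on shrinking balls centred along the ray $\bar{x}+t\,d$ (with $d\in\ker\nabla\phi(\bar{x})$ after the perturbation) then produces, for each small $t$, a point $x(t)$ with $\phi(x(t))=0$ and $x(t)=\bar{x}+t\,d+o(t)$ --- exactly the Lyusternik-type statement the argument requires. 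An alternative that never leaves the setting of pointwise differentiability is an Ekeland-style penalization: minimise $-f(x)+\varepsilon^{-1}\big(\phi(x)^2+\sum_{n\in I}\max(0,-\psi_n(x))^2\big)^{1/2}+|x-\bar{x}|^2$ over a small closed ball, obtain interior minimisers $x_\varepsilon\to\bar{x}$, write their first-order conditions, normalise the resulting multiplier vectors, and pass to a convergent subsequence as $\varepsilon\to0$ to recover the Fritz John relation. Either way, the only genuine analytic content is this local surjectivity of the constraint map near $\bar{x}$, and everything else reduces to routine linear-algebra bookkeeping of the multipliers.
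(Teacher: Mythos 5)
The paper does not prove this statement at all: Theorem \ref{thm:KKT} is imported verbatim from the literature (Borwein--Lewis, Theorem 7.2.9) and is used as a black box in Lemma \ref{lem:reg}, so there is no internal proof to compare against. Your self-contained argument is essentially sound, and it correctly isolates the only point where real analysis (as opposed to linear algebra and separation of convex cones) is needed, namely the Lyusternik-type statement that for a direction $d$ with $\langle\nabla\phi(\bar{x}),d\rangle=0$ one can find feasible points $x(t)=\bar{x}+t\,d+o(t)$ with $\phi(x(t))=0$, even though $\phi$ is only differentiable at the single point $\bar{x}$. Two remarks. First, since the equality constraint here is scalar, the Brouwer/degree machinery is unnecessary: writing $\phi(x)=\langle\nabla\phi(\bar{x}),x-\bar{x}\rangle+r(x)$ with $|r(x)|\le\eta(|x-\bar{x}|)\,|x-\bar{x}|$ and $\eta(\delta)\to0$, one checks that $\phi$ changes sign at the endpoints of the segment $\bar{x}+t\,d\pm s(t)\,\nabla\phi(\bar{x})/|\nabla\phi(\bar{x})|$ with $s(t)\asymp t\,\eta(Ct)=o(t)$, and the intermediate value theorem (continuity is all that is needed away from $\bar{x}$) produces $x(t)$ on that segment; the differentiability of $f$ and of the active $\psi_n$ at $\bar{x}$ then gives $f(x(t))>f(\bar{x})$ and $\psi_n(x(t))>0$, the desired contradiction. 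Second, your detour through a Gordan alternative with weak inequalities followed by the $d+\varepsilon d'$ perturbation can be shortened: either invoke Motzkin's transposition theorem to get all inequalities strict at once, or note that hypothesis (d) makes the Fritz John normalization trivial exactly as you say. These are streamlinings, not gaps; the proof as proposed works, whereas the paper simply cites the textbook result.
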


\bibliography{biblio}

\begin{thebibliography}{10}

\bibitem{anderson1993information}
Arlen Anderson and Jonathan~J Halliwell.
\newblock Information-theoretic measure of uncertainty due to quantum and
  thermal fluctuations.
\newblock {\em Physical Review D}, 48(6):2753, 1993.

\bibitem{audenaert2009note}
Koenraad~MR Audenaert.
\newblock A note on the $p\to q$ norms of 2-positive maps.
\newblock {\em Linear Algebra and its Applications}, 430(4):1436--1440, 2009.

\bibitem{barnett2002methods}
S.~Barnett and P.M. Radmore.
\newblock {\em Methods in Theoretical Quantum Optics}.
\newblock Oxford Series in Optical and Imaging Sciences. Clarendon Press, 2002.

\bibitem{borwein2013convex}
J.M. Borwein and A.S. Lewis.
\newblock {\em Convex Analysis and Nonlinear Optimization: Theory and
  Examples}.
\newblock CMS Books in Mathematics. Springer New York, 2013.

\bibitem{braunstein2005quantum}
Samuel~L Braunstein and Peter Van~Loock.
\newblock Quantum information with continuous variables.
\newblock {\em Reviews of Modern Physics}, 77(2):513, 2005.

\bibitem{carbone2008hypercontractivity}
Raffaella Carbone and Emanuela Sasso.
\newblock Hypercontractivity for a quantum ornstein--uhlenbeck semigroup.
\newblock {\em Probability Theory and Related Fields}, 140(3-4):505--522, 2008.

\bibitem{carlen2014analog}
Eric~A Carlen and Jan Maas.
\newblock An analog of the 2-wasserstein metric in non-commutative probability
  under which the fermionic fokker--planck equation is gradient flow for the
  entropy.
\newblock {\em Communications in Mathematical Physics}, 331(3):887--926, 2014.

\bibitem{carlen2016gradient}
Eric~A Carlen and Jan Maas.
\newblock Gradient flow and entropy inequalities for quantum markov semigroups
  with detailed balance.
\newblock {\em Journal of Functional Analysis}, 273(5):1810--1869, 2017.

\bibitem{caves1994quantum}
Carlton~M Caves and Peter~D Drummond.
\newblock Quantum limits on bosonic communication rates.
\newblock {\em Reviews of Modern Physics}, 66(2):481, 1994.

\bibitem{cover2006elements}
T.M. Cover and J.A. Thomas.
\newblock {\em Elements of Information Theory}.
\newblock A Wiley-Interscience publication. Wiley, 2006.

\bibitem{de2017gaussian}
Giacomo De~Palma.
\newblock {\em Gaussian optimizers and other topics in quantum information}.
\newblock PhD thesis, Scuola Normale Superiore, Pisa (Italy), September 2016.
\newblock Supervisor: Prof. Vittorio Giovannetti; arXiv:1710.09395.

\bibitem{de2014generalization}
Giacomo De~Palma, Andrea Mari, and Vittorio Giovannetti.
\newblock A generalization of the entropy power inequality to bosonic quantum
  systems.
\newblock {\em Nature Photonics}, 8(12):958--964, 2014.

\bibitem{de2015multimode}
Giacomo De~Palma, Andrea Mari, Seth Lloyd, and Vittorio Giovannetti.
\newblock Multimode quantum entropy power inequality.
\newblock {\em Physical Review A}, 91(3):032320, 2015.

\bibitem{de2016passive}
Giacomo De~Palma, Andrea Mari, Seth Lloyd, and Vittorio Giovannetti.
\newblock Passive states as optimal inputs for single-jump lossy quantum
  channels.
\newblock {\em Physical Review A}, 93(6):062328, 2016.

\bibitem{de2015passive}
Giacomo De~Palma, Dario Trevisan, and Vittorio Giovannetti.
\newblock Passive states optimize the output of bosonic gaussian quantum
  channels.
\newblock {\em IEEE Transactions on Information Theory}, 62(5):2895--2906, May
  2016.

\bibitem{de2016gaussiannew}
Giacomo De~Palma, Dario Trevisan, and Vittorio Giovannetti.
\newblock Gaussian states minimize the output entropy of one-mode quantum
  gaussian channels.
\newblock {\em Phys. Rev. Lett.}, 118:160503, Apr 2017.

\bibitem{de2016gaussian}
Giacomo De~Palma, Dario Trevisan, and Vittorio Giovannetti.
\newblock Gaussian states minimize the output entropy of the one-mode quantum
  attenuator.
\newblock {\em IEEE Transactions on Information Theory}, 63(1):728--737, 2017.

\bibitem{de2018gaussian}
Giacomo De~Palma, Dario Trevisan, Vittorio Giovannetti, and Luigi Ambrosio.
\newblock Gaussian optimizers for entropic inequalities in quantum information.
\newblock {\em arXiv preprint arXiv:1803.02360}, 2018.

\bibitem{ferraro2005gaussian}
Alessandro Ferraro, Stefano Olivares, and Matteo~GA Paris.
\newblock Gaussian states in continuous variable quantum information.
\newblock {\em arXiv preprint quant-ph/0503237}, 2005.

\bibitem{frank2017norms}
Rupert~L. Frank and Elliott~H. Lieb.
\newblock Norms of quantum gaussian multi-mode channels.
\newblock {\em Journal of Mathematical Physics}, 58(6):062204, 2017.

\bibitem{garcia2012majorization}
Raul Garcia-Patron, Carlos Navarrete-Benlloch, Seth Lloyd, Jeffrey~H Shapiro,
  and Nicolas~J Cerf.
\newblock Majorization theory approach to the gaussian channel minimum entropy
  conjecture.
\newblock {\em Physical Review Letters}, 108(11):110505, 2012.

\bibitem{giovannetti2015majorization}
V~Giovannetti, Alexander~Semenovich Holevo, and A~Mari.
\newblock Majorization and additivity for multimode bosonic gaussian channels.
\newblock {\em Theoretical and Mathematical Physics}, 182(2):284--293, 2015.

\bibitem{giovannetti2015solution}
V~Giovannetti, AS~Holevo, and Ra{\'u}l Garc{\'\i}a-Patr{\'o}n.
\newblock A solution of gaussian optimizer conjecture for quantum channels.
\newblock {\em Communications in Mathematical Physics}, 334(3):1553--1571,
  2015.

\bibitem{giovannetti2010generalized}
Vittorio Giovannetti, Alexander~S Holevo, Seth Lloyd, and Lorenzo Maccone.
\newblock Generalized minimal output entropy conjecture for one-mode gaussian
  channels: definitions and some exact results.
\newblock {\em Journal of Physics A: Mathematical and Theoretical},
  43(41):415305, 2010.

\bibitem{gorecki1980passive}
J~Gorecki and W~Pusz.
\newblock Passive states for finite classical systems.
\newblock {\em Letters in Mathematical Physics}, 4(6):433--443, 1980.

\bibitem{gross1993logarithmic}
Leonard Gross.
\newblock Logarithmic sobolev inequalities and contractivity properties of
  semigroups.
\newblock In {\em Dirichlet forms}, pages 54--88. Springer, 1993.

\bibitem{guha2008entropy}
Saikat Guha, Baris Erkmen, and Jeffrey~H Shapiro.
\newblock The entropy photon-number inequality and its consequences.
\newblock In {\em Information Theory and Applications Workshop, 2008}, pages
  128--130. IEEE, 2008.

\bibitem{guha2007classicalproc}
Saikat Guha and Jeffrey~H Shapiro.
\newblock Classical information capacity of the bosonic broadcast channel.
\newblock In {\em Information Theory, 2007. ISIT 2007. IEEE International
  Symposium on}, pages 1896--1900. IEEE, 2007.

\bibitem{guha2007classical}
Saikat Guha, Jeffrey~H Shapiro, and Baris~I Erkmen.
\newblock Classical capacity of bosonic broadcast communication and a minimum
  output entropy conjecture.
\newblock {\em Physical Review A}, 76(3):032303, 2007.

\bibitem{harremoes2007thinning}
Peter Harremo{\"e}s, Oliver Johnson, and Ioannis Kontoyiannis.
\newblock Thinning and the law of small numbers.
\newblock In {\em Information Theory, 2007. ISIT 2007. IEEE International
  Symposium on}, pages 1491--1495. IEEE, 2007.

\bibitem{harremoes2010thinning}
Peter Harremo{\"e}s, Oliver Johnson, and Ioannis Kontoyiannis.
\newblock Thinning, entropy, and the law of thin numbers.
\newblock {\em Information Theory, IEEE Transactions on}, 56(9):4228--4244,
  2010.

\bibitem{hartman1982ordinary}
Philip Hartman.
\newblock Ordinary differential equations.
\newblock {\em Classics in Applied Mathematics}, 38, 1982.

\bibitem{holevo2006multiplicativity}
Alexander~S Holevo.
\newblock Multiplicativity of p-norms of completely positive maps and the
  additivity problem in quantum information theory.
\newblock {\em Russian Mathematical Surveys}, 61(2):301, 2006.

\bibitem{holevo2013quantum}
Alexander~Semenovich Holevo.
\newblock {\em Quantum Systems, Channels, Information: A Mathematical
  Introduction}.
\newblock De Gruyter Studies in Mathematical Physics. De Gruyter, 2013.

\bibitem{holevo2015gaussian}
Alexander~Semenovich Holevo.
\newblock Gaussian optimizers and the additivity problem in quantum information
  theory.
\newblock {\em Russian Mathematical Surveys}, 70(2):331, 2015.

\bibitem{ivan2011operator}
J~Solomon Ivan, Krishna~Kumar Sabapathy, and R~Simon.
\newblock Operator-sum representation for bosonic gaussian channels.
\newblock {\em Physical Review A}, 84(4):042311, 2011.

\bibitem{johnson2010monotonicity}
Oliver Johnson and Yaming Yu.
\newblock Monotonicity, thinning, and discrete versions of the entropy power
  inequality.
\newblock {\em Information Theory, IEEE Transactions on}, 56(11):5387--5395,
  2010.

\bibitem{konig2014entropy}
Robert K{\"o}nig and Graeme Smith.
\newblock The entropy power inequality for quantum systems.
\newblock {\em IEEE Transactions on Information Theory}, 60(3):1536--1548,
  2014.

\bibitem{kuhn1951}
H.~W. Kuhn and A.~W. Tucker.
\newblock Nonlinear programming.
\newblock In {\em Proceedings of the Second Berkeley Symposium on Mathematical
  Statistics and Probability}, pages 481--492, Berkeley, Calif., 1951.
  University of California Press.

\bibitem{lenard1978thermodynamical}
A~Lenard.
\newblock Thermodynamical proof of the gibbs formula for elementary quantum
  systems.
\newblock {\em Journal of Statistical Physics}, 19(6):575--586, 1978.

\bibitem{lieb1978proof}
Elliott~H Lieb.
\newblock Proof of an entropy conjecture of wehrl.
\newblock {\em Communications in Mathematical Physics}, 62(1):35--41, 1978.

\bibitem{lieb1990Gaussian}
Elliott~H Lieb.
\newblock Gaussian kernels have only gaussian maximizers.
\newblock {\em Inventiones mathematicae}, 102(1):179--208, 1990.

\bibitem{lieb2014proof}
Elliott~H Lieb and Jan~Philip Solovej.
\newblock Proof of an entropy conjecture for bloch coherent spin states and its
  generalizations.
\newblock {\em Acta Mathematica}, 212(2):379--398, 2014.

\bibitem{mari2014quantum}
Andrea Mari, Vittorio Giovannetti, and Alexander~S Holevo.
\newblock Quantum state majorization at the output of bosonic gaussian
  channels.
\newblock {\em Nature communications}, 5, 2014.

\bibitem{pusz1978passive}
W~Pusz and SL~Woronowicz.
\newblock Passive states and kms states for general quantum systems.
\newblock {\em Communications in Mathematical Physics}, 58(3):273--290, 1978.

\bibitem{qi2016capacities}
Haoyu Qi and Mark~M. Wilde.
\newblock Capacities of quantum amplifier channels.
\newblock {\em Physical Review A}, 95:012339, Jan 2017.

\bibitem{qi2017minimum}
Haoyu Qi, Mark~M Wilde, and Saikat Guha.
\newblock On the minimum output entropy of single-mode phase-insensitive
  gaussian channels.
\newblock {\em arXiv preprint arXiv:1607.05262}, 2017.

\bibitem{renyi1956characterization}
Alfr{\'e}d R{\'e}nyi.
\newblock A characterization of poisson processes.
\newblock {\em Magyar Tud. Akad. Mat. Kutat{\'o} Int. K{\"o}zl}, 1:519--527,
  1956.

\bibitem{schatten1960norm}
R.~Schatten.
\newblock {\em Norm Ideals of Completely Continuous Operators}.
\newblock Ergebnisse der Mathematik und ihrer Grenzgebiete. Springer, 1960.

\bibitem{serafini2017quantum}
A.~Serafini.
\newblock {\em Quantum Continuous Variables: A Primer of Theoretical Methods}.
\newblock CRC Press, 2017.

\bibitem{tomamichel2015quantum}
M.~Tomamichel.
\newblock {\em Quantum Information Processing with Finite Resources:
  Mathematical Foundations}.
\newblock SpringerBriefs in Mathematical Physics. Springer International
  Publishing, 2015.

\bibitem{weedbrook2012gaussian}
Christian Weedbrook, Stefano Pirandola, Raul Garcia-Patron, Nicolas~J Cerf,
  Timothy~C Ralph, Jeffrey~H Shapiro, and Seth Lloyd.
\newblock Gaussian quantum information.
\newblock {\em Reviews of Modern Physics}, 84(2):621, 2012.

\bibitem{wehrl1979relation}
Alfred Wehrl.
\newblock On the relation between classical and quantum-mechanical entropy.
\newblock {\em Reports on Mathematical Physics}, 16(3):353--358, 1979.

\bibitem{wilde2017quantum}
M.M. Wilde.
\newblock {\em Quantum Information Theory}.
\newblock Cambridge University Press, 2017.

\bibitem{yu2009monotonic}
Yaming Yu.
\newblock Monotonic convergence in an information-theoretic law of small
  numbers.
\newblock {\em Information Theory, IEEE Transactions on}, 55(12):5412--5422,
  2009.

\bibitem{yu2009concavity}
Yaming Yu and Oliver Johnson.
\newblock Concavity of entropy under thinning.
\newblock In {\em Information Theory, 2009. ISIT 2009. IEEE International
  Symposium on}, pages 144--148. IEEE, 2009.

\end{thebibliography}
\bibliographystyle{plain}
\end{document}